\documentclass[10pt,a4paper]{article}
\usepackage[utf8]{inputenc}
\usepackage[T1]{fontenc}

\newif\iflong
\newif\ifshort
\longtrue
\iflong
\else
\shorttrue
\fi

\newcommand{\bigoh}{\mathcal{O}}

\usepackage{fullpage}
\usepackage{complexity}
\usepackage{todonotes}
\usepackage{boxedminipage,hyperref}
\usepackage[most]{tcolorbox}
\usepackage{xspace}
\usepackage{wasysym}
\usepackage{graphicx}
\usepackage{subcaption}
\usepackage{amsmath,amssymb,amsthm}
\usepackage{cleveref}
\usepackage{authblk}

\usepackage{url,hyperref}
\hypersetup{
	hidelinks,
	colorlinks=true,
	citecolor=[rgb]{0.121 0.47 0.705},
	linkcolor=[rgb]{0.121 0.47 0.705},
	urlcolor=[rgb]{0.121 0.47 0.705}
}

\DeclareMathOperator{\interior}{int}

\makeatletter
\g@addto@macro\bfseries{\boldmath}
\makeatother

\newbox{\myorcidauthbox}
\sbox{\myorcidauthbox}{\large\includegraphics[height=1.7ex]{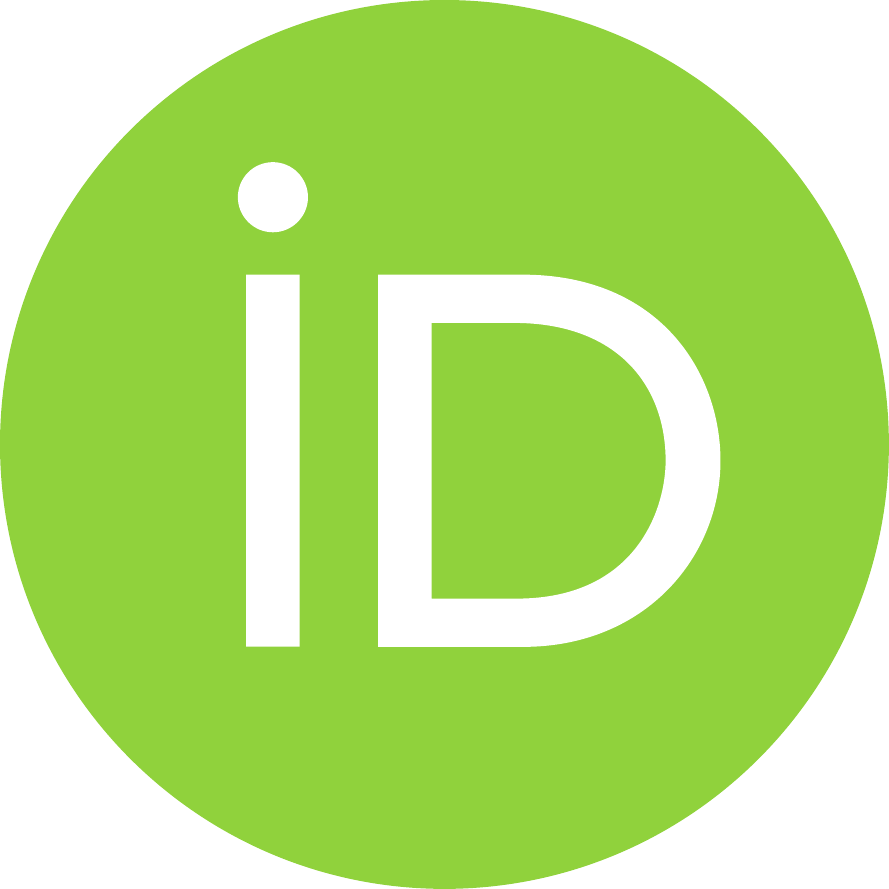}}
\newcommand{\orcid}[1]{%
	\href{https://orcid.org/#1}{\usebox{\myorcidauthbox}}}

\author[1]{Robert Ganian%
	\orcid{0000-0002-7762-8045}%
	\thanks{%
		Supported by the Austrian Science Fund (FWF) via project P31336 (\emph{New Frontiers for Parameterized Complexity}).}}
\author[1]{Thekla Hamm\protect\footnotemark[2]}
\author[2]{Fabian Klute%
	\orcid{0000-0002-7791-3604}%
	\thanks{%
		Supported by the Netherlands Organisation for Scientific Research (NWO) under project no. 612.001.651.}}
\author[3]{Irene Parada\orcid{0000-0003-3147-0083}}
\author[4]{Birgit Vogtenhuber%
	\orcid{0000-0002-7166-4467}%
	\thanks{%
		Partially supported by Austrian Science Fund (FWF) within the collaborative DACH project \emph{Arrangements and Drawings} as FWF project \mbox{I 3340-N35}.
}}

\affil[1]{Algorithms and Complexity Group, TU Wien, Austria}
\affil[2]{Utrecht University, The Netherlands}
\affil[3]{TU Eindhoven, The Netherlands}
\affil[4]{Graz University of Technology, Austria}
\affil[ ]{
	
	\texttt{
	[rganian|thamm]@ac.tuwien.ac.at\\
	f.m.klute@uu.nl\\
	i.m.de.parada.munoz@tue.nl\\
	bvogt@ist.tugraz.at}
}

\date{}

%
\title{Crossing-Optimal Extension of Simple Drawings\footnote{
		This work was started during the 
		Austrian Computational Geometry Reunion Meeting, August 10 to 14, in Strobl, Austria. We thank all the participants for the nice working atmosphere as well as fruitful discussions on this as well as other topics. 
		The authors would also like to thank Eduard Eiben for his insightful comments.
}}
%
%
%
%

\newtheorem{observation}{Observation}

\newtheorem{newcounterclaim}{Claim}
\newtheorem{theorem}{Theorem}
\newtheorem{lemma}[theorem]{Lemma}
\newtheorem{proposition}[theorem]{Proposition}
\newtheorem{definition}[theorem]{Definition}
\newtheorem{corollary}[theorem]{Corollary}


\newcommand{\patchwork}{\ensuremath{P}\xspace}

\newcommand{\drawing}{\ensuremath{\mathcal G}\xspace}
\newcommand{\planarization}[1]{\ensuremath{#1^\times}\xspace}
\newcommand{\dual}[1]{\ensuremath{#1^*}\xspace}
\newcommand{\pgraph}{\planarization{\drawing}}
\newcommand{\pdrawing}{\planarization{\drawing}}
\newcommand{\dgraph}{\dual{G}}
\newcommand{\edrawing}{\ensuremath{\mathcal G^+}\xspace}
\newcommand{\egraph}{\ensuremath{G^+}\xspace}

\newcommand{\newedges}{\ensuremath{F}\xspace}

\newcommand{\SDE}{\(\mathfrak{Z}\)\xspace}

\newcommand{\SCEI}{\textsc{SCEI}\xspace}
\newcommand{\SLCEI}{\textsc{SLCEI}\xspace}
\newcommand{\SConeEI}{\textsc{SC1EI}\xspace}

\newcommand{\hole}{\ensuremath{H}\xspace}

\newcommand{\xrep}[1]{\ensuremath{\subseteq^{#1}_{\text{rep}}}\xspace}
\newcommand{\qrep}{\xrep{q}}
\newclass{\ETH}{ETH}

\usepackage{apptools}

\newcounter{section-preserve}
\newcounter{theorem-preserve}
\newcounter{figure-preserve}
\newcommand{\blank}[1]{}
\newcommand{\short}[1]{\ifbool{long}{}{#1}}
\newcommand{\longonly}[1]{\ifbool{long}{#1}{}}
\newtoks\magicAppendix
\magicAppendix={}
\newtoks\magictoks
\newif\iflater
\laterfalse
\ifbool{long}{%
	\newcommand{\later}[1]{#1}%
}%
{%
\long\def\later#1{\magictoks={#1}%
	\edef\magictodo{\noexpand\magicAppendix={\the\magicAppendix%
			\the\magictoks%
	}}%
	\magictodo}%
}%
\long\def\latertitle#1{\magictoks={#1}%
	\edef\magictodo{\noexpand\magicAppendix={\the\magicAppendix \par
			\the\magictoks%
	}}%
	\magictodo}

\long\def\both#1{\magictoks={#1}%
	\edef\magictodo{\noexpand\magicAppendix={\the\magicAppendix%
			\noexpand\setcounter{theorem-preserve}{\noexpand\arabic{theorem}}%
			\noexpand\setcounter{theorem}{\arabic{theorem}}%
			\noexpand\setcounter{section-preserve}{\noexpand\arabic{section}}%
			\noexpand\setcounter{section}{\arabic{section}}%
			\noexpand\setcounter{figure-preserve}{\noexpand\arabic{figure}}%
			\noexpand\setcounter{figure}{\arabic{figure}}%
			\noexpand\let\noexpand\oldsection=\noexpand\thesection%
			\noexpand\def\noexpand\thesection{\thesection}%
			\noexpand\let\noexpand\oldlabel=\noexpand\label%
			\noexpand\let\noexpand\label=\noexpand\blank%
			\the\magictoks%
			\noexpand\setcounter{theorem}{\noexpand\arabic{theorem-preserve}}%
			\noexpand\setcounter{section}{\noexpand\arabic{section-preserve}}%
			\noexpand\setcounter{figure}{\noexpand\arabic{figure-preserve}}%
			\noexpand\let\noexpand\thesection=\noexpand\oldsection%
			\noexpand\let\noexpand\label=\noexpand\oldlabel%
	}}%
	\magictodo%
	\the\magictoks}%
\def\magicappendix{\latertrue \the\magicAppendix}

\makeatletter
\newcommand{\appendixandlong}[1]{%
	\ifbool{long}{%
		#1%
	}{%
		\IfAppendix{%
			#1%
		}%
		{}%
	}%
}%
\makeatother

\makeatletter
\newcommand{\appendixandlongorshort}[2]{%
	\ifbool{long}{%
		#1%
	}{%
		\IfAppendix{%
			#1%
		}%
		{%
			#2%
		}%
	}%
}%
\makeatother

\makeatletter
\patchcmd{\thmhead}{\thmnote{ {\the\thm@notefont(#3)}}}%
{%
	\ifbool{long}{%
		\ifstrequal{#3}{$\star$}{%
		}{%
			\thmnote{ {\the\thm@notefont(#3)}}%
		}%
	}{%
		\IfAppendix{%
			\ifstrequal{#3}{$\star$}{%
			}{%
				\thmnote{ {\the\thm@notefont(#3)}}%
			}%
		}%
		{%
			\thmnote{ {\the\thm@notefont(#3)}}%
		}%
	}%
}
{}{} 
\makeatother

\begin{document}
	\maketitle
	
	\begin{abstract}
In extension problems of partial graph drawings one is given an incomplete drawing of an input graph $G$ and is asked to complete the drawing while maintaining certain properties. A prominent area where such problems arise is that of crossing minimization. For plane drawings and various relaxations of these, there is a number of tractability as well as lower-bound results exploring the computational complexity of crossing-sensitive drawing extension problems. In contrast, comparatively few results are known on extension problems for the fundamental and broad class of simple drawings, that is, drawings in which each pair of edges intersects in at most one point. 
\ifshort		
In fact, the extension problem of simple drawings has only recently been shown to be \NP-hard even for inserting a single edge.
\fi
\iflong
In fact, only recently it has been shown that the extension problem of simple drawings is \NP-hard even when the task is to insert a single edge.
\fi

In this paper we present tractability results for the crossing-sensitive extension problem of simple drawings. In particular, we show that the problem of inserting edges into a simple drawing is fixed-parameter tractable when parameterized by the number of edges to insert and an upper bound on newly created crossings. Using the same proof techniques, we are also able to answer several closely related variants of this problem, among others the extension problem for $k$-plane drawings. Moreover, using a different approach, we provide a single-exponential fixed-parameter algorithm for the case in which we are only trying to insert a single edge into the drawing.
%
	\end{abstract}
	
	\section{Introduction}
	\label{sec:intro}
	The study of the crossing number of graphs, that is, the minimum number of edge crossings necessary to draw a given graph, is a major research direction in the field of computational geometry~\cite{DBLP:reference/crc/BuchheimCGJM13,pachhandbook17,schaefer2018crossing}. 
	More recently, there have been a number of works focusing on minimizing or restricting edge crossings when the task is not to draw a graph from scratch, but rather to extend a partial drawing that is provided on the input.
	Prominently, Chimani et al.~\cite{ChimaniGMW09} showed that extending a plane drawing with a star in a way that minimizes the number of crossings of the resulting drawing is polynomial-time tractable. Later, Angelini et al.~\cite{AngeliniBFJKPR15} obtained a polynomial-time algorithm for extending plane drawings so that the crossing number remains $0$ (i.e., the resulting drawing is plane). 

	
While the two results mentioned above give rise to polynomial-time variants of 
crossing-minimization extension problems, a number of important cases are known to be \NP-hard; 
a prototypical example is the \textsc{Rigid Multiple Edge Insertion (RMEI)} problem, which asks for a crossing-minimal insertion of $k$ edges into a plane drawing of an $n$-vertex graph~\cite{ChimaniH16,Ziegler01}. 
To deal with this, in recent years the focus has broadened to also consider a weaker notion of tractability, namely, \emph{fixed-parameter tractability} (\FPT)~\cite{CyganFKLMPPS15,DowneyF13}. 
Chimani and Hlin\v en\' y~\cite{ChimaniH16} have shown that \textsc{RMEI} 
is \FPT, i.e., there is an algorithm which solves that problem in time $f(k)\cdot n^{\bigoh(1)}$. 
Other works have considered various relaxations of crossing minimization; 
for instance, recently Eiben et al.~\cite{EibenGHKN20} established the fixed-parameter tractability of extending drawings by $k$ edges in a way which does not minimize the total number of crossings, but rather bounds the number of crossings per edge to at most $1$.

\short{%
For many problems in the intersection of crossing minimization and graph extension, an important goal is that the desired extension should maintain certain properties of the given partial representation. 
In the problems studied in \cite{AngeliniBFJKPR15} and \cite{EibenGHKN20}, the input is a plane or 1-plane\footnote{A drawing of a graph is $\ell$-plane if every edge is involved in at most $\ell$ crossings.} drawing, respectively, and 
the desired extension must maintain the property of being (1-)plane.
There have been a plethora of results exploring such extension problems, especially on plane drawings, for a range of other, often more restrictive properties~\cite{angeliniExtendingPartialOrthogonal2020,PartialConstrainedLevel_Brueckner_2017,cegl-dgpwpofpa-12,cfglms-dpespg-15,lozzo_compgeo_2020,ExtendingConvexPartial_Mchedlidze_2015,ext_straight_06}.}

\longonly{%
It is worth noting that for many questions in the intersection of minimizing the number of crossings and the study of graph extension problems, an important goal is that the desired extension should maintain certain properties of the given partial representation. 
For instance, in the problems studied by Angelini et al.~\cite{AngeliniBFJKPR15} and Eiben et al.~\cite{EibenGHKN20}, the input is a plane or 1-plane\footnote{A drawing of a graph is $\ell$-plane if every edge is involved in at most $\ell$ crossings.} drawing, respectively, and 
the desired extension must maintain the property of being (1-)plane.
In a similar manner, recent years have seen a wide range of results exploring the algorithmic complexity of such extension problems maintaining more restricted properties.
For plane drawings, classical restrictions of the drawing such as it being straight-line~\cite{ext_straight_06},
level-planar~\cite{PartialConstrainedLevel_Brueckner_2017},
upward~\cite{lozzo_compgeo_2020}, or
orthogonal~\cite{angeliniExtendingPartialOrthogonal2020}
have been explored.
Other results for planar graphs consider the number of required bends~\cite{cfglms-dpespg-15} or 
assume that the partially drawn subgraph is a cycle~\cite{cegl-dgpwpofpa-12,ExtendingConvexPartial_Mchedlidze_2015}.
For $1$-plane drawings, the same authors as in~\cite{EibenGHKN20} very recently extended their results~\cite{eiben_mfcs_2020}.}

\short{%
Beyond planarity, the perhaps most prominent class of drawings with respect to crossing minimization are \emph{simple drawings}
(also called  \emph{good drawings}~\cite{AMRS18,EG_1973}, \emph{simple topological graphs}~\cite{kyncl2009}, or simply \emph{drawings}~\cite{h-etdcg-98}).
A drawing is simple if every pair of edges intersect in at most one point that is either a common endpoint or a proper crossing.
Simplicity is an extremely natural restriction that is taken as a basic assumption in a range of settings, e.g.,~\cite{DBLP:journals/jctb/AngeliniBBLBDHL20,extending_pseudolinear,DBLP:journals/jocg/CardinalF18,DBLP:journals/dcg/Kyncl20},
and that constitutes a necessary requirement for crossing-minimal drawings~\cite{schaefer2018crossing}.}

\longonly{%
Leaving planar and just beyond planar graphs,
the most prominent class of drawings with respect to crossing minimization are \emph{simple drawings}
(also called  \emph{good drawings}~\cite{AMRS18,EG_1973}, \emph{simple topological graphs}~\cite{kyncl2009}, or simply \emph{drawings}~\cite{h-etdcg-98}).
A drawing is simple if every pair of edges intersect in at most one point that is either a common endpoint or a proper crossing.
Simplicity is an extremely natural restriction that is 
taken 
as a basic assumption in a range of settings, e.g.,~\cite{DBLP:journals/jctb/AngeliniBBLBDHL20,extending_pseudolinear,DBLP:journals/jocg/CardinalF18,DBLP:journals/dcg/Kyncl20}, and
it is known that simplicity is a necessary requirement for crossing-minimal drawings~\cite{schaefer2018crossing}.
}

\smallskip \noindent \textbf{Contribution.} \quad 
In this work we study the extension problem for simple drawings in the context of crossing minimization.
 In other words, our aim is to extend a given simple drawing with $k$ new edges while maintaining simplicity and restricting newly created crossings. Naturally, the most obvious way of restricting such crossings is by bounding their number, leading us to our first problem of interest:\footnote{\emph{Decision} versions of problems are provided purely for complexity-theoretic reasons; every algorithm provided in this article is constructive and can also output a solution as a witness.}

\begin{center}
\begin{boxedminipage}{0.98 \columnwidth}
\textsc{Simple Crossing-Minimal Edge Insertion} (\SCEI)\\[2pt]
\begin{tabular}{l p{0.80 \columnwidth}}
Input: & A graph $G=(V,E)$ along with a connected simple drawing $\drawing$, an integer $\ell$, and a set $F$ of $k$ edges of the complement of $G$.\\
Question: & Can $\drawing$ be extended to a simple drawing $\drawing'$ of the graph $G'=(V,E\cup F)$ such that the number of crossings in $\drawing'$ that involve an edge of $F$ is at most $\ell$?
\end{tabular}
\end{boxedminipage}
\end{center}
Note that we require the initial drawing \drawing to be connected.
While this is a natural assumption that is well-justified in many situations, it would certainly also make sense to consider the more general setting in which this is not the case. A short discussion of how the connectivity of \drawing\ is used in our proof is provided in Section~\ref{sec:patchworkdef}.


\SCEI\ was recently shown to be \NP-complete already when $|F|=1$ and $\ell\geq|E|$ (meaning that the aim is merely to obtain a simple drawing)~\cite{ArroyoKPSVW20}.
\short{On the other hand, dropping the simplicity requirement of the resulting drawing, the problem reduces to \textsc{RMEI} which is \FPT.}%
\longonly{%
On the other hand, if one drops the requirement that the resulting drawing must be simple, the problem immediately reduces to the \textsc{Rigid Multiple Edge Insertion} problem that was shown to be \FPT\ (parameterized by the number of inserted edges) by Chimani and Hlin\v en\' y~\cite{ChimaniH16}---indeed, one can simply planarize the initial drawing and then apply the provided algorithm.
}

\short{
The main contribution of this article is an \FPT\ algorithm for \SCEI\ parameterized by $k+\ell$. The result is obtained via a combination of techniques recently introduced 
in~\cite{EibenGHKN20} and completely new machinery. 
A high-level overview of challenges posed by the problem and our strategies for overcoming them is provided in the next part of this introduction. Before that, let us mention other natural crossing-sensitive restrictions of simple drawing extension.
}
\longonly{
The main contribution of this article is a fixed-parameter algorithm for \SCEI\ parameterized by $k+\ell$. The result is obtained via a combination of the techniques recently introduced by Eiben et al.~\cite{EibenGHKN20} and completely new machinery. 
A high-level overview of the challenges posed by the problem and our strategy for overcoming these challenges is provided in the next part of this introduction. However, before proceeding there, let us mention other natural crossing-sensitive restrictions of simple drawing extension.
}

Instead of restricting the \emph{total number} of newly created crossings, one may aim to extend~$\drawing$ in a way which bounds the number of crossings involving each of the newly added edges---akin to the restrictions imposed by $\ell$-planarity. We call this problem \textsc{Simple Locally Crossing-Minimal Edge Insertion} (\textsc{SLCEI}), where the role of $\ell$ is that it bounds the maximum number of crossings involving any one particular edge of $F$. Alternatively, one may simply require that \emph{every} edge in the resulting drawing is involved in at most $\ell$ crossings, i.e., that the whole $\drawing'$ is $\ell$-plane. This results in the \textsc{Simple $\ell$-Plane Edge Insertion} (\textsc{S$\ell$-PEI}) problem. Both of these problems are known to be \NP-hard when either $\ell=1$ or $k=1$, meaning that we can drop neither of our parameters if we wish to achieve tractability.

One key strength
of the framework we develop for solving \SCEI\ is its universality. Notably, we obtain the fixed-parameter tractability of \textsc{SLCEI} as an immediate corollary of the proof of our main theorem, while the fixed-parameter tractability of \textsc{S$\ell$-PEI} follows by a minor adjustment of the final part of our proof. 
Moreover, it is trivial to use the framework to solve the considered problems when one drops the requirement that the final drawing is simple---allowing us to, e.g., generalize the previously established fixed-parameter tractability of \textsc{1-Planar Edge Insertion}~\cite{EibenGHKN20} to \textsc{$\ell$-Planar Edge Insertion} ($\ell$-\textsc{PEI}).

Finally, we note that a core ingredient in our approach is the use of Courcelle's theorem~\cite{Courcelle90}, and hence the algorithms underlying our tractability results will have an impractical dependency on $k$. However, for the special case of $|F|=1$ (i.e., when inserting a single edge), we use so-called \emph{representative sets} to provide a single-exponential fixed-parameter algorithm which is tight under the exponential time hypothesis~\cite{ImpagliazzoPZ01}.
	
\smallskip \noindent \textbf{Proof Overview.} \quad
\short{%
On a high level, our approach 
follows the general strategy co-developed by a subset of the authors in~\cite{EibenGHKN20} 
for solving the problem of inserting $k$ edges into a drawing while maintaining $1$-planarity. This general strategy can be summarized as follows:}
\longonly{%
On a high level, our approach for establishing the fixed-parameter tractability of the considered problems follows the general strategy co-developed by a subset of the authors to solve \textsc{1-Planar Edge Drawing Extension} (\textsc{1PEDE}), i.e., the problem of inserting $k$ edges into a drawing while maintaining $1$-planarity. This general strategy can be summarized as follows:
}

%
	\begin{enumerate}
	\item We preprocess $G$ and a planarization of $\drawing$ to remove parts of \drawing\ which are too far away to interact with our solution. This drawing is then translated into a graph representation of bounded \emph{treewidth}~\cite{RobertsonS84}.
	\item We identify a combinatorial characterization that captures how the solution curves will be embedded into $\drawing$. Crucially, the characterization has size bounded by our parameters.
	\item We perform brute-force branching over all characterizations to pre-determine the behavior of a solution in $\drawing$, and for each such characterization we employ \emph{Courcelle's theorem}~\cite{Courcelle90} to determine whether there exists a solution with such a characterization.
	\end{enumerate}
	
The specific implementation of this strategy differs substantially from the previous work~\cite{EibenGHKN20}---for instance, the combinatorial characterization of solutions in Step 2 and the use of Courcelle's theorem in Step 3 are both different. But the by far greatest challenge in implementing this strategy occurs in Step 1. Notably, removing the parts of $\drawing$ required to obtain a bounded-treewidth graph representation creates \emph{holes} in the drawing, and these could disconnect edges intersecting these holes. The graph representation can then lose track of ``which edge parts belong to each other'', which means we can no longer use it to determine whether the extended drawing is simple. We remark that specifically for \textsc{S$\ell$-PEI} and \textsc{$\ell$-PEI}, it would be possible to directly adapt Step 1 to ensure that no edge is disconnected in this manner, thus circumventing this difficulty.


To handle this problem, we employ an in-depth geometric analysis combined with a careful use of the sunflower lemma and subroutines which invoke Courcelle's theorem to construct a representation which (a) still has bounded treewidth, and (b) contains partial information about which edge parts belong to the same edge in $\drawing$. 
\short{%
A detailed overview of how this is achieved is presented at the beginning of Section~\ref{sec:patchwork}.
}
\longonly{%
A detailed overview of how this is achieved is presented at the beginning of Section~\ref{sec:patchwork}---but for a high-level and simplified intuition, let us imagine that we wish to extend $\drawing$ with an $s$-$t$ curve $\pi$ while maintaining simplicity and achieving only few crossings, and we are worried about long-distance edges in $\drawing$ which may be intersected by $\pi$ but also contain parts that are very ``far'' from $s$ and $t$. We essentially show that there is an efficiently computable and small set of special long-distance edges in $\drawing$ that is sufficient to find one solution, and all other long-distance edges may be disregarded (even though there may exist some other solutions using these).}

\smallskip

\noindent \textbf{Related Work.} \quad
\longonly{%
There have been two distinct lines of work that recently considered simple drawings in the context of drawing extension problems, albeit with different goals. 
First, Hajnal et al.~\cite{hajnal2015saturated} and Kyn{\v{c}}l et al.~\cite{DBLP:journals/comgeo/KynclPRT15} studied \emph{saturated} simple drawings,  i.e., simple drawings in which no edge can be inserted without violating simplicity.
Second, a number of authors have studied 
the computational complexity of deciding whether a given simple drawing can be extended with a given set of edges while maintaining simplicity~\cite{adp-esd-19,ArroyoKPSVW20}. 
}
\short{%
There have been two distinct lines of work that recently considered simple drawings in the context of drawing extension problems. The first studied a closely related notion of \emph{saturated} simple drawings~\cite{hajnal2015saturated,DBLP:journals/comgeo/KynclPRT15}, while the second 
studied the computational complexity of 
the extension problem for simple drawings~\cite{adp-esd-19,ArroyoKPSVW20}.
}

\short{
	
\smallskip
\noindent \emph{Statements where proofs or more details are provided in the appendix are marked with $\star$.}
}

	\section{Preliminaries}
	\label{sec:prelims}
	\latertitle{\section{Extended Version of Section~\ref{sec:prelims}}}
	\both{
	We use standard terminology for undirected and simple graphs~\cite{Diestel}. 
	The \emph{length} of a walk and path is the number of edges it visits.
	For $r \in \mathbb{N}$, we write $[r]$ as shorthand for the set $\{1,\ldots,r\}$.}
	\short{%
	 We assume a basic understanding of \emph{parameterized complexity}~\cite{CyganFKLMPPS15,DowneyF13,FlumGrohe06}, the \emph{sunflower lemma}~\cite{Erdos60,FlumGrohe06}, and \emph{Courcelle's theorem along with MSO logic}~\cite{ArnborgLS91,Courcelle90}. ($\star$)}
 
 	\both{
 		
 		\smallskip
 	
 	}
	
	\both{A \emph{simple drawing} of a graph $G$ }%
	\later{(also known as \emph{good drawing} or as \emph{simple topological graph} in the literature) }%
	\both{is a drawing \drawing of $G$ in the plane such that every pair of edges shares at most one point that
	is either a proper crossing or a common endpoint. 
	In particular, no tangencies between edges are allowed, edges must not contain any vertices in their relative interior, and
	no three edges intersect in the same point.
	Given a simple drawing \drawing of a graph $G$ and a set of edges \newedges 
	of the complement of $ G $ 
	we say that the edges in \newedges can be \emph{inserted} into \drawing if 
	there exists a simple drawing \edrawing of $\egraph=(V(G),E(G)\cup F)$ 
	that contains \drawing as a subdrawing. }%
	\later{A drawing \drawing of a graph in which each edge is crossed at most $k$ times 
	is called \emph{$k$-plane}.}%
	\both{The \emph{planarization} of a simple drawing \drawing of $G$ is the
	plane graph \pgraph obtained from \drawing by subdividing the edges of \(G\) at the crossing points of \drawing. We call each part of the subdivision of \(e \in E(G)\) in \(\pdrawing\) an \emph{edge segment} (of \(e\)).

	}
	
	\later{%
	\smallskip
	\noindent \textbf{Sunflower Lemma.}\quad 
	One tool we use to obtain our results is the classical sunflower lemma of Erd\H{o}s and Rado. A \emph{sunflower} in a set family $\mathcal{F}$ is a subset $\mathcal{F}' \subseteq \mathcal{F}$ such that all pairs of elements in $\mathcal{F}'$ have the same intersection.
%

\begin{lemma}[\cite{Erdos60,FlumGrohe06}]\label{lem:SF}
  Let $\mathcal{F}$ be a family of subsets of a universe $U$, each of cardinality exactly
  $b$, and let $a \in \mathbb{N}$. If $|\mathcal{F}|\geq b!(a-1)^{b}$, then $\mathcal{F}$
  contains a sunflower $\mathcal{F}'$ of cardinality at least $a$. Moreover,
  $\mathcal{F}'$ can be computed in time polynomial in $|\mathcal{F}|$.
\end{lemma}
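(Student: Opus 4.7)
The plan is to proceed by induction on the set size $b$. For the base case $b=1$ every element of $\mathcal{F}$ is a singleton, and since distinct singletons are pairwise disjoint they already form a sunflower with empty core; the cardinality bound guarantees enough distinct singletons to extract $a$ of them. For the inductive step, I would start by greedily computing a maximal subfamily $\mathcal{F}_0 \subseteq \mathcal{F}$ of pairwise disjoint sets. If $|\mathcal{F}_0| \geq a$, then $\mathcal{F}_0$ itself is a sunflower (with empty core) of the required size and we are done.

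Otherwise $|\mathcal{F}_0| \leq a-1$, so the union $S = \bigcup_{F \in \mathcal{F}_0} F$ has at most $(a-1)b$ elements. By maximality of $\mathcal{F}_0$, every set in $\mathcal{F}$ intersects $S$, hence by pigeonhole some fixed element $x \in S$ lies in at least $|\mathcal{F}|/((a-1)b)$ sets. Consider the subfamily $\mathcal{F}_x = \{F \setminus \{x\} : x \in F \in \mathcal{F}\}$; all its members have size exactly $b-1$, and its cardinality is at least
\[
\frac{b!(a-1)^b}{(a-1)b} \;=\; (b-1)!\,(a-1)^{b-1}.
\]
This matches the inductive hypothesis for parameter $b-1$, so $\mathcal{F}_x$ contains a sunflower of size $a$. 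Re-adding $x$ to each of its members yields a sunflower of size $a$ in $\mathcal{F}$, since $x$ now belongs to every set in the sunflower and the pairwise intersections grow by exactly $\{x\}$.

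For the algorithmic claim, I would simply implement the above recursion: compute a maximal disjoint subfamily greedily in polynomial time, either return it or identify the heavy element $x$ and recurse on $\mathcal{F}_x$. The recursion depth is at most $b$, and each level performs only polynomial work in $|\mathcal{F}|$, yielding overall polynomial running time. The main potential obstacle is keeping the constants honest so that the cardinality bound propagates cleanly through the recursion; the calculation above shows that the factor $(a-1)b$ lost to pigeonhole is exactly absorbed by the ratio $b!(a-1)^b / ((b-1)!(a-1)^{b-1})$, so the induction goes through without slack.
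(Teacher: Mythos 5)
The paper does not prove this lemma---it cites it as the classical Erd\H{o}s--Rado sunflower lemma via \cite{Erdos60,FlumGrohe06}---so there is no in-paper proof to compare against. Your argument is precisely the standard proof found in those sources: take a maximal disjoint subfamily, either it is already a large sunflower with empty core, or a pigeonhole on the bounded-size union yields a heavy element on which to recurse, and the ratio $b!(a-1)^b / \bigl((a-1)b\bigr) = (b-1)!(a-1)^{b-1}$ makes the induction close cleanly. The algorithmic claim is handled exactly as you describe, by unrolling the recursion.

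One small caveat, which is really an imprecision in the paper's transcription of the lemma rather than a flaw in your reasoning: the statement as quoted uses $|\mathcal{F}| \geq b!(a-1)^b$, but both the classical result and your induction require strict inequality. Your base case $b=1$ asserts that ``the cardinality bound guarantees enough distinct singletons to extract $a$ of them,'' yet with $\geq$ a family of exactly $a-1$ singletons satisfies the hypothesis and contains no sunflower of size $a$. The inductive step propagates either convention consistently; the slack is only needed at the base. With $>$ in place of $\geq$ (as in Flum--Grohe), your proof is complete.
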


	\noindent \textbf{Parameterized Complexity.}  \quad	
	In parameterized complexity~\cite{FlumGrohe06,DowneyF13,CyganFKLMPPS15},
the complexity of a problem is studied not only with respect to the input size, but also with respect to some problem parameter(s). The core idea behind parameterized complexity is that the combinatorial explosion resulting from the \NP-hardness of a problem can sometimes be confined to certain structural parameters that are small in practical settings. We now proceed to the formal definitions.

A {\it parameterized problem} $Q$ is a subset of $\Omega^* \times
\mathbb{N}$, where $\Omega$ is a fixed alphabet. Each instance of $Q$ is a pair $(I, \kappa)$, where $\kappa \in \mathbb{N}$ is called the {\it
parameter}. A parameterized problem $Q$ is
{\it fixed-parameter tractable} (\FPT)~\cite{FlumGrohe06,DowneyF13,CyganFKLMPPS15}, if there is an
algorithm, called an {\em \FPT-algorithm},  that decides whether an input $(I, \kappa)$
is a member of $Q$ in time $f(\kappa) \cdot |I|^{\bigoh(1)}$, where $f$ is a computable function and $|I|$ is the input instance size.  The class \FPT{} denotes the class of all fixed-parameter tractable parameterized problems.
A parameterized problem $Q$
is {\it \FPT-reducible} to a parameterized problem $Q'$ if there is
an algorithm, called an \emph{\FPT-reduction}, that transforms each instance $(I, \kappa)$ of $Q$
into an instance $(I', \kappa')$ of
$Q'$ in time $f(\kappa)\cdot |I|^{\bigoh(1)}$, such that $\kappa' \leq g(\kappa)$ and $(I, \kappa) \in Q$ if and
only if $(I', \kappa') \in Q'$, where $f$ and $g$ are computable
functions.

	\smallskip
	\noindent \textbf{Monadic Second Order Logic.}  \quad
	We consider \emph{Monadic Second Order} (MSO) logic on (edge-)labeled
	directed graphs in
	terms of their incidence structure, whose universe contains vertices and
	edges; the incidence between vertices and edges is represented by a
	binary relation. We assume an infinite supply of \emph{individual
		variables} $x,x_1,x_2,\dots$ and of \emph{set variables}
	$X,X_1,X_2,\dots$. The \emph{atomic formulas} are 
	$V x$ (``$x$ is a vertex''), $E y$ (``$y$ is an edge''), $I xy$ (``vertex $x$
	is incident with edge $y$''), $x=y$ (equality),
	$P_a x$ (``vertex or edge $x$ has label $a$''), and $X x$ (``vertex or
	edge $x$ is an element of set $X$'').  \emph{MSO formulas} are built up
	from atomic formulas using the usual Boolean connectives
	$(\lnot,\land,\lor,\rightarrow,\leftrightarrow)$, quantification over
	individual variables ($\forall x$, $\exists x$), and quantification over
	set variables ($\forall X$, $\exists X$).

	\emph{Free and bound variables} of a formula are defined in the usual way. 
	To indicate that the set of free individual variables of formula $\Phi$
	is $\{x_1, \dots, x_\ell\}$ and the set of free set variables of formula $\Phi$
	is $\{X_1, \dots, X_q\}$ we write $\Phi(x_1,\ldots, x_\ell, X_1,
	\dots, X_q)$. If $G$ is a graph, $v_1,\ldots, v_\ell\in V(G)\cup E(G)$ and $S_1, \dots, S_q
	\subseteq V(G)\cup E(G)$ we write $G \models \Phi(v_1,\ldots, v_\ell, S_1, \dots, S_q)$ to denote that
	$\Phi$ holds in $G$ if the variables $x_i$ are interpreted by the vertices or edges $v_i$, for $i\in [\ell]$, and the variables $X_i$ are interpreted by the sets
	$S_i$, for $i \in [q]$. 
	
	
	The following result (the well-known Courcelle's theorem~\cite{Courcelle90}) 
	shows that if $G$ has bounded treewidth~\cite{RobertsonS84} then we
	can find an assignment $\varphi$ to the set of free variables $\mathcal{F}$ with $G \models \Phi(\varphi(\mathcal{F}))$ (if one exists) in linear time. 
	
	\begin{theorem}[Courcelle's theorem~\cite{Courcelle90,ArnborgLS91}]\label{fact:MSO} 
		Let $\Phi(x_1,\dots,x_\ell, X_1,\dots, X_q)$ be a fixed MSO formula with free individual variables $x_1,\dots,x_\ell$ and free set variables $X_1,\dots,X_\ell$, and let $w$ a
		constant. Then there is a linear-time algorithm that, given a labeled
		directed graph $G$ of treewidth at most $w$, 
		either outputs  $v_1,\ldots, v_\ell\in V(G)\cup E(G)$ and $S_1, \dots, S_q	\subseteq V(G)\cup E(G)$ such that $G \models \Phi(v_1,\ldots, v_\ell, S_1, \dots, S_q)$ or correctly identifies that no such vertices $v_1,\ldots, v_\ell$ and sets $S_1, \dots, S_q$ exist.
	\end{theorem}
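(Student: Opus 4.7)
The plan is to follow the classical tree-automaton route. First, I would invoke Bodlaender's linear-time algorithm to produce a width-$w$ tree decomposition of $G$ and refine it into a nice tree decomposition with introduce, forget, and join nodes. Viewing this decomposition as a node-labeled ranked tree---each node labeled by the at most $w+1$ vertices and edges in its bag, their mutual incidences, and their original labels---the task reduces to running a finite-state deterministic bottom-up tree automaton on this tree. The classical theorem of Doner and Thatcher--Wright, adapted by Courcelle to the graph incidence setting, supplies, for every MSO sentence $\Psi$, an effectively constructible such automaton $A_\Psi$ whose states are the equivalence classes of $(w+1)$-boundaried graphs under ``satisfies the same MSO sentences of quantifier rank at most $\mathrm{qr}(\Psi)$''. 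Since $\Phi$ and $w$ are fixed, $A_\Psi$ has constant size and a single bottom-up pass runs in linear time.

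Next, I would reduce the free-variable version to a closed-sentence problem over an enriched structure. For each free individual variable $x_i$ introduce a fresh unary predicate $U_i$ and replace every atomic occurrence of $x_i$ by ``the unique element labeled by $U_i$''; for each free set variable $X_j$ introduce a fresh unary predicate $W_j$ and replace $X_j\,y$ by $W_j\,y$. The existence of a satisfying assignment then becomes satisfaction of the closed MSO sentence $\exists U_1\ldots\exists U_\ell\,\exists W_1\ldots\exists W_q\,(\Phi' \wedge \bigwedge_{i=1}^{\ell} \mathrm{Sing}(U_i))$, where $\mathrm{Sing}(U)$ expresses ``$U$ is a singleton''. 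This is decided in linear time by the previous step, and the quantifier rank increases only by an additive constant depending on $\Phi$, so the blow-up in the automaton is still absorbed by the constant $f(|\Phi|,w)$.

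To output explicit witnesses rather than just a yes/no answer, I would run a two-phase dynamic program over the nice tree decomposition. The bottom-up phase computes, at each bag, the set of reachable states of $A_{\Psi}$ together with, for each such state, one witnessing choice of the bits that encode the new unary predicates $U_i, W_j$ inside that bag. The top-down phase then selects at the root an accepting state and propagates downwards a globally consistent combination, stitching local choices into the required $v_1,\ldots,v_\ell$ and $S_1,\ldots,S_q$. The main obstacle---and the reason this theorem is almost always cited rather than reproved---is the effective construction of $A_\Psi$: its state space grows non-elementarily in $|\Phi|$ and $w$, and correctness relies on a careful induction on $\Phi$ showing that the underlying equivalence has finite index, together with a compositionality lemma ensuring correct behaviour at join nodes. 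Since $\Phi$ and $w$ are treated as constants here, this blow-up is entirely absorbed into the hidden constant of the linear-time bound.
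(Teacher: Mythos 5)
The paper does not prove this statement at all: it is cited verbatim as a known black-box result (Courcelle~\cite{Courcelle90}, with the extension to witness output and free variables following Arnborg, Lagergren, and Seese~\cite{ArnborgLS91}), so there is no ``paper's own proof'' to compare against. Your sketch is the standard textbook route to that result and is, at the level of detail given, correct: Bodlaender's algorithm for a width-$w$ decomposition, normalization to a nice tree decomposition, the Doner/Thatcher--Wright-style finite tree automaton built from MSO-types of boundaried graphs of bounded quantifier rank, the reduction of free individual and set variables to fresh unary predicates (with the singleton constraint on the individual ones), and finally a bottom-up/top-down pass to extract an explicit witness. Two small remarks for precision. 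First, the step ``replace $x_i$ by the unique element labeled $U_i$'' is informal; strictly speaking one rewrites each atom in $x_i$ as an existential over a variable ranging inside $U_i$ and conjoins $\mathrm{Sing}(U_i)$, which is exactly what you add afterward, so this is fine but worth phrasing as a rewriting rather than a substitution. Second, the paper works with the \emph{incidence} structure of a labeled directed graph (vertices and edges both in the universe, with an incidence relation), i.e.\ MSO$_2$; your construction goes through unchanged because the incidence graph of a graph of treewidth $w$ has treewidth at most $w+1$, but the reduction to the relational structure fed to the tree automaton should be stated over that incidence structure rather than over $V(G)$ alone. These are presentational points, not gaps.
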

	
	We remark that since an understanding of the definition of \emph{treewidth} is not required for our presentation, we merely refer to the literature for a discussion of the notion~\cite{RobertsonS84,CyganFKLMPPS15,DowneyF13}. We denote the treewidth of a graph $G$ as $\operatorname{tw}(G)$.
	}

\both{%
\smallskip

\noindent \textbf{Problem Definition and Terminology.} \quad
	We formulate the following generalization of \SLCEI in which we allow the numbers of crossings allowed for each newly added edge to differ.
	Note that this formulation also fixes a parameterization.
	\begin{center}
		\begin{boxedminipage}{0.98 \columnwidth}
			\textsc{Simple Crossing-Restricted Edge Insertion (\SDE)} \hfill Parameter: $k + \max_{i \in [k]}\ell_i$\\[2pt]
			\begin{tabular}{l p{0.80 \columnwidth}}
				Input: & A graph $G=(V,E)$ along with a connected simple drawing $\drawing$, a set $\newedges = \{e_1, \dotsc, e_k\}$ of $k$ edges of the complement of $G$, and \(\ell_1, \dotsc, \ell_{k} \in \mathbb{N}\).\\
				Question: & Can $\drawing$ be extended to a simple drawing \(\drawing'\) of the graph $G'=(V,E\cup F)$ such that the drawing of each edge $e_i\in F$ has at most $\ell_i$ crossings in \(\drawing'\)?
			\end{tabular}
		\end{boxedminipage}
	\end{center}
	For an instance of \SDE we refer to elements in \(F\) as \emph{added edges} and denote the endpoints of \(e_i\) as \(s_i\) and \(t_i\) (where \(s_1,t_1 \dotsc, s_k,t_k\) are not necessarily distinct).
	For brevity we denote \(\ell = \max_{i \in [k]}\ell_i\).
	Although \SDE\ is stated as a decision problem, we will want to speak about hypothetical \emph{solutions} of \SDE, which will naturally correspond to the drawings of added edges in \(\drawing'\) (if one exists) as the rest of \(\drawing'\) is predetermined by \drawing.
	This means that a solution is a set of drawings of added edges in \(\drawing'\) where \(\drawing'\) witnesses the fact that the given instance is a \texttt{yes}-instance.
	If no such \(\drawing'\) exists, then we say that the \SDE-instance \emph{has no solution}.
	
	The reason we focus our presentation on \SDE\ is that 
	the fixed-parameter tractability of \SDE\ immediately implies the fixed-parameter tractability of both \SLCEI parameterized by the number of added edges and crossings per added edge, and \SCEI parameterized by the number of added edges and crossings of all added edges (the former problem is just a subcase of \SDE, while the latter admits a trivial \FPT-reduction to \SDE\ which simply requires that we branch to decide on an upper-bound for how many crossings each edge in $F$ contributes). Hence obtaining a fixed-parameter algorithm for \SDE\ provides a unified reason for the fixed-parameter tractability of both \SCEI\ and \SLCEI. Furthermore, we will later show that the result for \SDE\ can be straightforwardly adapted to solve the other problems mentioned in the introduction.

}
	
	\section{Stitches}
	\label{sec:patchwork}
	\latertitle{\section{Extended Version of Section~\ref{sec:patchwork}}}
	\short{%
	Let \(\left(G,\drawing,\newedges,(\ell_i)_{i \in [|\newedges|]}\right)\) be an instance of \SDE. Recalling the Proof Overview provided in Section~\ref{sec:intro}, we want to identify parts of \drawing that may be considered `unimportant' because they can never be intersected by the drawing of any of the edges \(s_it_i \in \newedges\) with at most \(\ell_i\) crossings.
	Formally, consider the dual \dgraph of the planarization \pgraph of \drawing, and for each vertex \(v \in V(G)\) let \(U_v \subseteq \dgraph\) be the set of vertices that correspond to cells \(\mathfrak{c}\) of \drawing such that $v$ lies on the boundary of $\mathfrak{c}$.
	We say a cell \(\mathfrak{c}\) of \drawing is \emph{\(s_it_i\)-far} if it corresponds to a vertex \(v_{\mathfrak{c}} \in V(\dgraph)\) 
	at distance more than \(\ell_i\) from \(U_{s_i}\) or \(U_{t_i}\),
	and \(\mathfrak{c}\) is \emph{far} if it is \(s_it_i\)-far for all \(i \in [k]\).
	Observe that in any solution of \SDE for \(\left(G,\drawing,\newedges,(\ell_i)_{i \in [|\newedges|]}\right)\) no drawing of an edge in \newedges can intersect far cells of \drawing.
	We refer to maximal unions of far cells in \drawing which form subsets of \(\mathbb{R}^2\) whose interior is connected as \emph{holes}.
	The interiors of holes are a natural choice for information that is not immediately relevant for the insertion of drawings for \newedges, in the sense that no intersections with these drawings can occur in far cells. However, as mentioned in the Proof Overview, omitting the interior of holes destroys information that identifies different parts of the same edge which are themselves not inside holes but are disconnected by a hole.
	
	To transfer this information between different parts of one edge---parts which could be crossed by a hypothetical solution but which are disconnected by holes---we introduce \emph{stitches} into the respective holes.
	More formally, for a hole \(H\) in \drawing we call an edge \(e\) in \(E(G)\) \emph{\(H\)-torn} if 
	\(e\) is split into at least two curves by the removal of the interior of \(H\) from \drawing.
	We call maximal subcurves of an \(H\)-torn edge after removing \(H\) \emph{(edge) parts} of \(e\)
	and refer to the endpoints of these subcurves as \emph{endpoints} of the corresponding edge part.
	Stitches will correspond to paths between the endpoints of edge parts of $H$-torn edges.
	To construct these paths we introduce
	so-called \emph{threads} which are edges that we insert into a hole \(H\) to connect parts of \(H\)-torn edges and
	derive the stitches from them by considering their planarization.
	
	To ensure that the obtained combinatorialization of $\drawing$ has bounded treewidth, the main goal of this section will be to 
	bound the number of stitches for each edge $s_it_i \in \newedges$ and hole $H$ by some function of $k + \ell_i$.
	We do this by considering which and how many edge parts of $H$-torn edges any simple $s_it_i$-curve in a hypothetical solution can cross.
	Here we face an apparent difficulty:
	it \emph{is} possible that there is an unbounded number of edge parts which
	are crossed by drawings of an added edge \(s_it_i\) in hypothetical solutions and
	each edge part
	belongs to a different \(H\)-torn edge (see Fig.~\ref{fig:unnecessary}).
	However, such situations can be safely avoided by restricting our attention to `reasonable' solutions, as we will see in Subsection~\ref{sec:detours}.
	In particular, to specify `reasonable' solutions we turn our attention to the behavior of drawings of added edges 
	in a hypothetical solution when they \emph{revisit} a cell of \drawing.
		Then, showing a bound on the number of stitches we introduce for an added edge \(s_it_i\) and hole \(H\) 
	is equivalent to showing that we can identify all but a bounded number of edges in \(E(G)\) 
	which are \(H\)-torn and cannot be crossed by a drawing of \(s_it_i\) in a `reasonable' hypothetical solution.
	This is what we focus on in Subsection~\ref{sec:stitches}.}%

	\later{%
	Let \(\left(G,\drawing,\newedges,(\ell_i)_{i \in [|\newedges|]}\right)\) be an instance of \SDE.
	We begin by identifying parts of \drawing that may be considered `unimportant' because they can never be intersected by the drawing of any of the edges \(s_it_i \in \newedges\) with at most \(\ell_i\) crossings.
	This is the case for every cell of \drawing which is separated by at least \(\ell_i\) cell boundaries from every cell containing \(s_i\) or every cell containing \(t_i\).
	More formally, consider the dual \dgraph of the planarization \pgraph of \drawing, and for each vertex \(v \in V(G)\) let \(U_v \subseteq \dgraph\) be the set of vertices that correspond to cells \(\mathfrak{c}\) of \drawing such that $v$ lies on the boundary of $\mathfrak{c}$.
	We say a cell \(\mathfrak{c}\) of \drawing is \emph{\(s_it_i\)-far} if it corresponds to a vertex \(v_{\mathfrak{c}} \in V(\dgraph)\) 
	at distance more than \(\ell_i\) from \(U_{s_i}\) or \(U_{t_i}\),
	and \(\mathfrak{c}\) is \emph{far} if it is \(s_it_i\)-far for all \(i \in [k]\).
	Observe that in any solution of \SDE for \(\left(G,\drawing,\newedges,(\ell_i)_{i \in [|\newedges|]}\right)\) no drawing of an edge in \newedges can intersect far cells of \drawing.
	We refer to maximal unions of far cells in \drawing which form subsets of \(\mathbb{R}^2\) whose interior is connected as \emph{holes}. 
	
	Similarly as in \cite{EibenGHKN20} we want to derive a planar graph of diameter bounded in \(k\) and \(\ell\) which serves as combinatorialization of \drawing on which we can invoke Courcelle's theorem; the diameter bound is what allows us to argue that the graph has bounded treewidth, a prerequisite for using Courcelle's theorem.
	To achieve bounded diameter it is necessary to omit certain parts of the original drawing from the combinatorialization.
	The interiors of holes are a natural choice for information that is not immediately relevant for the insertion of drawings for \newedges, in the sense that no intersections with these drawings can occur in far cells.
	While it is true that `omitting' the interior of holes from the combinatorialization of \drawing would allow us to correctly restrict the number of crossings of each added edge, unfortunately it means that we lose information that identifies different parts of the same edge which are themselves not inside holes but are disconnected by a hole; in particular, after omitting the holes we are no longer able to prevent double-crossings.
	
	
	To transfer this information between different parts of one edge---parts which could be crossed by a hypothetical solution could cross but which are disconnected by holes---we introduce \emph{stitches} into the respective holes.
	More formally, for a hole \(H\) in \drawing we call an edge \(e\) in \(E(G)\) \emph{\(H\)-torn} if 
	\(e\) is split into at least two curves by the removal of the interior of \(H\) from \drawing.
	We call maximal subcurves of an \(H\)-torn edge after removing \(H\) \emph{(edge) parts} of \(e\)
	and refer to the endpoints of these subcurves as \emph{endpoints} of the corresponding edge part.
	Note, that an endpoint of an edge part of an $H$-torn edge $e$ 
	is either a vertex incident to $e$ or a crossing point between $e$ and another edge in \drawing.
	Furthermore, each part of an \(H\)-torn edge has an endpoint on the boundary of \(H\).
	The aforementioned stitches will correspond to paths between the endpoints of edge parts of $H$-torn edges.
	To construct these paths we introduce
	so-called \emph{threads} which are edges that we insert into a hole \(H\) to connect parts of \(H\)-torn edges and
	derive the stitches from them by considering their planarization.
	In the end, stitches will ensure that we can relate two such different edge parts of the same \(H\)-torn edge in an MSO-encoding, which in turn allows us to prevent multiple crossings of the same edge in a solution.
	
	Of course, just connecting all the edge parts of all $H$-torn edges in this manner would not lead to a
	bounded treewidth graph.
	Hence, we will attempt to only introduce threads for parts of $H$-torn edges which may be crossed by  hypothetical solutions, with the aim of ensuring that the diameter (and hence the treewidth) of the combinatorization remains bounded by a function of our parameters.
	
	
	To achieve this, the main goal of this section will be to 
	bound the number of stitches for each edge $s_it_i \in \newedges$ and hole $H$ by some function of $k + \ell_i$.
	We do this by considering which and how many edge parts of $H$-torn edges any simple $s_it_i$-curve in a hypothetical solution can cross.
	Here we face an apparent difficulty:
	it \emph{is} possible that there is an unbounded number of edge parts which
	are crossed by drawings of an added edge \(s_it_i\) in hypothetical solutions and
	each edge part
	belongs to a different \(H\)-torn edge (see Fig.~\ref{fig:unnecessary}).
	However, such situations can be safely avoided by restricting our attention to `reasonable' solutions, as we will see in Subsection~\ref{sec:detours}.
	In particular, to specify `reasonable' solutions we turn our attention to the behavior of drawings of added edges 
	in a hypothetical solution when they \emph{revisit} a cell of \drawing.
		Then, showing a bound on the number of stitches we introduce for an added edge \(s_it_i\) and hole \(H\) 
	is equivalent to showing that we can identify all but a bounded number of edges in \(E(G)\) 
	which are \(H\)-torn and cannot be crossed by a drawing of \(s_it_i\) in a `reasonable' hypothetical solution.
	This is what we focus on in Subsection~\ref{sec:stitches}.}%
	
	\both{%
	\begin{figure}
		\begin{minipage}[t]{.47\textwidth}
			\centering
			\includegraphics{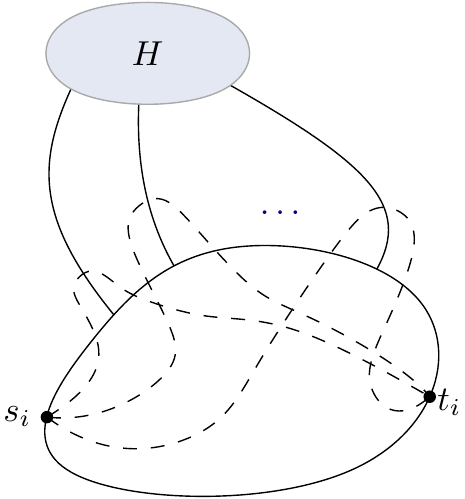}
			\subcaption{
				Assuming \(\ell_i = 3\),
				then the potential drawings of \(s_it_i\), depicted as dashed curves,
				can cross an arbitrary number of \(H\)-torn edges.
				\label{fig:unnecessary}}
		\end{minipage}
		\hfill
		\begin{minipage}[t]{.47\textwidth}
			\centering
			\includegraphics[page=2]{figures/unnecessary.pdf}
			\subcaption{
				Assuming \(\ell_i = \ell_j = 3\),
				then the drawing of $s_jt_j$ has to go through $\mathfrak c$ and
				the drawing of \(s_it_i\) has to revisit cell \(\mathfrak{c}\).
				\label{fig:necessary}}
		\end{minipage}
		\caption{Removable and unremovable detours. 
		}
		\label{fig:undetours}
	\end{figure}}


	\both{After adding stitches, we are finally able to define an appropriate combinatorialization of \drawing in Section~\ref{sec:patchworkdef} which we can use for the final application of Courcelle's theorem in Section~\ref{sec:MSO}.}
	
	\both{%
	\subsection{Detours and Reasonable Solutions}
	\label{sec:detours}}
	\short{%
Fix an added edge \(s_it_i\), a hole \(H\), and a cell \(\mathfrak{c}\) of the original drawing of \(G\). Note that a drawing of \(s_it_i\) in a hypothetical solution might revisit the cell \(\mathfrak{c}\) to avoid crossing the drawing of a different added edge \(s_jt_j\).
	Fig.~\ref{fig:necessary} exemplifies such a situation. Understanding how and why a solution might need to revisit a cell is a major component of our argument used to bound the number of stitches per hole. In fact, as we will see in this section, avoiding such crossings is the only reason why a cell might have to be revisited.}%
	\later{%
	Fix an added edge \(s_it_i\), a hole \(H\), and a cell \(\mathfrak{c}\) of the original drawing of \(G\).
	A drawing of \(s_it_i\) in a hypothetical solution might revisit the cell \(\mathfrak{c}\).
	At first glance this may seem counter-intuitive since then the drawing of \(s_it_i\) could be shortcut within \(\mathfrak{c}\) without increasing the number of crossings of \(s_it_i\).
	However, on second thought it becomes evident that the drawing of \(s_it_i\) might have to revisit \(\mathfrak{c}\) to avoid crossing the drawing of a different added edge \(s_jt_j\).
	Fig.~\ref{fig:necessary} exemplifies such a situation.
	Understanding how and why a solution might need to revisit a cell is a major component of our argument used to bound the number of stitches per hole. 
	In fact, as we will see in this section, avoiding such crossings is the only reason why a cell might have to be revisited.}%
	
	\both{%
	Let $\gamma$ be a drawing of \(s_it_i\) in a hypothetical solution which revisits \(\mathfrak{c}\). 
	A \emph{\(\mathfrak{c}\)-detour} (of \(\gamma\)) is a
	maximal subcurve of \(\gamma\) whose interior is disjoint from $\interior(\mathfrak{c})$ and has neither 
	\(s_i\) nor \(t_i\) as an endpoint. 
	Note that a \(\mathfrak{c}\)-detour might also consist of a singular point. 
	This case occurs when \(\gamma\) crosses an edge segment on the boundary of \(\mathfrak{c}\) which does not lie on the boundary of another cell.
	See Fig.~\ref{fig:detours} for an illustration.}%
	

\both{\begin{figure}
	\centering
	\begin{minipage}[t]{.49\textwidth}
		\includegraphics{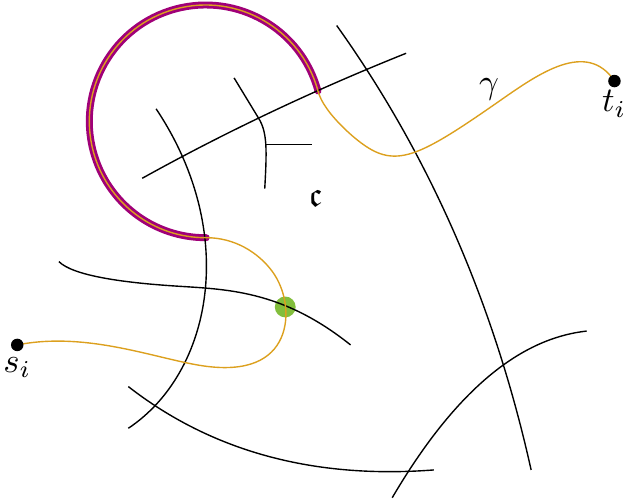}
		\subcaption{%
			Drawing \(\gamma\) of \(s_it_i\) in a hypothetical solution with two \(\mathfrak{c}\)-detours:
			\appendixandlongorshort{one is a curve (highlighted in purple) and the other is a  point (highlighted in green).}%
			{a curve (purple) and a point (green).}%
		}%
		\label{fig:detours}	
	\end{minipage}
	\hfill
	\begin{minipage}[t]{.45\textwidth}	
		\centering		
		\includegraphics[page=2]{figures/detours.pdf}
		\subcaption{
			\appendixandlongorshort{For the single point detour (green), 
				the avoided part of the boundary of \(\mathfrak{c}\) and the plane coincide and are dashed green.
				For the curve detour (purple), 
				the avoided part of the boundary of \(\mathfrak{c}\) is dashed purple 
				and the avoided region is shaded purple.}%
			{Two detours (green and purple) and their avoided parts of the boundary of \(\mathfrak{c}\) (dashed).}%
			\label{fig:avoided}}
	\end{minipage}
   \caption{Detours and the parts defined by them.}
\end{figure}}

%

	\both{\begin{definition}\label{def:detour}
		Let $\delta$ be a \(\mathfrak{c}\)-detour,
		and let the embedding $\mathcal{E}$ consist only of \(\delta\) and the restriction of \drawing to the boundary of \(\mathfrak{c}\).
		Then \(\delta\) partitions the boundary of \(\mathfrak{c}\) into two connected parts: 
			the part incident to the unbounded (i.e.\ outer) cell in \(\mathcal{E}\), and 
			the \emph{\(\delta\)-avoided part} which is not incident to the outer cell in \(\mathcal{E}\).
		
		Additionally, we call the subset of \(\mathbb{R}^2\) which is enclosed by \(\delta\) and the \(\delta\)-avoided part of the boundary of \(\mathfrak{c}\) together with the \(\delta\)-avoided part of the boundary of \(\mathfrak{c}\) itself the  \emph{\(\delta\)-avoided region}.
		See Fig.~\ref{fig:avoided} for an illustration.
		A \(\mathfrak{c}\)-detour \(\delta\) is \emph{unremovable} 
		if there exists an added edge \(s_jt_j\) with \(j \neq i\) such that exactly one of \(s_j\) and \(t_j\) 
		lies in the \(\delta\)-avoided region of \drawing. 
		In that case we say that the endpoint (\(s_j\) or \(t_j\)) in the \(\delta\)-avoided region is \emph{avoided} by $\delta$, or that \(\delta\) is \emph{around} the endpoint. 
		We call a \(\mathfrak{c}\)-detour \emph{removable} if it is not unremovable.
		\end{definition}}

\later{
		Note that we can generalize the definitions of \(\zeta\)-avoided parts of cell boundaries and regions 
		for an arbitrary simple curve \(\zeta\) that starts and ends on the boundary of the same cell of \drawing and whose interior does not intersect that cell.

	The above definition of necessity of detours is justified by the following result.
	Intuitively it states that whenever a solution uses a \(\mathfrak{c}\)-detour for the drawing of an added edge this has to be because it is necessary in the sense that it avoids connecting immediately though \(\mathfrak{c}\) because of another added edge.
	Otherwise it can be short-cut through \(\mathfrak{c}\) (possibly after short-cutting some other removable \(\mathfrak{c}\)-detours).}
	
\both{%
	\begin{lemma}[$\star$]
		\label{lem:unnecessary}
		If there is a solution, then there exists a solution in which no drawing of any added edge contains a removable \(\mathfrak{c}'\)-detour for any cell \(\mathfrak{c}'\) of \drawing.
	\end{lemma}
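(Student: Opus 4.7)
The plan is to prove the lemma by an exchange argument applied to an extremal solution. I would take a solution $S^{*}$ that minimizes the total count $\sum_i |\gamma_i \cap E(\drawing)|$ of crossings between drawings of added edges and edges of $\drawing$; such an $S^{*}$ exists because this measure is a nonnegative integer bounded by $\sum_i \ell_i$. The goal is to show that $S^{*}$ cannot contain any removable detour, which proves the lemma. So suppose for contradiction that some drawing $\gamma_i$ in $S^{*}$ has a removable $\mathfrak{c}$-detour $\delta$ from $p$ to $q$ on $\partial\mathfrak{c}$, and let $A$ denote the $\delta$-avoided part of $\partial\mathfrak{c}$, so that $C := \delta \cup A$ is a simple closed Jordan curve bounding the $\delta$-avoided region $R_\delta$.

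I would then construct a modified solution $S'$ by replacing $\delta$ with a curve $\delta' \subset \mathfrak{c}$ from $p$ to $q$ that hugs $A$ tightly from the inside of $\mathfrak{c}$. The minimized measure strictly decreases because $\delta' \subset \mathfrak{c}$ contributes no crossings with $E(\drawing)$, whereas $\delta$ contributes at least one such crossing (its endpoints $p,q$ are crossings of $\gamma_i$ with edges of $\drawing$, and a point-detour contributes a single such crossing analogously). The key topological observation for the validity of $S'$ is that, by removability, both endpoints $s_j,t_j$ of every other added edge $\gamma_j$ lie on the same side of $C$; hence the Jordan curve theorem forces $|\gamma_j \cap C|$ to be even. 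Combined with the simplicity of $S^{*}$, which gives $|\gamma_j \cap \delta|\le 1$, this yields tight parity constraints linking $|\gamma_j \cap \delta|$ and $|\gamma_j \cap A|$, and routing $\delta'$ tightly along $A$ should match each new crossing of $\gamma_j$ with $\delta'$ against an old crossing of $\gamma_j$ either with $A$ or with $\delta$, so that no per-edge crossing bound is worsened and no pair of curves crosses more than once.

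I expect the main obstacle to be verifying that $S'$ is in fact a valid simple drawing. Concretely, $\delta'$ must be chosen to avoid all other arcs of $\gamma_i$ lying inside $\mathfrak{c}$ (so that $\gamma_i$ remains a simple curve after re-routing), and for every other added edge $\gamma_j$ the total number of crossings with the modified $\gamma_i$ must remain at most one. My intended strategy is to view $\delta \cup \delta'$ as the boundary of a disk-like region $D$ contained in $\mathfrak{c} \cup \overline{R_\delta}$ and to perform a homotopy/uncrossing analysis inside $D$: any curve $\zeta$ with both endpoints outside $D$ crosses $\delta$ and $\delta'$ with equal parity, so the net effect on pairwise crossing counts can be controlled using the parity constraints above, while careful case analysis on whether $\gamma_j$'s endpoints lie in $R_\delta$, in $\mathfrak{c}$, or in the far exterior ensures that in each case the re-routing does not force a double crossing. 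If a single-step shortcut does not already yield a simple drawing because of some local obstruction, I would iterate the re-routing (or perform a local uncrossing inside $\mathfrak{c}$) until simplicity is achieved, using that each iteration cannot increase $\sum_i |\gamma_i \cap E(\drawing)|$. Once the validity of $S'$ is established, the contradiction with the extremality of $S^{*}$ is immediate.
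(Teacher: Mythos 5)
Your approach---extremalize a solution and then shortcut an offending removable detour inside $\mathfrak{c}$---is in the same family as the paper's, but it differs in the choice of potential (you minimize the total number of crossings between added curves and $\drawing$, while the paper minimizes the total number of detours) and in the target of the shortcut (you replace only $\delta$ by a curve tracking the avoided boundary $A$, while the paper replaces the entire piece of the curve between its first entry $d_1$ and its last exit $d_2$ of $\mathfrak{c}'$). Both potentials would strictly decrease under a successful shortcut, so that discrepancy is benign. The genuine gap is in the step you yourself flag as the main obstacle: verifying that the rerouted drawing is simple. The Jordan-parity observation only yields that $|\gamma_j\cap\delta|+|\gamma_j\cap\delta'|$ is even for every other added curve $\gamma_j$ whose endpoints both avoid the disk bounded by $\delta\cup\delta'$; it places no upper bound on $|\gamma_j\cap\delta'|$. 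Concretely, a curve $\gamma_j$ with $|\gamma_j\cap\delta|=0$ can enter and exit $\mathfrak{c}$ through $A$ twice---this is consistent with simplicity of the original solution, since $A$ typically contains segments of several distinct edges---and if $\delta'$ hugs $A$ it will then meet $\gamma_j$ twice, so the rerouted $\gamma_i$ crosses $\gamma_j$ twice. ``Matching each new crossing of $\gamma_j$ with $\delta'$ against an old crossing of $\gamma_j$ with $A$'' explains the parity but does not remove the offending double crossing. Your fallback, iterating local uncrossings ``until simplicity is achieved,'' is also not justified as stated: those uncrossings leave the potential merely non-increasing (they do not touch crossings with $\drawing$), so you have no termination argument.

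What the paper actually does to close this hole is a structural iteration rather than a homotopy/uncrossing inside the shortcut disk. It looks at the curve's first entry $d_1$ and last exit $d_2$ from $\mathfrak{c}'$ and asks which other added curves separate $d_1$ from $d_2$ inside $\mathfrak{c}'$. If none do, a shortcut through $\operatorname{int}(\mathfrak{c}')$ works directly. Otherwise, removability of $\delta$ forces any separating curve $\gamma_{j'}$ either to cross $\delta$ itself (in which case the shortcut can still be routed, threading through that crossing point) or to contain a $\mathfrak{c}'$-detour of its own, on which the argument recurses; after at most $k\ell$ steps one arrives at a detour whose separating curves all cross it, whereupon a crossing-controlled shortcut exists. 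To make your proof work you would need an analogous progress argument identifying a ``last obstruction''; the parity bookkeeping by itself does not bound crossings with the new shortcut.
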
}
\later{%
	\begin{proof}
		Consider a solution which minimizes the sum of the number of detours over all drawings of added edges and cells \(\mathfrak{c}'\).
		Assume \(\mathfrak{c}'\) is a specific cell at which there is a removable \(\mathfrak{c}'\)-detour \(\delta\) as part of the drawing of added edge \(s_jt_j\).
		Since \(\mathfrak{c}'\) is a cell in \drawing, there is a curve in the interior of \(\mathfrak{c}'\) that connects the first crossing point \(d_1\) of the drawing of \(s_jt_j\) into \(\mathfrak{c}'\) and the last crossing point \(d_2\) of the drawing of \(s_jt_j\) out of \(\mathfrak{c}'\) and does not intersect any edge in \(E(G)\).
		If no drawing of another added edge in the considered solution separates \(d_1\) from \(d_2\) within \(\mathfrak{c}'\), this simple curve could be used instead of \(\delta\) contradicting the minimality assumption on the considered solution.
		Hence, let \(s_{j'}t_{j'} \neq s_jt_j\) be another added edge whose drawing in the considered solution separates \(d_1\) from \(d_2\) within \(\mathfrak{c}'\).
		Since we are assuming \(\delta\) to be removable the drawing of \(s_{j'}t_{j'}\) also has to cross \(\delta\), or it also contains a \(\mathfrak{c}'\)-detour, on which we can iterate the argument.
		After at most \(k \cdot \ell\) iterations, in each of which we always find some detour which does not intersect the one we started the respective iteration with, we arrive at the case that the first entry and last exit point \(d_1\) and \(d_2\) of the considered \(\mathfrak{c}'\)-detour \(\delta\) are only separated in \(\mathfrak{c}'\) by drawings of added edges that also cross \(\delta\).
		This simple bound\footnote{A better bound can also be argued, but since we do not use this result algorithmically we do not give this argument here, as it would not improve our results.} on the number of iterations can easily be argued because each edge can have at most \(\ell\) crossings, and hence also at most this many different detours are part of the drawing of an added edge in the considered solution.
		
		Now it is easy to see that there is a \(d_1\)-\(d_2\)-curve within \(\mathfrak{c}'\) that crosses no drawing of an edge in \(E(G)\) and at most the same drawings of added edges in the considered solution as \(\delta\) does at most once.
		This means we can replace \(\delta\) by such a curve in our initial solution to obtain a solution with a smaller total number of removable detours, contradicting the minimality assumption.
	\end{proof}}
	
\short{
	
Lemma~\ref{lem:unnecessary} allows us to restrict our attention to solutions which do not contain any removable detours (these are the solutions we intuitively referred to as `reasonable').}

\later{
	
	Justified by this previous lemma, we want to from now understand a `reasonable' solution to be a solution 	in which no drawing of any added edge contains a removable \(\mathfrak{c}'\)-detour for any cell \(\mathfrak{c}'\) of \drawing, and 
	can safely restrict ourselves to computing such reasonable solutions.}

	\both{\subsection{Defining and Finding Stitches}
	\label{sec:stitches}}
	\short{%
	Let $s_it_i \in \newedges$ and $\hole$ be a hole. Our goal now will be to compute
	the set of edge parts in \(E(G)\) (w.r.t. $\hole$) which could be
	crossed by a drawing of \(s_it_i\) in some reasonable hypothetical solution, and to show that the number of such edge parts is bounded by our parameters.
	As we obviously do not know any hypothetical solution we 
		cannot compute this set directly.
	Consequently, we identify and compute a slightly larger set: the set of all edge parts that can be crossed by some so-called \emph{solution curve} for $s_i$ and $t_i$ that is 
	superficially like an \(s_it_i\)-curve in a `reasonable' hypothetical solution (but which might induce double-crossings).}%
	\later{%
	Throughout this section we fix an added edge $s_it_i \in \newedges$ and a hole $\hole$.
	Our goal is to compute
	a bounded number of edge parts in \(E(G)\) which could be
	crossed by a drawing of \(s_it_i\) in some reasonable hypothetical solution.
	As we obviously do not know any hypothetical solution we 
	cannot directly define an efficiently checkable notion of crossability by
	demanding that an $H$-torn edge is crossed by the drawing of \(s_it_i\) in some reasonable hypothetical solution.
	Consequently, we only demand that a crossable edge part of an $H$-torn edge can be
	crossed by some so-called \emph{solution curve} for $s_it_i$ that 
	behaves superficially like a drawing of \(s_it_i\) in a reasonable hypothetical solution.
	Specifically, we will not restrict solution curves to not contain multiple crossings with the same edge in \drawing.}

	\both{%
		\begin{definition}
		\label{def:solcurve}
		\appendixandlongorshort{
			A \emph{solution curve} for \(s_it_i\) is a simple curve \(\gamma\) that
			\begin{itemize}
				\item starts in \(s_i\) and ends in \(t_i\);
				\item produces at most $\ell_i$ crossings with \drawing; and
				\item whenever \(\gamma\) intersects a cell \(\mathfrak{c}'\) in more than one maximal connected subcurve there is an added edge \(s_jt_j\) with \(j \neq i\) such that exactly one of \(s_j\) and \(t_j\) lies in the \(\zeta\)-avoided part of \drawing, where \(\zeta\) is a maximal connected subcurve of \(\gamma\) outside of \(\mathfrak{c}'\) between two intersections of \(\gamma\) with \(\mathfrak{c}'\).
			\end{itemize}
			A part of an \(H\)-torn edge \(e \in E(G)\) is \emph{crossable} for \(s_it_i\) if it is crossed by a solution curve for \(s_it_i\).
		}{%
		A \emph{solution curve} for \(s_it_i\) is a simple curve \(\gamma\) that $(i)$ starts in \(s_i\) and ends in \(t_i\); $(ii)$ produces at most $\ell_i$ crossings with \drawing; and
$(iii)$ whenever \(\gamma\) intersects a cell \(\mathfrak{c}'\) in more than one maximal connected subcurve there is an added edge \(s_jt_j\) with \(j \neq i\) such that exactly one of \(s_j\) and \(t_j\) lies in the \(\zeta\)-avoided part of \drawing, where \(\zeta\) is a maximal connected subcurve of \(\gamma\) outside of \(\mathfrak{c}'\) between two intersections of \(\gamma\) with \(\mathfrak{c}'\).
		A part of an \(H\)-torn edge \(e \in E(G)\) is \emph{crossable} for \(s_it_i\) if it is crossed by a solution curve for \(s_it_i\).
		}%
	\end{definition}}

	
	\both{%
	\begin{lemma}[$\star$]	
		\label{lem:crossable}
		For every hole $\hole$ and every added edge $s_it_i\in F$
		there are less than \[\ell_i(2\ell_i + 1)! \cdot \left(4 k (\ell_i+2) (\ell_i + 1)^{\ell_i + 1} \right)^{2\ell_i + 1}\] parts of \(H\)-torn edges that are crossable for \(s_it_i\).
	\end{lemma}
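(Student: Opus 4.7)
The plan is to apply the sunflower lemma (Lemma~\ref{lem:SF}) to suitably chosen ``signature'' sets associated to crossable parts. For each crossable part \(x\) of an \(H\)-torn edge, I would fix (via the definition) a solution curve \(\gamma_x\) that crosses \(x\), and then let \(S_x\) be the set consisting of all cells of \(\drawing\) visited by \(\gamma_x\) together with all edge parts crossed by \(\gamma_x\). Since \(\gamma_x\) has at most \(\ell_i\) crossings with \(\drawing\), it visits at most \(\ell_i + 1\) cells and crosses at most \(\ell_i\) edge parts, so \(|S_x| \le 2\ell_i + 1\). The prefactor \(\ell_i\) in the claimed bound will be absorbed by a first pigeonholing step that fixes the position among the (at most \(\ell_i\)) crossings of \(\gamma_x\) at which the distinguished part \(x\) is crossed; after this step I may assume that all signatures have size exactly \(2\ell_i + 1\).

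Supposing for contradiction that the number of crossable parts exceeds the stated bound, the family \(\{S_x\}\) (after the pigeonholing above) contains more than \((2\ell_i+1)!\cdot\bigl(4k(\ell_i+2)(\ell_i+1)^{\ell_i+1}\bigr)^{2\ell_i+1}\) distinct sets of size \(2\ell_i + 1\). Lemma~\ref{lem:SF} then produces a sunflower of cardinality \(a = 4k(\ell_i+2)(\ell_i+1)^{\ell_i+1} + 1\) with a common core \(K\), whose petals correspond to distinct crossable parts. The core \(K\) encodes a shared partial signature (certain cells and certain crossed edge parts) common to all the selected curves \(\gamma_{x_j}\), while the distinguished element \(x_j\) lies in the petal of \(S_{x_j}\).

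The main work is then to rule out the existence of so many petals by a refined pigeonhole on structural features of the petal curves. I would classify each \(\gamma_{x_j}\) by: (i)~the linear order in which the elements of the signature appear along the curve, contributing a factor of at most \((\ell_i+1)^{\ell_i+1}\) since there are at most \(\ell_i + 1\) cell-visit positions into which the ordered items of \(K\) may be interleaved; and (ii)~for each cell revisit of \(\gamma_{x_j}\) (of which there are at most \(\ell_i + 2\)), the endpoint of an added edge that is avoided by the corresponding \(\mathfrak{c}\)-detour, which is guaranteed to exist by Definition~\ref{def:solcurve} and whose choice, together with a two-valued ``side of the avoided region'' flag, contributes a factor of at most \(4k\) per detour. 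Multiplying these gives at most \(\bigl(4k(\ell_i+2)(\ell_i+1)^{\ell_i+1}\bigr)^{2\ell_i+1}\) structural types, which is strictly less than \(a\); hence two petals must share the same extended type while having \(x_1 \ne x_2\).

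The final step, and the place I expect the \textbf{main obstacle}, is a topological rigidity argument showing that two solution curves with identical cell sequence, identical crossings at items of \(K\), and identical detour/avoided-endpoint data must in fact cross exactly the same set of edge parts in \(\drawing\), contradicting \(x_1 \ne x_2\). Intuitively, inside each cell \(\mathfrak{c}\), once the entry/exit points on the boundary of \(\mathfrak{c}\) and the subset of added-edge endpoints that must lie in the \(\delta\)-avoided region of every detour are fixed, the possible homotopy classes of simple arcs realizing this data are determined up to an isotopy that fixes which boundary segments of \(\mathfrak{c}\) are crossed on the way in and out; I expect this to follow from an analysis of the planar subdivision induced by \(\pgraph\) restricted to \(\mathfrak{c}\), using Lemma~\ref{lem:unnecessary} to ensure that the detours we consider are all the unremovable ones. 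Making this rigidity precise, while handling the subtle fact that crossings are with edge \emph{segments} of the planarization rather than whole edges of \(G\) and that several detours may nest inside a single cell, is the technical heart of the argument.
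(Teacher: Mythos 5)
The first half of your plan (signature = cells plus crossed segments, sunflower lemma, pigeonhole on the order in which core cells are visited) does match the paper's proof, with the cosmetic difference that the paper works with a \emph{minimum-cardinality} family $K$ of solution curves witnessing crossability of every crossable part (minimality gives distinctness of signatures for free) rather than with one curve $\gamma_x$ per crossable part $x$. The real divergence, and the genuine gap, is in your last paragraph.

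Your final step asserts a ``topological rigidity'': two solution curves with the same core cell sequence, the same crossings at core items, and the same detour/avoided-endpoint data must cross the same set of edge parts. This claim is false, and the very sunflower structure you invoke shows why: the two petal signatures $S_{x_1}\setminus K$ and $S_{x_2}\setminus K$ are \emph{disjoint nonempty} sets of cells/segments, so the two curves by construction enter and leave cells through different boundary segments and cross different edges outside the core. Fixing the cell sequence and the avoided endpoints of added edges does not pin down which of a bundle of many parallel $H$-torn edge parts lying between two consecutive core cells a given arc crosses; nothing forces any pair of these arcs to agree on petal crossings. So the contradiction you want to derive from $x_1\neq x_2$ does not materialize.

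The paper resolves this by a \emph{domination} argument rather than a rigidity argument. It observes that, after the pigeonhole on the core cell order, the portions of the selected curves lying outside the core cells are pairwise non-crossing (any crossing between two curves in $K^{\sun}_{\sigma}$ would have to lie in a cell visited by both, hence a core cell). Between each pair of consecutive core cells $\mathfrak{c}_j,\mathfrak{c}_{j+1}$ these subcurves are therefore ordered, and at most $\max\{2,4(k-1)\}$ of them are ``extremal'' in the sense that, together with the two core cells (and, in the detour case, together with the nesting forced by the avoided endpoints), they separate $H$ from all the remaining subcurves. Since every crossable edge part is a part of an $H$-torn edge and hence has an endpoint on the boundary of $H$, any such part crossed by a non-extremal subcurve must also cross the separating extremal ones. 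Hence at least one curve of $K^{\sun}_{\sigma}$ is redundant in $K$, contradicting minimality of $K$. This is where the factor $4k(\ell_i+2)$ is actually spent: it is not a pigeonhole on detour data (as in your step~(ii)), but a count of how many extremal subcurves could exist across the at most $\ell_i+2$ positions between core cells. You would need to replace your rigidity step with this separation/domination argument, which in turn is what makes the reach-the-hole property of $H$-torn edge parts essential.
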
}%
	
	\short{%
	\begin{proof}[Proof Sketch]
		We show that there is a set $K$ of less than $(2\ell_i + 1)!\left(4 k (\ell_i+2) (\ell_i + 1)^{\ell_i + 1} \right)^{2\ell_i + 1}$ 
		solution curves for $s_it_i$ such that each crossable edge part for $s_it_i$ is crossed by at least one of the curves in $K$.
		Then the claim follows as each solution curve crosses at most \(\ell_i\) edges.
		
		Assume for contradiction that the minimum set $K$ that witnesses crossability of parts of \(H\)-torn crossable edges for \(s_it_i\) consists of at least \((2\ell_i + 1)!\left(4 k (\ell_i+2) (\ell_i + 1)^{\ell_i + 1} \right)^{2\ell_i + 1}\) solution curves for \(s_it_i\).		
		Consider the restricted drawing \(\drawing_{H}\) which is given by \drawing restricted to the boundary of \(H\), all \(H\)-torn edges in \(E(G)\), as well as $s_i$ and~$t_i$.

		We associate each $s_it_i$ curve in $K$ with the set of cells of \(\drawing_{H}\) which it intersects and the set of edge segments in \(\planarization{\drawing_{H}}\) which it crosses.
			In this way, each curve in \(K\) is associated to a set of size at most \(2\ell_i + 1\).
			By the minimality of $K$, no two curves in \(K\) are associated to the same set 
			of cells and edge segments.	
		Using the sunflower lemma~\cite{Erdos60,FlumGrohe06} for the set system 
		given by the sets of cells associated to the $s_it_i$ curves in \(K\) 
		we obtain a set of at least \(4 k (\ell_i+2) (\ell_i + 1)^{\ell_i + 1}\) solution curves \(K^{\sun} \subseteq K\)
		which all intersect pairwise different cells of \(\drawing_{H}\) and edge segments of~\(\planarization{\drawing_{H}}\),
		apart from the cells and edge segments in the center of a sunflower, which they all intersect.
		Moreover, as curves in $K$ intersect at most $\ell_i$ edges
		we find at most $\ell_i + 1$ cells in the center.
		
		By the pigeonhole principle 
		there is a set of at least \(4k(\ell_i + 2)\) curves in \(K^{\sun}\) which all intersect the cells in the center of the sunflower in the same order (taking into account repetitions of cells). 
		Let $K^{\sun}_\sigma \subseteq K^{\sun}$ be such a set of curves and let \(\sigma = \mathfrak{c}_1, \dotsc, \mathfrak{c}_l\) with \(l \leq \ell_i + 1\) be the order in which these curves traverse the cells in the center of the sunflower.
		
		As each \(\mathfrak{c}_j\) with \(j \in [l]\) is a cell in a restriction of \drawing containing all \(H\)-torn edges, no part of an \(H\)-torn edge intersects the interior of \(\mathfrak{c}_j\).
		In particular parts of \(H\)-torn edges are not crossed by any curve in \(K^{\sun}_\sigma\) within \(\interior(\mathfrak{c_j})\).
		
		When considering subcurves of curves \emph{between} each \(\mathfrak{c}_j\) and \(\mathfrak{c}_{j + 1}\), we can find at most \(4(k - 1)\) `extremal' such subcurves which separate all other subcurves from \(H\) together with \(\mathfrak{c}_j\) and \(\mathfrak{c}_{j + 1}\).
		These extremal subcurves together cross any crossable edge part of an \(H\)-torn edge intersected by any other considered subcurve.
		See Fig.~\ref{fig:bound_per_hole} for an illustration.
	
		In this way we find at least one curve after the removal of which from \(K\) the same crossable edge parts of \(H\)-torn edges are intersected, contradicting our minimality assumption.
	\end{proof}}
	
	\both{\begin{figure}
		\includegraphics[width=\textwidth]{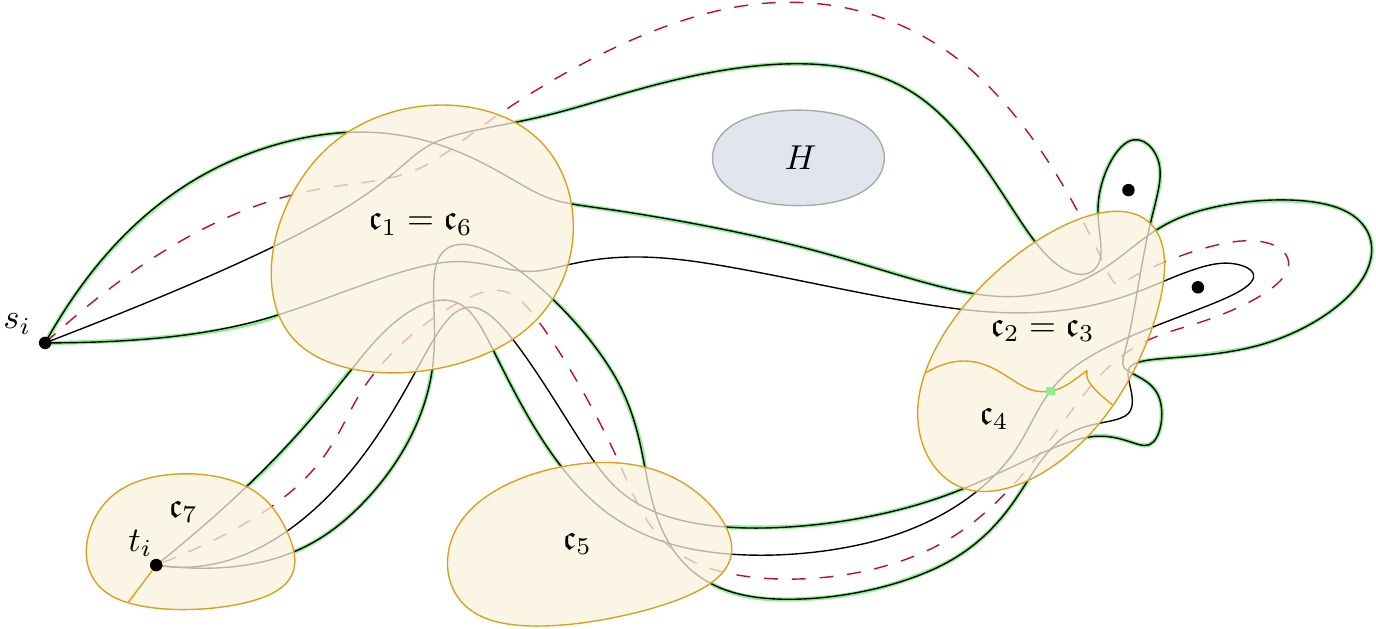}
		\caption{The cells $\mathfrak{c}_1,\ldots,\mathfrak{c}_7$ are in the sunflower center. The red dashed $s_it_i$ curve cannot be part of the minimal set of curves $K$. The extremal subcurves are highlighted in green. \label{fig:bound_per_hole}}
	\end{figure}}
	
	\later{%
	\begin{proof}
		We will show that there is a set $K$ of less than $(2\ell_i + 1)!\left(4 k (\ell_i+2) (\ell_i + 1)^{\ell_i + 1} \right)^{2\ell_i + 1}$ solution curves for $s_it_i$ that together witness the crossability for \(s_it_i\) of each part of an \(H\)-torn edge that is crossable for \(s_it_i\),
		i.e.\ such that each edge part that is crossable for $s_it_i$ is crossed by at least one of the curves in $K$.
		Then the claim immediately follows since each solution curve crosses at most \(\ell_i\) edges.
		
		Assume for contradiction that the minimum set $K$ that witnesses crossability of parts of \(H\)-torn crossable edges for \(s_it_i\) consists of at least \((2\ell_i + 1)!\left(4 k (\ell_i+2) (\ell_i + 1)^{\ell_i + 1} \right)^{2\ell_i + 1}\) solution curves for \(s_it_i\).		
		Consider the restricted drawing \(\drawing_{H}\) which is given by \drawing restricted to the boundary of \(H\), all \(H\)-torn edges in \(E(G)\), as well as $s_i$ and~$t_i$.
		
		Each $s_it_i$ curve in \(K\) can be associated to the set of cells of \(\drawing_{H}\) which it intersects and the set of edge segments in the planarization of \(\drawing_{H}\) which it crosses.
		In this way, each $s_it_i$ curve in \(K\) is associated to a set of size at most \(2\ell_i + 1\), 
		since it can have at most $\ell_i$ crossings with \(\drawing_{H}\). 
		By the minimality of $K$, no two curves in \(K\) can be associated to the same set of edge segments, which also means that each curve in \(K\) is associated to a different set of cells and edge segments.
		
		Now we can apply the sunflower lemma to the set system given by the sets of cells associated to the $s_it_i$ curves in \(K\) to obtain a sunflower of size at least \(4 k (\ell_i+2) (\ell_i + 1)^{\ell_i + 1}\).
		This sunflower corresponds to a set of at least \(4 k (\ell_i+2) (\ell_i + 1)^{\ell_i + 1}\) solution curves \(K^{\sun} \subseteq K\)  
		which all intersect pairwise different cells of \(\drawing_{H}\) and edge segments of the planarization of~\(\drawing_{H}\), apart from the cells and edge segments in the center of the sunflower, which they all intersect.
		Since the curves in $K$ intersect at most $\ell_i$ edges, 
		the number of cells in the center of the sunflower is upper bounded by \(\ell_i + 1\). 
		By the pigeonhole principle 
		there is a set of at least \(4k(\ell_i + 2)\) curves in \(K^{\sun}\) which all intersect the cells in the center of the sunflower in the same order (taking into account that the curves might intersect the same cell multiple times). 
		Let $K^{\sun}_\sigma \subseteq K^{\sun}$ be such a set of curves and let \(\sigma = \mathfrak{c}_1, \dotsc, \mathfrak{c}_l\) with \(l \leq \ell_i + 1\) be the order (with repetitions) in which these curves traverse the cells in the center of the sunflower.
		To unify notation, when the center of the sunflower does not contain any cell we define \(\mathfrak{c}_1 = \mathfrak{c}_2 = \mathfrak{c}\).  
		
		As each \(\mathfrak{c}_j\) with \(j \in [l]\) is a cell in a restriction of \drawing containing all \(H\)-torn edges, no part of an \(H\)-torn edge intersects the interior of \(\mathfrak{c}_j\).
		In particular parts of \(H\)-torn edges are not crossed by any curve in \(K^{\sun}_\sigma\) within \(\interior(\mathfrak{c_j})\).
		In that sense, we are only interested in the subcurves of the curves in \(K^{\sun}_\sigma\) within \(\interior(\mathfrak{c_j})\) outside of \(\bigcup_{1\le j \le l} \interior(\mathfrak{c}_j)\). 
		The connected parts of the embeddings in $\mathbb{R}^2 \setminus \bigcup_{1\le j \le l} \interior(\mathfrak{c}_j)$ of the curves in $K^{\sun}_\sigma$ are pairwise non-intersecting, as intersections are only possible inside cells in the center of the sunflower.
		See Fig.~\ref{fig:bound_per_hole} for an illustration.		
		
		Consider the subcurves of curves in $K^{\sun}_\sigma$ from $s_i$ to the first point they share with the boundary of $\mathfrak{c}_1$, if \(\mathfrak{c}_1 \neq \mathfrak{c}\). 
		If $s_i$ is inside or on the boundary of $\mathfrak{c}_1$ the interior of these subcurves does not intersect any $\hole$-torn edge. 
		Otherwise, if $s_i$ does not intersect $\mathfrak{c}_1$, 
		those subcurves of curves in $K^{\sun}_\sigma$ 
		are pairwise non-intersecting. 
		As in the case in which the center of the sunflower was empty, 
		we can identify at most two \emph{extremal} subcurves of two curves $\zeta_1$ or $\eta_1$
		that together with $\interior (\mathfrak{c}_1)$ separate $H$ from the rest of the subcurves without $s_i$. 
		Since any $\hole$-torn crossable edge has to reach $\hole$ and does not subdivide $\mathfrak{c}_1$, 
		these two extremal subcurves must cross any crossable edge intersected by any other subcurve. 
		Analogously,
		we can identify at most two extremal subcurves of curves in $K^{\sun}_\sigma$ between $t_i$ and the boundary of $\mathfrak{c}_l$ that belong to two curves $\zeta_{l+1}$ or $\eta_{l+1}$.  
		%
		
		For the remaining subcurves of curves in \(K^{\sun}_\sigma\) within \(\interior(\mathfrak{c_j})\) outside of \(\mathbb{R}^2 \setminus \bigcup_{1\le j \le l} \interior(\mathfrak{c}_j)\) we introduce the following formulation.
		For each curve $\gamma$ in~$K^{\sun}_\sigma$, the subcurve of $\gamma$ \emph{between} the boundaries of $\mathfrak{c}_{j}$ and $\mathfrak{c}_{j+1}$ with $1\le j \le l-1$ refers to the maximal connected part of $\gamma$ whose interior is disjoint from $\bigcup_{1\le j \le l} \interior(\mathfrak{c}_j)$ (this might be a single point) 
		between $\mathfrak{c}_{j}$ and $\mathfrak{c}_{j+1}$ 
		in the traversal order specified by $\sigma$. 
		
		For the case in which $\mathfrak{c}_{j} \neq \mathfrak{c}_{j+1}$,  
		as in the previous cases, we can identify at most two extremal subcurves of two curves $\zeta_{j+1}$ or $\eta_{j+1}$ between \(\mathfrak{c}_j\) and \(\mathfrak{c}_{j + 1}\)
		that together with $\interior (\mathfrak{c}_{j})$ and $\interior (\mathfrak{c}_{j+1})$ separate $H$ from all other such subcurves. 
		Again, since any crossable edge part has to connect to $\hole$ and neither subdivides $\mathfrak{c}_{j}$ nor $\mathfrak{c}_{j+1}$, 
		the two extremal subcurves must intersect any crossable edge part intersected by any other such subcurve. 	

		We have to take a little more care in the case in which $\mathfrak{c}_{j} = \mathfrak{c}_{j+1}$ (in particular this includes the case that \(\mathfrak{c}_1 = \mathfrak{c}_2 = \mathfrak{c}\)).
		In this case, the subcurves of curves in $K^{\sun}_\sigma$ between $\mathfrak{c}_{j}$ and $\mathfrak{c}_{j+1}$ are a maximal connected subcurves outside of \(\mathfrak{c}_j\) between two intersections with \(\mathfrak{c}_j\) (note that these maximal connected subcurves can be singular points).
		Then by the definition of solution curves for each considered subcurve \(\gamma'\) of a solution curve \(\gamma \in K^{\sun}_\sigma\), there is an added edge \(s_{j'}t_{j'}\) with \({j'} \neq i\) such that exactly one of \(s_{j'}\) and \(t_{j'}\) lies in the \(\gamma'\)-avoided part of \drawing.
		Subcurves of curves in \(K^{\sun}_\sigma\) between \(\mathfrak{c}_j\) and \(\mathfrak{c}_{j + 1}\) which avoid a specific endpoint of one of the added edges \(s_{j'}t_{j'}\) with \(j' \neq i\)
		are pairwise non-intersecting and they are nested in the sense that,  
		given two detours, their avoided regions must be sub- or super sets of each other. 
		For each of the \(2k - 1\) such endpoints $p$, we can identify at most two extremal subcurves $\delta_{j+1}^p$ and $\partial_{j+1}^p$ around $p$, which are subcurves of solution curves \(\zeta_{j+1}^p\) and \(\eta_{j+1}^p\) respectively.
		More precisely, 
		$\delta_{j+1}^p$ does not contain $H$ in the $\delta_{j+1}^p$-avoided part of the plane and 
		is inclusion-maximal with this property  
		(it contains in the $\delta_{j+1}^p$-avoided region all other considered subcurves around $p$ that do not contain $H$ in their avoided regions). 
		Similarly, $\partial_{j+1}^p$ contains $H$ in the $\partial_{j+1}^p$-avoided region and is inclusion-minimal with this property.
		Any point on a non-extremal subcurve around~$p$ 
		is separated from \(H\) by  $\delta_{j+1}^p$,  $\partial_{j+1}^p$, and $\interior (\mathfrak{c}_{j})$. 
		Thus, as above, the extremal subcurves around $p$ together must cross any crossable edge part intersected by any other considered subcurve around~$p$.

%
		

		Let $\zeta_j$, $\eta_j$, $\zeta^{p}_j$, and $\eta^{p}_j$ be defined as above (if they exist and ignored otherwise).
		for \(j \in [l+1]\) and \(p \in \{s_j,t_j \mid j \neq i\}\). 
		Then it holds that any \(\gamma \in {K}^{\sun}_{\sigma} \setminus (\{\zeta_j,\eta_j,\zeta^{p}_j,\eta^{p}_j \mid j \in [l+1], \text{$p$ endpoint of edge in $F\setminus s_it_i$}\})\) 
		only crosses crossable edge parts which are also crossed by at least one of $\{\zeta_j,\eta_j,\zeta^{p}_j,\eta^{p}_j \mid j \in [l+1], p \in \{s_j,t_j \mid j \neq i\}\})$.
		Since \(|{K}^{\sun}_{\sigma} \setminus (\{\zeta_j,\eta_j,\zeta^{p}_j,\eta^{p}_j \mid j \in [l+1], p \in \{s_j,t_j \mid j \neq i\}\})| \geq 4k(\ell_i + 2) - (l+1)\max \{2, 4(k-1) \} \geq (\ell_i + 2) (4k - \max \{2, 4(k-1) \}) > 0\) we get a contradiction to the minimality assumption on \(K\).
	\end{proof}}

	
	\both{
		
		While the fact that the number of crossable edge parts we want to introduce stitches for is bounded by a function in our parameters is reassuring, we need to be able to actually introduce these stitches before being able to give our final MSO encoding of hypothetical solutions.
		For this we invoke Courcelle's theorem in Lemma~\ref{lem:findcrossable} independently of its final application.
		This then allows us to insert the corresponding stitches.}
		
		\both{%
		\begin{lemma}[$\star$]		
			\label{lem:findcrossable}
			There is a fixed-parameter algorithm parameterized by $k+\ell$ which identifies, for an added edge \(s_it_i\) and a hole \(H\), all parts of \(H\)-torn edges which are crossable for \(s_it_i\).			
		\end{lemma}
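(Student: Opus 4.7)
The plan is to combinatorialize the part of \drawing that could possibly accommodate a solution curve for $s_it_i$ and to then apply Theorem~\ref{fact:MSO} to extract crossable parts one at a time. I first restrict to the subdrawing $\drawing_i$ consisting of all cells that are \emph{not} $s_it_i$-far, since a solution curve uses at most $\ell_i$ crossings and cannot reach any $s_it_i$-far cell. In the dual of $\planarization{\drawing_i}$ every face then has distance at most $\ell_i+1$ from a face incident to $s_i$ or $t_i$, and together with planarity this bounded-radius property yields $\tw(\planarization{\drawing_i}) \leq g(\ell_i)$ for some computable function $g$. I augment this graph with a vertex per face (joined to its incident edge segments) and with labels marking: the vertices $s_j, t_j$ for every $j \in [k]$, the edge segments on the boundary of $H$, which edge segments belong to the same edge of $G$ (and, within an $H$-torn edge, to the same part), and an auxiliary ``already-found'' predicate that I update across iterations. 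The treewidth of the resulting structure $I_i$ stays bounded in $\ell_i$.

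Next I write an MSO formula $\Phi(y)$ expressing ``the edge segment $y$ lies on an unmarked part of an $H$-torn edge which is crossed by some solution curve for $s_it_i$''. The curve is encoded combinatorially by existentially quantifying individual variables for at most $\ell_i+1$ cells $c_0,\dots,c_m$ (with $s_i$ incident to $c_0$ and $t_i$ to $c_m$) together with at most $\ell_i$ edge segments, where the $j$-th segment lies between $c_{j-1}$ and $c_j$ and one of them equals $y$. Whenever two cell-variables coincide, the revisit clause of Definition~\ref{def:solcurve} is captured by existentially quantifying an additional endpoint $p \in \{s_{j'}, t_{j'} : j' \neq i\}$ together with a set variable $S$ of cells describing the avoided side as in Definition~\ref{def:detour}: $S$ is closed under adjacency across every edge segment except the two corresponding to the revisit, $p$ is incident to some cell in $S$, and no cell of $S$ borders $H$. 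Simplicity of $\gamma$ translates to the requirement that, inside each revisited cell (a topological disk by connectivity of \drawing), the at most $\ell_i$ successive entry and exit segments form a non-crossing chord diagram along the cyclic boundary; this is a fixed-size combinatorial constraint on the cyclic order of the incident segments and fits directly into the formula as a bounded boolean combination of atomic MSO predicates.

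With $\Phi$ in hand, I iteratively invoke Courcelle's theorem on $I_i$: while some $y$ satisfies $\Phi$ I follow the edge of $G$ containing $y$ to recover the full edge part that contains it, flip that part's ``already-found'' label to true, and repeat. By Lemma~\ref{lem:crossable} the loop terminates after $f(k,\ell_i)$ iterations, at which point every crossable part of an $H$-torn edge has been reported, yielding the claimed fixed-parameter algorithm. The step I expect to be the main obstacle is the faithful MSO capture of the revisit condition together with the simplicity of $\gamma$, since ``avoided region'' and ``non-crossing chord diagram inside a disk'' are topological notions; what makes both tractable is that a solution curve has combinatorial skeleton of size bounded by $\ell_i$, so each such condition reduces to a bounded existential quantification over cells, segments, and separating cell sets on a bounded-treewidth structure, which is exactly the setting Theorem~\ref{fact:MSO} handles.
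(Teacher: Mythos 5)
Your overall strategy matches the paper's: build a bounded-treewidth planar auxiliary graph from the planarization augmented with face vertices, encode solution curves as bounded-length walks in MSO, and repeatedly apply Courcelle's theorem to extract the crossable edge parts. The two tactical deviations --- restricting to the $s_it_i$-far-free region rather than the hole-free region, and mark-and-iterate rather than binary search with $2q$ free variables --- are both serviceable. But the step you rightly flag as the obstacle is exactly where the proposal breaks.

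There are two concrete problems with the encoding of the $\zeta$-avoided region. First, requiring $S$ to be closed under adjacency across every edge segment \emph{except the two corresponding to the revisit} does not seal $S$ along $\zeta$: the subcurve $\zeta$ may cross several further edge segments and pass through other cells, and the avoided region is bounded by the whole of $\zeta$, not merely by its two endpoints on the boundary of $\mathfrak{c}'$. Second, ``no cell of $S$ borders $H$'' is not the criterion that determines the avoided side. By Definition~\ref{def:detour} the avoided part of the boundary of $\mathfrak{c}'$ is the arc \emph{not} incident to the unbounded face of the local embedding formed by $\zeta$ and the boundary of $\mathfrak{c}'$; this is intrinsic to $\mathfrak{c}'$ and $\zeta$ and has nothing to do with $H$, and the two criteria can disagree. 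The paper therefore encodes the avoided vertices $A_{\zeta}$ as those neighbors of $v_{\mathfrak{c}'}$ that cannot reach the face vertex of the outer cell (of the hole-free drawing) by a path avoiding the encoded walk $\rho_{\gamma}$ and the closed neighborhood of $v_{\mathfrak{c}'}$. Replacing ``does not border $H$'' with this reachability condition and sealing $S$ along all of $\zeta$ repairs the encoding. Relatedly, calling the non-crossing chord diagram ``a bounded boolean combination of atomic MSO predicates'' is too quick: cyclic order around a face is not an atomic relation in the signature, and you need a device like the paper's --- expressing that four edge vertices $e_1,e_2,e_3,e_4$ lie on a cycle through the neighborhood of the face vertex in the interleaving order $e_1,e_3,e_2,e_4$ --- to make the simplicity test MSO-expressible.
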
}
		
		\short{%
		\begin{proof}[Proof Sketch]
			Consider the \emph{combined primal-dual graph} \(G^p_d\) of \pdrawing without the interior of holes
			that arises from \pdrawing by
			\begin{enumerate}
				\item removing the edge segments in the interior of each hole, rendering each hole a single face;
				\item subdividing each edge \(e\) of \(\pgraph\) by a new vertex labeled as \emph{edge vertex}; and
				\item inserting a vertex labeled as \emph{face vertex} for every face of the graph obtained till now, and connecting that vertex to all vertices on the boundary of that face.
			\end{enumerate}
			\(G^p_d\) is planar and has diameter
			at most
			\(4(\ell_i + 1)\),
			and hence has
			treewidth at most $3(4(\ell_i + 1))$~\cite{RobertsonS84}. To complete the proof,
			we
			construct an MSO formula that
			checks whether each part of an $H$-torn edge is crossable. In particular, the formula considers each edge vertex in \(G^p_d\) and encodes the existence of an $s_i$-$t_i$ walk of length $\bigoh(\ell_i)$ in \(G^p_d\) that satisfies an additional, technical condition capturing the restriction imposed on solution curves in Definition~\ref{def:solcurve}.
		\end{proof}}
		
		\later{%
		\begin{proof}
			Using binary search we can find the correct number of parts of \(H\)-torn edges which are crossable in \(\log(f(k,\ell_i)g(k,\ell_i,n))\) time, where \(f(k,\ell_i)\) is the bound we obtain from Lemma~\ref{lem:crossable}, assuming \(g(k,\ell_i,n)\) is the time required to find a fixed number of crossable edge parts or decide that fewer crossable edge parts exist.
			We will find a fixed number \(q\) of crossable edges, by encoding the property of being the endpoint of a crossable edge part on the boundary of \(H\) within a graph of bounded treewidth.
			We can then find a satisfying assignment for up to \(2q\) free variables which have that property using Courcelle's theorem, completing the proof of the lemma.
			
			Consider the \emph{combined primal-dual graph} \(G^p_d\) of \pdrawing without the interior of holes
			that arises from \pdrawing by
			\begin{enumerate}
				\item removing the edge segments in the interior of each hole, rendering each hole a single face;
				\item subdividing each edge \(e\) of \(\pgraph\) by a new vertex labeled as \emph{edge vertex}; and
				\item inserting a vertex labeled as \emph{face vertex} for every face of the graph obtained till now, and connecting that vertex to all vertices on the boundary of that face.
			\end{enumerate}
			Obviously \(G^p_d\) is planar.
			Moreover, one can argue that the diameter of \(G^p_d\) is upper-bounded by \(4\ell_i + 4\);
			this is because every vertex in \(G^p_d\) is at distance at most one from a face vertex, and every face vertex is at distance at most \(2\ell_i + 1\) from \(s_i\) because the interiors of holes were deleted.
			Together this implies that \(\operatorname{tw}(G^p_d) \leq 3(4\ell_i + 4)\)~\cite{RobertsonS84}.

			It remains to show MSO encodability of the property of being the endpoint of a crossable edge part on the boundary of \(H\) in the graph \(G^p_d\) in which we additionally introduce an auxiliary labeling to describe paths in \(G^p_d\)
			that follow edge parts.
			Since a labeling (the so called \emph{tracking labeling}) for following edge parts will also be used in the final application of Corcelle's theorem in Section~\ref{sec:patchworkdef} we refer to that section for details.
			
			One can easily encode the property of being a vertex \(v\) on the boundary of \(H\), simply as being a neighbor of the face vertex that corresponds to \(H\); and using the additional labeling the property of being connected to some other vertex via an edge part.
			Such an \emph{other vertex} \(x\) should be an edge vertex which witnesses crossability of the edge part in question, i.e.\ we want there to be a walk in \(G^p_d\) that corresponds to a solution curve for \(s_it_i\) and contains \(x\).
			This means it suffices to encode walks which correspond to solution curves for \(s_it_i\).
			
			Because of the restriction on the number of crossings with \drawing of a solution curve, a solution curve always corresponds to a walk of length at most \(\mathcal{O}(\ell_i)\) in \(G^p_d\), which means we can just use free variables for all its vertices and edges.
			It is trivial to ensure that a walk starts in \(s_i\) and ends in \(t_i\).
			
			We can also ensure that the inner vertices of the walk follow a pattern of using a face vertex, followed by an edge vertex, and so on, the last inner vertex of the walk being a face vertex.
			To make sure the walk corresponds to a \emph{simple} curve, we have to make sure that the walk does not force a crossing of a curve that conforms to this walk.
			This can be done by including the condition that there are no edge vertices \(e_1, e_2, e_3, e_4\) which
			\begin{enumerate}
				\item are visited in the order \(e_1, e_2, e_3, e_4\) by the walk and are all in the neighborhood of the same face vertex \(f\) (this means than they correspond to edge segments on the boundary of the same face); and
				\item occur on a cycle in \(G^p_d\) that consists only of neighbors of \(f\) in the order \(e_1, e_3, e_2, e_4\).
			\end{enumerate}
			
			The last condition we need to encode is the fact that a curve \(\gamma\) corresponding to our walk \(\rho_{\gamma}\) should satisfy that
			whenever \(\gamma\) intersects a cell \(\mathfrak{c}'\) in more than one maximal connected subcurve there is an added edge \(s_jt_j\) with \(j \neq i\) such that exactly one of \(s_j\) and \(t_j\) lies in each \(\zeta\)-avoided part of \drawing, where \(\zeta\) is a maximal connected subcurve of \(\gamma\) outside of \(\mathfrak{c}'\) between two intersections of \(\gamma\) with \(\mathfrak{c}'\).
			For all cells \(\mathfrak{c}'\) the fact that \(\gamma\) intersects \(\mathfrak{c}'\) in more than one maximal connected subcurve will be equivalent to the face vertex \(v_{\mathfrak{c}'}\) corresponding to \(\mathfrak{c}'\) occurring multiple times on the walk, and each \(\zeta\) will correspond to the curve given by a maximal subwalk \(\rho_{\zeta}\) between two consecutive occurrences of \(v_{\mathfrak{c}'}\) on the walk.
			Using this characterization, all \(\rho_{\zeta}\) can be quantified over in an MSO formula.
			
			For each \(\rho_{\zeta}\) we can encode the equivalent of the \(\zeta\)-avoided part of the boundary of \(\mathfrak{c}'\) in \(G^p_d\) by defining it as the set of vertices \(A_{\zeta}\) which are neighbors of \(v_{\mathfrak{c}'}\), such that there is no path that connects these vertices to the face vertex corresponding to the outer cell of \drawing without the interior of holes, without intersecting \(\rho_{\gamma}\) or the closed neighborhood of \(v_{\mathfrak{c}'}\).
			Then the equivalent of the \(\rho_{\zeta}\)-avoided region in the patchwork graph is the subgraph of \(G^p_d\) which is separated from the face vertex of the outer cell of \drawing without the interior of holes by \(A_{\zeta}\) and \(\rho_{\zeta}\).
			All of this is MSO encodable, which means that we can also require that there is an added edge \(s_jt_j\) with \(j \neq i\) such that exactly one of \(s_j\) and \(t_j\) lies in the equivalent of the \(\zeta\)-avoided part in \(G^p_d\).
		\end{proof}}
	
		\both{%
		\begin{definition}
			For a hole \(H\) in \drawing\ and an added edge
			\(s_it_i \in \newedges\), a \emph{thread} is a pair of two endpoints of two distinct	
			edge parts of the same \(H\)-torn edge in \(e \in E(G)\) satisfying the following properties: (i) both edge parts are crossable for \(s_it_i\), (ii) there is no other crossable edge part between these edge parts along a traversal of \(e\), and (iii) there is no other endpoint of one of the two edge parts along a traversal of \(e\).
			We denote the set of all threads for \(H\) and \(s_it_i\) as \(T_{H,s_it_i}\), and define the set of all threads for \(H\) as \(T_H = \bigcup_{i \in [k]} T_{H,s_it_i}\).
		\end{definition}}

		\later{%
		\begin{observation}
				\label{obs:independent}
				Let $H$ be a hole in \drawing with not connected boundary and 
				$T_{H}$ the set of threads we computed for $H$.
				For an edge $s_it_I$ and thread $uv \in T_{H,s_i,t_i}$ between vertices $u$ and $v$ on the boundary of $H$
				we find that $u$ and $v$ are always in a connected part of the boundary of $H$.
		\end{observation}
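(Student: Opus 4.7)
The plan is to exploit the fact that no solution curve for $s_it_i$ can enter $\interior(H)$, and combine this with a straightforward planar topology argument. Every cell contained in $H$ is, by definition, far, so the corresponding vertex in the dual $\dgraph$ is at distance more than $\ell_i$ from both $U_{s_i}$ and $U_{t_i}$. A solution curve for $s_it_i$ starts in a cell incident to $s_i$, ends in a cell incident to $t_i$, and has at most $\ell_i$ crossings with $\drawing$; hence every cell it visits is at dual-distance at most $\ell_i$ from both $U_{s_i}$ and $U_{t_i}$. Consequently no solution curve can enter any far cell, and in particular none can enter $\interior(H)$.

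I would then use this to localize solution curves to a single region. Since $s_i$ and $t_i$ are vertices of $G$, neither of them lies in $\interior(H)$. The existence of any thread in $T_{H,s_it_i}$ requires the existence of crossable edge parts and hence of a solution curve for $s_it_i$, and by the previous step such a curve connects $s_i$ to $t_i$ without entering $\interior(H)$. Hence $s_i$ and $t_i$ lie in a common connected component $R$ of $\mathbb{R}^2 \setminus \interior(H)$, and every solution curve for $s_it_i$ is entirely contained in $R$. Viewing $\partial H$ as a plane graph inherited from $\pdrawing$, the standard face-analysis of plane graphs then shows that $R$ is bounded by exactly one connected component $B$ of $\partial H$. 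A crossable edge part is a connected subcurve of some $e \in E(G)$ lying in $\mathbb{R}^2 \setminus \interior(H)$ and intersecting some solution curve, so it meets $R$, and being connected it lies entirely in $\overline R$; all of its endpoints that lie on $\partial H$ therefore lie on $B$.

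Since any thread $uv \in T_{H,s_it_i}$ is, by hypothesis, a pair of endpoints of crossable edge parts lying on $\partial H$, both $u$ and $v$ lie on $B$, so they belong to the same connected component of $\partial H$, which is exactly the desired conclusion. The main obstacle is the planar-topology step asserting that each component of $\mathbb{R}^2 \setminus \interior(H)$ has its boundary equal to a single connected component of $\partial H$; this is a routine consequence of treating $\partial H$ as a plane subgraph of $\pdrawing$ and using the classification of its faces, though some care is needed to phrase it cleanly.
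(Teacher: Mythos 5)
The paper states this as an Observation without providing a proof, so there is no internal argument to compare against; what matters is whether your proof is sound, and I believe it is. Your chain of reasoning is the right one: (1) every solution curve for $s_it_i$ stays in $\mathbb{R}^2\setminus\interior(H)$, since every cell it visits is at dual-distance at most $\ell_i$ from both $U_{s_i}$ and $U_{t_i}$, whereas cells of $H$ are $s_it_i$-far; (2) if $T_{H,s_it_i}\neq\emptyset$ then at least one solution curve exists, and since it is a connected curve from $s_i$ to $t_i$ inside the closed set $\mathbb{R}^2\setminus\interior(H)$, it lies in a single connected component $R$ of that set (which is closed, so $\overline R=R$); (3) every crossable edge part for $s_it_i$ meets $R$ (at a crossing with some solution curve), is connected, and avoids $\interior(H)$, hence lies in $R$, so its endpoints on $\partial H$ lie in $R\cap\partial H$; (4) the thread endpoints $u,v$ are endpoints of two crossable edge parts, hence both lie in $R\cap\partial H$.

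The only step that deserves more care than you give it is the claim that $R\cap\partial H$ is a single connected component of $\partial H$. This is not quite ``routine face-analysis''; what is actually being invoked is the classical fact that, for a connected open set $U\subseteq S^2$, every connected component of $S^2\setminus U$ has connected boundary (unicoherence of the sphere). In the present polyhedral setting this can be made combinatorial, but the statement is genuinely a plane-topology fact rather than a trivial observation about face boundaries of $\partial H$ viewed as a plane graph, since faces of a \emph{disconnected} plane graph may be bounded by several circuits. One also needs the small but easy observation that distinct components of $\mathbb{R}^2\setminus\interior(H)$ cannot share a boundary point (otherwise their union would be connected), so that $R\cap\partial H=\partial R$ is indeed a full component of $\partial H$. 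With that caveat acknowledged, your argument is correct and gives a clean justification of the observation.
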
}
	
	\both{
		
		An embedding of $T_H$ is a set of curves, contained completely in $H$, which connect each pair of two endpoints of edge parts in $T_H$.}
	
	\both{%
			\begin{lemma}[$\star$]
			\label{lem:stitches}
			There is a fixed-parameter algorithm parameterized by $k+\ell$ that computes, for a hole \(H\) in \drawing, a simple embedding of $T_H$.
		\end{lemma}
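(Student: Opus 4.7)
The plan is to first invoke Lemma~\ref{lem:findcrossable} independently for each added edge $s_it_i \in \newedges$ to identify all parts of $H$-torn edges that are crossable for $s_it_i$. Reading off consecutive pairs of such parts on the same torn edge then yields the thread set $T_H$ directly from the definition of threads, and by Lemma~\ref{lem:crossable} its cardinality is bounded by a function of $k+\ell$. This step is \FPT. Recall from earlier in this section that every part of an $H$-torn edge has an endpoint on $\partial H$, so a simple embedding of $T_H$ amounts to simultaneously drawing a bounded collection of chords inside $H$.

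By Observation~\ref{obs:independent}, every thread pairs two endpoints lying on the same connected component of $\partial H$. I would therefore process each boundary component $C$ separately inside a thin annular neighborhood $A_C \subseteq H$ of $C$ that is disjoint from all other boundary components. This reduces the problem to embedding, for each $C$, the corresponding family of chords in the topological disk bounded by $C$, whose marked points on the boundary are precisely the thread endpoints associated with $C$.

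For a single disk with a bounded set of prescribed endpoint pairs on its boundary, a simple embedding always exists and admits a direct construction: fix a homeomorphism $h$ from the topological disk to the unit disk, realize each thread as a straight chord in the image, and pull back by $h^{-1}$. Any two distinct straight chords of the unit disk meet in at most one point and no chord self-intersects, so the resulting collection of curves inside $A_C$ is a simple drawing; since the collars $A_C$ of different boundary components are pairwise disjoint, taking the union over all $C$ still yields a simple drawing.

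The main obstacle is not the existence of the embedding, which follows cleanly from the chord-diagram construction above, but rather its explicit realization inside the actual polygonal region $H$ cut out by \drawing. I would handle this by triangulating the relevant collars in polynomial time and routing each thread as a polygonal arc through the triangulation that follows the combinatorial trajectory prescribed by the chord realization in the model disk; standard general-position perturbations ensure that the realized curves pairwise cross at most once. Since $|T_H|$ is bounded by an \FPT\ function of $k+\ell$ and the auxiliary geometric operations are polynomial in the size of $\pdrawing$, the overall procedure runs in \FPT\ time parameterized by $k+\ell$.
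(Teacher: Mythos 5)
Your proposal follows essentially the same route as the paper: compute $T_{H,s_it_i}$ via Lemma~\ref{lem:findcrossable} for each $i\in[k]$, assemble $T_H$ by traversing the $H$-torn edges, and use Observation~\ref{obs:independent} to decompose the embedding task into independent chord-drawing problems, one per connected component of $\partial H$, each solvable in time polynomial in $|T_H|$. The paper likewise reduces to ``embedding chords into a distorted cycle in a simple way.''

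One imprecision worth flagging: you speak of pulling back straight chords from a homeomorphic image of ``the topological disk bounded by $C$'' and then assert the resulting curves lie inside the collar $A_C$. These two regions do not coincide: for the outer boundary component, the disk bounded by $C$ contains the interior holes of $H$, so straight chords may exit $H$; for an inner boundary component the disk bounded by $C$ on the $H$-side is not even a disk, and the collar $A_C$ itself is an annulus, which admits no homeomorphism to a disk. The fix is standard and presumably what you intend: cut the annulus $A_C$ along an arc between its two boundary circles at a point of $C$ carrying no thread endpoint, obtain a genuine disk with all marked points on one boundary arc in the induced linear order, and route straight chords there; this preserves the interleaving pattern (hence pairwise crossing counts) of the cyclic chord diagram. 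Stating this explicitly would close the gap, although the paper's own proof is no more detailed on this point. Finally, note that the paper also records (and later uses, in Claim~\ref{clm:dstar}) that the embedding can be chosen so that some single cell inside $H$ is incident to a thread or a boundary piece for every component of $\partial H$; your collar-based construction produces exactly such a central cell, but you may wish to say so if you intend this lemma to feed the treewidth bound of Lemma~\ref{lem:patchwork}.
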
}
	\later{%
		\begin{proof}
			We can iterate over \(i \in [k]\) and compute \(T_{H,s_it_i}\) by invoking Lemma~\ref{lem:findcrossable} to find all parts of \(H\)-torn edges which are crossable for \(s_it_i\) in \FPT\ time, and then traversing the \(H\)-torn edges to determine which of them are connected by threads.
			This immediately allows to compute \(T_H\).
			
			We also have to find an embedding of $T_H$ into each $H$.
			Since no thread in $T_H$ ever has its endpoints on 
			two different maximally connected pieces of the boundary by Observation~\ref{obs:independent}
			we can consider such maximally connected piece of the boundary of $H$ independently.
			For each of them embedding the threads in $T_H$ incident to vertices on that piece of boundary 
			is equivalent to embedding chords into a distorted cycle in a simple way which can be done in time polynomial in \(|T_H|\).
			We also remark, that one can assume that there is a cell \(\mathfrak{c}\) of the embedding of \(T_H\) inside \(H\) such that for each maximally connected piece of boundary of \(H\) either a thread incident to it or
			to a subset of it is present on the boundary of \(\mathfrak{c}\).
		\end{proof}}
	
	\both{
		
	For the simple embedding of \(T_H\) into \(H\) computed in Lemma~\ref{lem:stitches}, define the set of \emph{stitches} \(S_H\) of \(H\) as the planarization of the threads in this embedding.

}

	\section{The Patchwork Graph}
	\label{sec:patchworkdef}
	\latertitle{\section{Extended Version of Section~\ref{sec:patchworkdef}}}
	\both{After identifying a bounded number of stitches in each hole, we are finally able to define the \emph{patchwork graph} and prove desirable properties which we will use in our final application of Courcelle's theorem.
	An illustration of the patchwork graph is provided in Fig.~\ref{fig:pwgraph}.
	The following definition also doubles as a description of how to construct the patchwork graph from a given drawing.
	We remark that, unlike $\drawing$, the patchwork graph might be disconnected.
	
	\begin{figure}
		\centering
		\includegraphics[page=4]{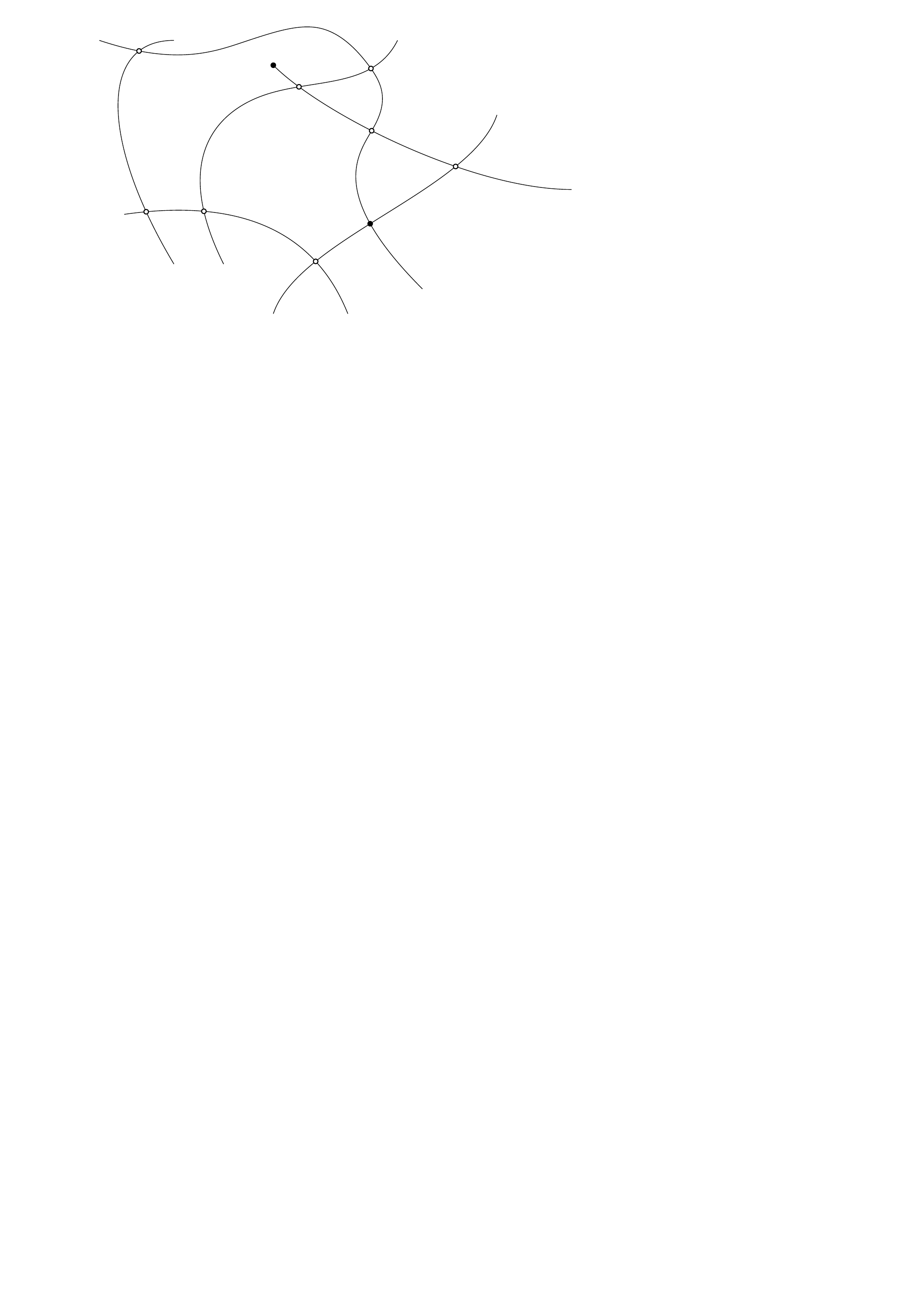}
		\caption{Illustration of a patchwork graph $\patchwork$. 
			The remainder of \patchwork is hinted in beige. 
			Black disks are original vertices.
			Colored disks are endpoints of edges in $\newedges$.
			Crossing vertices are crosses. 
			Green and white disks represent the edge segment/shadow vertices. 
			Cell vertices are white squares.
			Holes are shaded in gray and stitches drawn with thick, dashed curves.
		}
		\label{fig:pwgraph}
	\end{figure}
	
	\begin{definition}
	\label{def:patchwork}
	The \emph{patchwork graph} \(\patchwork\) and its embedding \(\mathcal{\patchwork}\) are given by the labeled graph derived from \drawing in the following steps:
		\begin{enumerate}
			\item Planarize \drawing and label the vertices which are newly introduced by this as \emph{crossing vertices}.
			Label vertices which correspond to vertices of \(G\) as \emph{real vertices}.
			Additionally label each \(s_i\) and \(t_i\) with label \(i \in [k]\).
			\item Subdivide each edge \(e\) in the planarization \planarization{\drawing} of \drawing by \(k\) vertices\footnote{
				If \(k = 1\) we subdivide by \(2\) vertices for reasons that will become clear when we introduce \emph{tracking labels}.}
			\(v^e_1, \dotsc, v^e_k\) which are labeled as \emph{segment vertices}---each segment vertex of \(e\) will represent a possible crossing point of the drawing of one of the \(k\) edges in \newedges and \(e\).
		\item Inside each face \(f\) of \planarization{\drawing}, introduce a new vertex \(v_f\) and label it as \emph{cell vertex}.
			\item Inside each face \(f\)  of \planarization{\drawing}, trace the boundary of \(f\) creating a curve at \(\varepsilon\)-distance and create a vertex labeled as \emph{shadow vertex} on this curve every time an endpoint of an edge in \newedges or a segment vertex is encountered.
			Insert two edges for each shadow vertex; one
			connecting the shadow vertex to the corresponding endpoint of an edge in \newedges or segment vertex; and one connecting the shadow vertex to \(v_f\).
			Note that 
			multiple shadow vertices can be introduced for the same vertex in \(G\) (e.g. the orange vertex in Fig.~\ref{fig:pwgraph}).
				Shadow vertices allow to distinguish different ways, more formally positions in the rotation around an endpoint, of accessing that endpoint via the inserted drawing of an edge in F; this is where the connectivity of \drawing is used (see Fig. 5).
%
			In this way each shadow vertex of an endpoint corresponds to an \emph{access direction}.
			\item Delete every vertex that is in the interior of a hole \(H\).
			\item For each hole \(H\) insert all stitches \(S_H\) for \(H\) into the interior of \(H\) and label the inserted vertices as \emph{crossing vertices}.\appendixandlong{\footnote{This means they receive the same label as vertices introduced by planarizing \drawing.}}
			\item For technical reasons which will become apparent later (when we introduce \emph{tracking labels}), we replace each edge in \(S_H\) by a path consisting of two vertices and three edges and label the inserted vertices as \emph{segment vertices}.\appendixandlong{\footnote{This means they receive the same label as vertices introduced by subdividing edge segments of the planaritation of \drawing.}}
		\end{enumerate}
	\end{definition}

	\begin{figure}
		\centering
		\includegraphics[page=5]{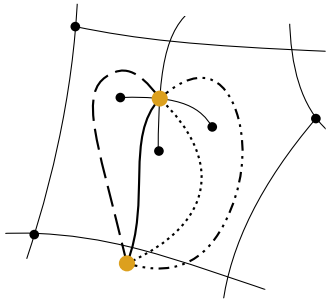}
		\caption{Illustration for different access directions. Each hypothetical drawing (indicated as thick dashes, normal, dotted, and dash-dotted lines) of the added edge between the orange vertices crosses the same edge segment of \(\planarization{\drawing}\) but separates the black vertices differently.
		In connected initial drawings, ways of separating vertices of the same cell by the drawing of an added edge are completely determined by potential crossing points of that drawing and its positions in the rotations around each of its endpoints.
		This is not the case for disconnected initial drawings.}
		\label{fig:shadowexpl}
	\end{figure}
	
	We introduce additional crossability labels for segment vertices in the following way.
	For every segment vertex $v$ corresponding to an edge segment $\sigma$ of edge $e \in E(G)$ 
	we label $v$ as \emph{crossable} for some edge $s_it_i \in F$ if either
	$e$ is not $H$-torn for any hole $H$, or
for \emph{each} hole \(H\) in \drawing for which \(e\) is \(H\)-torn, $\sigma$ lies on a part (when considering parts arising from the removal of the interior of \(H\)) of $e$ that is crossable for $s_it_i$.
}

	\both{%
	\begin{lemma}[$\star$]	
		\label{lem:crossable-label}
		If there exists a solution for the considered \SDE instance, then there is a solution such that all segment vertices which correspond to edge segments of an edge that is crossed by the drawing of \(s_it_i \in \newedges\) in the solution are labeled as crossable for \(s_it_i\).
	\end{lemma}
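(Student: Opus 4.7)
The plan is to obtain the claimed solution by modifying an arbitrary solution so that the drawing of each added edge is, in itself, a solution curve in the sense of Definition~\ref{def:solcurve}. Once this is done, the crossability labeling follows directly from the definitions.

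First, given any solution for the instance, apply Lemma~\ref{lem:unnecessary} to replace it by a reasonable solution, that is, a solution in which the drawing of no added edge contains a removable detour around any cell of \drawing. Let $\gamma_i$ denote the drawing of $s_it_i \in \newedges$ in this reasonable solution.

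Next, I verify that each such $\gamma_i$ is itself a solution curve for $s_it_i$. The first two defining properties are immediate: $\gamma_i$ is a simple $s_i$-$t_i$ curve with at most $\ell_i$ crossings because it is part of a valid \SDE-solution. For the third property, suppose $\gamma_i$ intersects a cell $\mathfrak{c}'$ of \drawing in more than one maximal connected subcurve, and let $\zeta$ be a maximal connected subcurve of $\gamma_i$ outside $\mathfrak{c}'$ between two such intersections. Then $\zeta$ is a $\mathfrak{c}'$-detour of $\gamma_i$ in the sense of Definition~\ref{def:detour}; since our solution is reasonable, this detour must be unremovable, which is precisely the statement that there is an added edge $s_jt_j$ with $j \neq i$ exactly one of whose endpoints lies in the $\zeta$-avoided part of \drawing. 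Hence $\gamma_i$ satisfies all three conditions of Definition~\ref{def:solcurve} and is a solution curve for $s_it_i$.

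Finally, consider any segment vertex $v$ corresponding to an edge segment $\sigma$ of an edge $e \in E(G)$ such that $\sigma$ is crossed by $\gamma_i$. If $e$ is not $H$-torn for any hole $H$, then $v$ is crossable for $s_it_i$ by definition. Otherwise, fix any hole $H$ for which $e$ is $H$-torn; the segment $\sigma$ lies in exactly one part of $e$ in the decomposition obtained by removing the interior of $H$. Since $\gamma_i$ is a solution curve for $s_it_i$ crossing $\sigma$, this part is crossable for $s_it_i$ by the definition of crossability. As $H$ was arbitrary, $v$ meets the labeling condition for being crossable for $s_it_i$, completing the proof. The only subtlety, which is handled by the above identification of $\zeta$ with a $\mathfrak{c}'$-detour, is the possibility that $\zeta$ degenerates to a single point; but this case is already allowed in the definition of detour and of solution curve, so no separate argument is required.
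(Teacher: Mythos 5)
Your proof is correct and follows essentially the same approach as the paper: apply Lemma~\ref{lem:unnecessary} to pass to a solution with no removable detours, observe that the drawings of the added edges in such a solution are solution curves in the sense of Definition~\ref{def:solcurve}, and conclude from the definition of crossability. The only difference is presentational: the paper argues by contradiction and asserts the ``drawing corresponds to a solution curve'' in a single sentence, while you spell out the verification of all three defining conditions of a solution curve, including the identification of the subcurves $\zeta$ with $\mathfrak{c}'$-detours and the degenerate single-point case — this is sound and arguably clearer.
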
}
	\later{%
	\begin{proof}
		By Lemma~\ref{lem:unnecessary} we only need to consider 
		solutions without removable detours.
	
		Assume for contradiction that there is a solution without removable detours and
		the drawing of $s_it_i$ crosses an edge segment $\sigma$ which is not labeled as crossable for \(s_it_i\).
		Then $\sigma$ lies on an edge part \(e\) of an \(H\)-torn edge which is not crossable for \(s_it_i\) for some hole \(H\) in \drawing.
		Since we consider a solution without removable detours
		the drawing of \(s_it_i\) in the solution corresponds to a solution curve for \(s_it_i\) that crosses the edge part \(e\), 
		contradicting the fact that \(e\) is not crossable.
	\end{proof}}

	\both{
	Note that Lemmas~\ref{lem:findcrossable} and~\ref{lem:stitches} and 
	Definition~\ref{def:patchwork} allow us to compute the patchwork graph in \FPT\ time.
%
	The two most important properties of the patchwork graph are encapsulated in the following two lemmas. 
	The proof of the first essentially relies on obtaining a bound on the diameter of each connected component of the patchwork graph---a task which is intuitively clear, but requires us to overcome some technical challenges due to the addition of stitches.}

	\both{%
	\begin{lemma}[$\star$]
		\label{lem:patchwork}
		The patchwork graph \patchwork has treewidth bounded by 
		\(3(2+4(k-1))(4\ell + 8(kf(k,\ell) - 1))\),
		where \(f(k,\ell)\) is the bound 
		on the number of crossable edge parts for a single added edge and hole obtained in Lemma~\ref{lem:crossable}.
	\end{lemma}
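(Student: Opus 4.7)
The plan is to establish the bound in two stages: first showing that $\patchwork$ is planar, and then bounding the diameter of each of its connected components. The treewidth bound will then follow from the classical result of Robertson and Seymour that any planar graph of diameter $d$ has treewidth at most $3d$.

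For planarity, I would observe that the construction starts from the plane graph $\pgraph$ and proceeds by modifications that preserve planarity. Steps 2--4 of Definition~\ref{def:patchwork} either subdivide edges or introduce vertices and edges strictly inside existing faces; Step 5 only deletes vertices. For Steps 6--7, Lemma~\ref{lem:stitches} provides a simple embedding of the threads of each hole $H$ inside $H$, so the stitches $S_H$ and their further subdivision in Step 7 can be drawn entirely inside $H$ without crossings, extending the plane embedding inherited from $\pgraph$.

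For the diameter bound, I would fix a connected component $C$ of $\patchwork$ and analyze distances through the natural dual structure of $C$: cell vertices of cells sharing an edge segment are at distance $4$ in $\patchwork$ (cell--shadow--segment--shadow--cell), and cell vertices of cells lying on opposite sides of a hole $H$ in $C$ are at constant distance via any one stitch of $S_H$. Every non-cell vertex of $\patchwork$ is within a constant number of edges from some cell vertex, so it suffices to bound the number of such ``dual hops'' between cell vertices. The first summand $4\ell$ captures the non-hole part: by the definition of far cells, every cell represented in $C$ lies at dual distance at most $\ell$ from a cell containing an endpoint of some added edge, and endpoints appearing in the same component can be chained together via bounded-length routes. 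The second summand $8(kf(k,\ell)-1)$ captures the traversals through holes: combining Lemma~\ref{lem:crossable} over the $k$ added edges, each hole has at most $kf(k,\ell)$ threads, and the central-cell property from Lemma~\ref{lem:stitches} guarantees that one can cross between any two boundary components of a hole in at most $O(kf(k,\ell))$ stitch steps, each of which becomes a constant-length path after Step 7. The multiplicative factor $(2+4(k-1))$ accounts for the overhead of Step 2 (each segment subdivided by $k$ vertices) and Step 7 (each stitch subdivided into three pieces), which is the cost of traversing a fully subdivided segment in $\patchwork$.

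The main obstacle will be the hole-traversal analysis. A naive bound on the number of stitch steps required to go from one boundary component of a hole to another could scale with the combinatorial complexity of the embedded stitches rather than with $kf(k,\ell)$; one must specifically leverage the central cell structure from Lemma~\ref{lem:stitches}, which ensures that all threads touch a single central cell inside each hole, to collapse this to a linear-in-parameters quantity. A secondary technical point is the need to handle possibly disconnected patchworks: one verifies the bound per connected component, which is exactly what the definition of treewidth requires, and then applies $\tw \leq 3 \cdot \mathrm{diam}$ to obtain the stated bound.
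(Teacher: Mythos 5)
Your proposal correctly identifies the two-stage structure (planarity, then diameter bound yielding treewidth via Robertson--Seymour), and the planarity argument is right. But the diameter analysis has a genuine gap at exactly the point where the paper's proof does the real work, and you misidentify where the multiplicative factor $(2+4(k-1))$ comes from.

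The gap is in the sentence ``endpoints appearing in the same component can be chained together via bounded-length routes.'' This is not a free observation: a cell that is $s_1t_1$-near but $s_2t_2$-far need not be at all close to $s_2$ or $t_2$, and a priori $s_1$ and $s_2$ could be arbitrarily far apart in the dual. The paper resolves this with a careful induction on the number of added edges. It defines auxiliary graphs $\patchwork_{k'}$ in which holes are computed with respect to only the first $k'$ added edges (and no stitches), observes that holes shrink as $k'$ grows, and proves (Claim~\ref{clm:bddiameter}) that \emph{any} planar graph obtained by filling the holes of $\patchwork_{k'}$ with graphs of diameter $d$ has diameter at most $(2+4(k'-1))(4\max_{i\le k'}\ell_i + d + 4)$. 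The key inductive move is: vertices inside a hole of $\patchwork_{k'-1}$ that is no longer a hole in $\patchwork_{k'}$ must be within distance $\le 4\ell_{k'}+2$ of $s_{k'}$, so they can reach the boundary of that old hole in $\le 2(4\ell_{k'} + d + 4)$ steps, and then the induction hypothesis (applied with a suitable substitute fill for those vanished holes) carries the rest. This chaining is precisely what lets one hop from a region controlled by $s_{k'}$ to the rest of the graph, and it is what produces the factor $(2+4(k-1))$: base cost $2$ for $k'=1$, plus $4$ per additional added edge. Your attribution of this factor to the Step 2 and Step 7 subdivision overhead is incorrect --- traversing one edge segment in $\patchwork$ costs a \emph{constant} four hops (cell--shadow--segment--shadow--cell) no matter how many segment vertices subdivide it, so that overhead does not scale with $k$.

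Secondarily, the paper does not argue per connected component of $\patchwork$. Instead it builds a connected planar \emph{supergraph} $\patchwork^+$ by adding a star vertex into each internal face created by the stitches, bounds the diameter of $\patchwork^+$, and concludes via $\tw(\patchwork)\le\tw(\patchwork^+)\le 3\cdot\mathrm{diam}(\patchwork^+)$. Your per-component plan could in principle also work, but it would require reproving connectivity-through-holes inside each component, which is exactly what $\patchwork^+$ is designed to sidestep. The hole-traversal part of your sketch (the $8(kf(k,\ell)-1)$ term via the central-cell structure from Lemma~\ref{lem:stitches}) is essentially correct and matches the paper's Claim~\ref{clm:dstar}.
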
}
	
	\later{
	\begin{proof}
		We show that \patchwork is a subgraph of a planar graph with diameter bounded in \((2+4(k-1))(7\ell + 4kf(k,\ell))\).
		Then the claim follows from \cite{RobertsonS84}.
		Planarity of \patchwork itself follows directly from the planarity of $\planarization{\drawing}$ and
		the way we constructed \patchwork from $\planarization{\drawing}$ as described in Definition~\ref{def:patchwork}.
		We define a supergraph \(\patchwork^+\) of \(\patchwork\) as follows.
		Add a vertex into each cell created in a hole $H$ when inserting the stitches $S_H$ and 
		connect if to the vertices on the boundary of that cell.
		Clearly, \(\patchwork^+\) is planar and,
		moreover, $\patchwork^+$ is connected.
		
		We first show that the diameter of \(\patchwork^+\) is at most \((2+4(k-1))(4\ell + d^* + 4)\) 
		where \(d^*\) is an upper bound for the distance between the 
		vertices on the boundary of any hole \(H\) to vertices in \(H\), and then proceed to show that \(d^* \leq 8(kf(k,\ell) - 1)\).
		
		For the first step we proceed inductively.
		Let $\patchwork_{k'}$ be constructed analogously to the patchwork graph \patchwork, but considering only the set $F_{k'} = F \setminus \{s_it_i \mid i > k'\}$ as the set of added edges for the definition of holes, and without adding threads and stitches, i.e., we skip steps six and seven in Definition~\ref{def:patchwork}.
		For example, $P_1$ is just the patchwork graph constructed from $\drawing$
		assuming there is only one added edge $s_1t_1$ and each hole in $P_1$ is empty.
		Observe that each hole of \(\patchwork_{k'}\) is a subset of a hole in $\patchwork_{k'-1}$.
		
		Instead of directly bounding the diameter of $\patchwork^+$ we prove the following,
		slightly stronger, claim.
		\begin{newcounterclaim}
			\label{clm:bddiameter}
			Any connected graph $\patchwork_{k'}^+$ that arises from \(\patchwork_{k'}\) by inserting an arbitrary graph of bounded diameter \(d\) into each hole of \(\patchwork_{k'}\) has diameter at most \((2+4(k'-1))(4\max_{i \in [k']}\ell_i + d + 4)\) for every $k' \leq k$ where
			$d$ is the maximum diameter of any plane graph inserted into a hole of $\patchwork_{k'}$ to construct $\patchwork_{k'}^+$.
			Moreover there is a \(u\)-\(v\)-path of at most this length for arbitrary \(u\) and \(v\) in \(\patchwork_{k'}^+\) that can be subdivided into at most three subpaths,
			\begin{itemize}
				\item the first of which is entirely inside the hole of \(\patchwork_{k'}\) inside which \(u\) is (if \(u\) is inside a hole),
				\item the second of which does not enter any hole of \(\patchwork_{k'}\),
				\item and the third of which is entirely inside the hole of \(\patchwork_{k'}\) inside which \(v\) is (if \(v\) is inside a hole).
			\end{itemize}
		\end{newcounterclaim}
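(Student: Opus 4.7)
The plan is to prove Claim~\ref{clm:bddiameter} by induction on $k'$, driven by the following observation: a cell of \drawing is non-hole in $\patchwork_{k'}$ exactly when there exists some $i \in [k']$ for which that cell lies within $\ell_i$ of both $U_{s_i}$ and $U_{t_i}$ in the dual graph \dgraph. Consequently, any non-hole vertex of $\patchwork_{k'}$ is at distance at most roughly $4\ell_i$ from both $s_i$ and $t_i$ in $\patchwork_{k'}$, because each step in the dual translates into a short cell-to-cell traversal of at most four edges of $\patchwork_{k'}$ (segment vertex $\to$ shadow vertex $\to$ cell vertex $\to$ shadow vertex $\to$ segment vertex). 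These endpoints then serve as ``hubs'' through which shortest paths are routed, while the $+d+4$ absorbs one crossing of an inserted hole-subgraph together with a constant overhead.

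For the base case $k'=1$, I would first show that every vertex of $\patchwork_1^+$ lies within $4\ell_1 + d + 4$ edges of $s_1$. For a non-hole vertex this is immediate from the observation above. For a vertex inside a hole $H$, one first uses the inserted subgraph of diameter at most $d$ to reach a vertex on the boundary of $H$, and a constant overhead then suffices to exit the hole into an adjacent non-hole cell and proceed to $s_1$ as before. Routing the $u$-$v$-path through $s_1$ yields total length at most $2(4\ell_1 + d + 4)$, and the three-subpath structure falls out by construction: a first piece inside the hole containing $u$, a middle piece routed through $s_1$ entirely in the non-hole region, and a last piece inside the hole containing $v$.

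For the inductive step, I would view $\patchwork_{k'}^+$ as $\patchwork_{k'-1}^+$ with appropriate subgraphs inserted into each of its holes $H^{k'-1}$, and apply the inductive hypothesis with a new value of the parameter $d$. Each hole $H^{k'-1}$ of $\patchwork_{k'-1}$ decomposes in $\patchwork_{k'}^+$ into (possibly several) smaller holes $H^{k'} \subseteq H^{k'-1}$, each carrying its own inserted subgraph of diameter at most $d$, together with additional cells of \drawing that are non-hole in $\patchwork_{k'}$. The key point is that these additional cells failed to be non-hole in $\patchwork_{k'-1}$, so the index witnessing their non-hole status must be $i = k'$; hence they lie within $\ell_{k'}$ of both $U_{s_{k'}}$ and $U_{t_{k'}}$. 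The restriction of $\patchwork_{k'}^+$ to $H^{k'-1}$ is therefore essentially a single-edge patchwork for $s_{k'}t_{k'}$ with nested holes, and the base-case argument supplies a diameter bound in terms of $\ell_{k'}$, $d$, and the traversal constants. Plugging that bound into the induction hypothesis applied to $\patchwork_{k'-1}^+$ yields the desired diameter bound for $\patchwork_{k'}^+$, while concatenating the three-subpath decomposition from the hypothesis with the internal traversals inside the $\patchwork_{k'}$-holes containing $u$ and $v$ produces the required path structure.

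The main obstacle will be carrying out the constant bookkeeping so that the composed bound matches exactly $(2+4(k'-1))(4\max_{i \in [k']}\ell_i + d + 4)$. Delicate cases include: endpoints such as $s_{k'}$ or $t_{k'}$ themselves lying inside a hole $H^{k'-1}$ of $\patchwork_{k'-1}$ (so that the ``hub'' is only reachable through the inserted subgraph); cells in $H^{k'-1}\setminus H^{k'}$ whose dual distance to $U_{s_{k'}}$ or $U_{t_{k'}}$ is realised by paths hugging the boundary of $H^{k'-1}$; and the extra overhead introduced by the subdivision of stitches in step~$(7)$ of Definition~\ref{def:patchwork}. Handling these carefully, together with a judicious choice of hub at every inductive step, is what secures the multiplicative growth of the factor $2+4(k'-1)$ by exactly $4$ per inductive increment.
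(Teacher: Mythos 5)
Your base case matches the paper's exactly, and your observation that the cells exposed when passing from $\patchwork_{k'-1}$ to $\patchwork_{k'}$ must lie within dual distance $\ell_{k'}$ of both $U_{s_{k'}}$ and $U_{t_{k'}}$ is indeed the engine of the inductive step. However, the specific mechanism you propose for that step does not produce the claimed bound. You want to bound the diameter $d''$ of the restriction of $\patchwork_{k'}^+$ to each old hole $H^{k'-1}$ via a base-case-style argument, obtaining roughly $d'' \approx 2(4\ell_{k'}+d+4)$, and then apply the inductive hypothesis to $\patchwork_{k'-1}^+$ with $d''$ in place of $d$. This yields $(2+4(k'-2))\bigl(4\max_{i\in[k'-1]}\ell_i + d'' + 4\bigr)$, and the coefficient of $d$ in this expression is $2(2+4(k'-2)) = 8k'-12$, which already for $k'\geq 3$ exceeds the target coefficient $2+4(k'-1)=4k'-2$. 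So the bookkeeping obstacle you flag is not a matter of care — the recursion you set up multiplies the $d$-dependence at every level and cannot close.

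The paper avoids this by keeping $d$ out of the recursion. It invokes the induction hypothesis on $\patchwork_{k'-1}^+$ with a \emph{constant-diameter placeholder} (a star, diameter $2$) inserted into precisely those holes of $\patchwork_{k'-1}$ that cease to be holes of $\patchwork_{k'}$, so the parameter passed to the IH is only $\max\{2,d\}$, not the full restriction diameter. The three-subpath decomposition in the IH is what makes this legal: the middle subpath never enters a hole, and the end-subpaths are confined to the holes containing $u$ and $v$, so whenever $u,v$ themselves lie outside a star-patched hole, the witnessing path in the star-patched graph avoids the stars entirely and hence lives in $\patchwork_{k'}^+$. The actual descent from $u$ (resp.\ $v$) to the boundary of the $\patchwork_{k'-1}$-hole containing it is then handled \emph{additively}, contributing $2\cdot 2(4\ell_{k'}+d+4)$ to the total, which is exactly what grows the prefactor by $4$. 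You should also note that the hub $s_{k'}$ (and $t_{k'}$) need not lie inside $H^{k'-1}$, so the ``restriction to $H^{k'-1}$ is a single-edge patchwork'' picture does not literally hold; the paper instead argues about distances from hole-internal vertices to the \emph{boundary} of the hole, sidestepping this.
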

		\begin{proof}
			We show the claim by induction on $k' \leq k$.
			For \(k' = 1\), by construction any cell vertex is at distance at most \(4\ell_1 + 2\) from \(s_1\) in \(\patchwork_{1}^+\).
			Furthermore, any vertex of \(\patchwork_{1}^+\) is by construction at distance at most \(d + 2\) from a cell vertex;
			for vertices outside holes, the distance is easily seen to be at most two to the cell vertex 
			of the cell of \drawing containing that vertex;
			for vertices inside holes, the distance is at most \(d + 2\) to the closest cell vertex 
			of a cell that shares some part of its boundary with the respective hole.
			This completes the base case, as any pair of vertices is connected via a path with the desired properties
			of length at most \(2(4\ell_1 + d + 4)\) using \(s_1\) and a cell vertices closest to each of the vertices in the pair.
			
			Now assume that any connected graph $\patchwork_{k'-1}^+$ that arises from \(\patchwork_{k' - 1}\) by inserting an arbitrary graph of bounded diameter \(d'\) into each hole of \(\patchwork_{k' - 1}\) has diameter at most \((2+4(k'-2))(4\max_{i \in [k' - 1]}\ell_i + d' + 4)\),
			and that for every \(u,v \in V(\patchwork_{k' - 1}^+)\) there is a \(u\)-\(v\)-path of this length in \(\patchwork_{k' - 1}^+\) that can be subdivided into at most three subpaths,
			\begin{itemize}
				\item the first of which is entirely inside the hole of \(\patchwork_{k' - 1}\) inside which \(u\) is (if \(u\) is inside a hole),
				\item the second of which does not enter any hole of \(\patchwork_{k' - 1}\),
				\item and the third of which is entirely inside the hole of \(\patchwork_{k' - 1}\) inside which \(v\) is (if \(v\) is inside a hole).
			\end{itemize}
			Consider the overlay of $\patchwork_{k'-1}$ and $\patchwork_{k'}^+$ in the plane and identify 
			vertices and edges that are in both graphs.
			Now, all vertices in \(V(\patchwork_{k'}^+) \setminus V(\patchwork_{k' - 1})\) lie inside holes of \(\patchwork_{k' - 1}\).
			For any pair of vertices of \(\patchwork_{k'}^+\) inside holes of \(\patchwork_{k' - 1}\) which are not holes in \(\patchwork_{k'}\), we can argue in a similar way as in the base case that they are connected by a path of length at most \(2(4\ell_{k'} + d + 4)\) in \(\patchwork_{k'}^+\).
			In particular, any vertex of $\patchwork_{k'}^+$
			that lies inside such a hole of $\patchwork_{k'-1}$ has distance at most \(2(4\ell_{k'} + d + 4)\)
			to the boundary of that hole.
			It remains to consider vertices of \(\patchwork_{k'}^+\) which are in \(\patchwork_{k' - 1}\) or inside holes of \(\patchwork_{k' - 1}\) which coincide with holes of \(\patchwork_{k'}\).
			
			Here we invoke the induction hypothesis on \(\patchwork_{k' - 1}^+\) where the graph inserted into a hole \(H\) of \(\patchwork_{k' - 1}\) is
			\begin{itemize}
				\item the same as the ones inserted into the respective hole of \(\patchwork_{k'}\) in \(\patchwork_{k'}^+\), if \(H\) is also a hole for \(\patchwork_{k'}\); and
				\item a star, which connects to all vertices of \(\patchwork_{k' - 1}\) on the boundary of \(H\) otherwise.
			\end{itemize}
			From this we get that \(\patchwork_{k'-1}^+\) has diameter at most \((2+4(k'-2))(4\max_{i \in [k' - 1]}\ell_i + \max\{2,d\} + 4)\),
			and that for every \(u,v \in V(\patchwork_{k' - 1}^+)\) there is a \(u\)-\(v\)-path of this length in \(\patchwork_{k' - 1}^+\) that can be subdivided into at most three subpaths,
			\begin{itemize}
				\item the first of which is entirely inside the hole of \(\patchwork_{k' - 1}\) inside which \(u\) is (if \(u\) is inside a hole),
				\item the second of which does not enter any hole of \(\patchwork_{k' - 1}\),
				\item and the third of which is entirely inside the hole of \(\patchwork_{k' - 1}\) inside which \(v\) is (if \(v\) is inside a hole).
			\end{itemize}
			Hence vertices in \(\patchwork_{k'}^+\) which are also in \(\patchwork_{k' - 1}\) or inside holes of \(\patchwork_{k' - 1}\) which coincide with holes of \(\patchwork_{k'}\) have distance at most \((2+4(k'-2))(4\max_{i \in [k' - 1]}\ell_i + \max\{2,d\} + 4)\) in \(\patchwork_{k'}^+\) because by there are paths witnessing this that do not use the stars inserted into holes of \(\patchwork_{k' - 1}\) in \(\patchwork_{k' - 1}^+\) that do not coincide with holes in \(\patchwork_{k'}\) and hence correspond to paths in \(\patchwork_{k'}^+\).
			
			All together the described paths can be combined to form paths in \(\patchwork_{k'}^+\) with the desired structural properties and of length at most \((2+4(k'-1))(4\ell + d + 4)\) between any pair of vertices in \(\patchwork_{k'}^+\).
		\end{proof}
	
		Next we show that \(d^* \leq 8(kf(k,\ell) - 1)\).
		Recall that vertices of \(\patchwork^+\) inside $H$ are either vertices on the boundary of \(H\), crossing vertices introduced when inserting the stitches \(S_H\), or 
	 	newly inserted cell vertices inside \(H\) that were created when constructing $\patchwork^+$ from $\patchwork$.
		
		\begin{newcounterclaim}
			\label{clm:dstar}
			\(d^*\) is at most \(8(\max_{H\text{ hole in \drawing}}|T_H|-1) \leq 8(kf(k,\ell) - 1)\).
		\end{newcounterclaim}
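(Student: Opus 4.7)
The plan is to bound $d^*$ by analyzing the planar arrangement of stitches $S_H$ inside $H$ and showing that the combinatorial ``depth'' of this arrangement inherits a bound from $|T_H|$.

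First I would set up the relevant terminology. Inside $H$, the planar embedding of the threads $T_H$---whose planarization produces $S_H$ by Lemma~\ref{lem:stitches}---partitions $H$ into a finite collection of cells. In $\patchwork^+$ each such cell $f$ carries a cell vertex $v_f$ adjacent to \emph{every} vertex on $\partial f$, including the crossing vertices and the two segment vertices introduced by the subdivision of each stitch in step~7 of Definition~\ref{def:patchwork}, as well as any vertex of $\patchwork$ on $\partial H \cap \partial f$. Consequently every vertex on $\partial H$ lies on the boundary of some cell touching $\partial H$ and is at distance $1$ from that cell's cell vertex; symmetrically, every vertex inside $H$ is at distance $1$ from the cell vertex of a cell it bounds.

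Next I would estimate the per-hop cost between adjacent cells. If two cells $f$ and $f'$ share a common piece of a stitch---which, after the subdivision of step~7, is a path of length $3$ between two crossing vertices (or between a crossing vertex and a boundary endpoint of a thread)---then both $v_f$ and $v_{f'}$ are adjacent to every internal vertex of this path. In particular $v_f - u - v_{f'}$ has length $2$ for any such vertex $u$, so the distance between $v_f$ and $v_{f'}$ in $\patchwork^+$ is at most $2$.

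The main technical step is to bound the depth of the cell-adjacency graph inside $H$ by $|T_H|-1$. Here one argues by a peeling argument: pick a thread $\tau \in T_H$ whose endpoints lie on $\partial H$ and which is ``outermost'' in a chosen direction, delete it, and observe that the cells it used to separate become merged, reducing the number of threads by one while only collapsing distances. By Observation~\ref{obs:independent} we may treat each connected component of $\partial H$ separately, so outer threads always exist. Inducting on $|T_H|$ shows that every cell reaches a cell touching $\partial H$ in at most $|T_H|-1$ cell-adjacency hops.

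Combining, the distance in $\patchwork^+$ from any vertex on $\partial H$ to any vertex inside $H$ is at most $1 + 2(|T_H|-1) + 1 \leq 4(|T_H|-1)$ for $|T_H|\geq 2$, and $0$ when $|T_H|\leq 1$ since then $H$ contains no stitch-cell vertices of its own to speak of. Together with the trivial observation that $|T_H|\leq k\,f(k,\ell)$ (each added edge contributes at most $f(k,\ell)$ crossable parts to $T_H$ by Lemma~\ref{lem:crossable}, hence at most that many threads), this yields the claimed $d^* \leq 8(|T_H|-1)\leq 8(k f(k,\ell)-1)$, where the slack in the constant $8$ generously absorbs degenerate configurations (threads sharing an endpoint on $\partial H$, or crossings of high multiplicity after planarization). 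The hard part is precisely the depth bound of $|T_H|-1$: one has to verify that the peeling step really does decrease the cell-adjacency distance monotonically for every interior cell and does not create ``shortcuts'' that violate the inductive hypothesis in corner cases where several threads meet at a single crossing vertex.
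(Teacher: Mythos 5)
Your approach differs substantially from the paper's. You reason abstractly about the cell-adjacency graph of the arrangement of stitches inside $H$ (bounded depth, bounded per-hop cost), whereas the paper constructs explicit paths: it walks a crossing vertex along its own thread to the boundary (cost $3(|T_H|-2)$, since that thread meets at most $|T_H|-2$ other threads and each crossing is three subdivided edges away), then walks along the boundary of $H$ from one boundary cell vertex to another (cost $2|T_H|$, crossing each thread at most once), and for disconnected boundary pieces it routes through a single central cell vertex $c$ (available by the construction in Lemma~\ref{lem:stitches}). These pieces are then added up with explicit endpoint costs to yield $6(|T_H|-2)+2(|T_H|+1)+2 = 8(|T_H|-1)$.

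The gap in your argument is precisely the part you flag as ``the hard part.'' The bound ``cell-adjacency depth $\leq |T_H|-1$'' is asserted via a peeling argument that is not carried out: you would need to show that an ``outermost'' thread always exists and that deleting it strictly and monotonically reduces the depth of every interior cell, and it is not clear this holds when threads cross (a deleted outermost thread can be one whose removal merges several cells near the boundary while leaving inner cells at unchanged depth, or where several threads meeting at one crossing vertex break the notion of ``outermost''). Separately, the quantity $d^*$ is used in Claim~\ref{clm:bddiameter} as the diameter of the plane graph inserted into a hole, i.e.\ it must bound the distance between \emph{any} pair of vertices inside $H$, not just boundary-to-interior. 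Your computation $1 + 2(|T_H|-1)+1$ only addresses boundary-to-interior; you never bound the distance between two boundary vertices on far sides of $H$, which is a necessary separate piece (the paper's $2|T_H|$ along-the-boundary walk), nor do you actually handle the multiple-boundary-piece case beyond citing Observation~\ref{obs:independent}---the paper's construction of a central cell vertex $c$ reachable from every boundary piece is doing real work there. Finally, the jump from $2|T_H|$ to ``$\leq 4(|T_H|-1)$, with slack absorbing degenerate configurations to reach $8(|T_H|-1)$'' is not a proof: for $|T_H|=1$ the right-hand side is $0$ yet the hole contains vertices, and in general ``the slack absorbs it'' is where the actual accounting has to go.
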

%
		\begin{proof}
			Consider an arbitrary hole \(H\) in \drawing and assume at first that the boundary is only one connected piece.
			Then, the distance between a crossing vertex $v$ in \(H\) and 
			a cell vertex in a cell incident to some piece of the boundary of $H$ and 
			a thread containing $v$ is at most $3(|T_H| - 2)$. 
			This is, since such a thread is crossed by at most \(|T_H| - 2\) threads which 
			\(v\) does not lie on and 
			between every two crossings of threads we introduced two vertices.
			Furthermore, the distance between two arbitrary cell vertices  incident to the same boundary piece of $H$ is at most $2|T_H|$.
			This is, because two such cell vertices are separated by at most all threads in $T_H$.
			Consequently, one needs to traverse at most one segment or crossing vertex per thread and
			one cell vertex per cell incident to the boundary of $H$.
			Using such paths together with the at most one edge to connect segment and 
			cell vertices to a closest crossing vertex 
			we can connect any pair of vertices of \patchwork inside \(H\) by paths of length at most \(8|T_H| - 10\).

			Finally, if the boundary of $H$ consists of multiple pieces,
			recall that the threads incident to different pieces of the boundary of $H$ do not intersect (Observation~\ref{obs:independent}).
			Consequently, the derived stitches do not share a crossing or segment vertex.
			Moreover, by the construction in Lemma~\ref{lem:stitches} there is one cell incident to a crossing or segment vertex of some stitch for each piece of the boundary.
			This means, in $\patchwork^+$  there exists a cell vertex $c$ that is adjacent to all of them.
			Reaching this vertex $c$ from a cell vertex in some cell incident to some boundary piece of $H$ 
			needs at most a path of length $2|T_H| + 1$.
			The argument is as above, we at most cross every thread and in the end have to go to $c$ itself.
			Hence, $c$ can be reached from any cell vertex of a cell incident to the boundary of $H$ in at most $2|T_H| + 1$ steps.
			This gives the final bound of $6(|T_H| - 2) + 2(|T_H| + 1) + 2 = 8(|T_H| - 1)$ to connect any pair of vertices inside $H$.
		\end{proof}
	
		Now, the lemma follows by applying Claim~\ref{clm:bddiameter} with $k' = k$ and 
		the bound on $d^*$ obtained in Claim~\ref{clm:dstar}.
	\end{proof}}

	\both{The second lemma will later allow us to check whether two edge segments in $\patchwork$ belong to the same edge in $\drawing$ via an MSO formulation.}

	\both{%
	\begin{lemma}[$\star$]	
		\label{lem:crossable-conncetion}	
		Segment vertices which correspond to edge segments of the same edge in \(e \in E(G)\) and are labeled as crossable for \(s_it_i\) 
		are connected via paths in \patchwork consisting only of segment and crossing vertices which 
		correspond to segments and crossings of \(e\) and segments and crossings for threads that connect parts of \(e\).
	\end{lemma}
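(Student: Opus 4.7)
The plan is to construct an explicit walk in $\patchwork$ from $u$ to $v$ by tracing $e$ and using threads to bridge each hole interior that the subcurve between them crosses. Fix an orientation of $e$ so that $u$ precedes $v$, and look at the subcurve of $e$ from $u$ to $v$. This subcurve decomposes into maximal \emph{visible pieces} lying outside every hole interior, alternating with \emph{hidden pieces} each lying inside the interior of a single hole (holes have pairwise disjoint interiors by their definition as maximal unions of far cells with connected interior). The transitions between consecutive pieces occur precisely at crossing vertices of $e$ with boundary edges of the respective hole; such vertices lie on the hole's boundary, not in its interior, and hence survive step~5 of Definition~\ref{def:patchwork}.

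Within any visible piece, every corresponding segment vertex and crossing vertex of $e$ survives in $\patchwork$ and the restriction to these vertices is a path via the subdivision edges inherited from $\planarization{\drawing}$. To bridge a hidden piece that lies inside some hole $H$, I exhibit a path through thread-paths in that hole. Since $u$ and $v$ are labeled crossable for $s_it_i$, the parts of $e$ w.r.t.~$H$ containing them are both crossable for $s_it_i$ (by the defining condition of the crossability label), and the crossable parts of $e$ w.r.t.~$H$ between them form a chain $P_0, P_1, \dotsc, P_m$ along $e$ in which each consecutive pair is joined by a thread of $T_{H,s_it_i}$ (by the definition of threads, since no other crossable part lies between consecutive entries in this chain). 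Each such thread is embedded as a simple curve in $H$ by Lemma~\ref{lem:stitches}, and after steps~6 and~7 of Definition~\ref{def:patchwork} becomes a path in $\patchwork$ whose internal vertices are exactly segment and crossing vertices for threads connecting parts of $e$, and whose two endpoints are endpoints of the parts $P_j$ and $P_{j+1}$ on the boundary of $H$, coinciding with crossing or real vertices of $\planarization{\drawing}$ lying on $e$.

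Concatenating the visible-piece subpaths with the thread-chain subpaths gives a walk in $\patchwork$ from $u$ to $v$ using only the admissible vertices, which can then be shortened to a simple path. The main technical point to verify is that the chaining across multiple holes works consistently when $e$ interleaves its entries into several holes: one argues inductively along the walk that each time a hole boundary is reached along $e$, the part of $e$ w.r.t.\ that hole incident to the current position is crossable, so that a thread chain is available; the base case is $u$'s crossability label and the terminal step is $v$'s crossability label, with crossability of intermediate parts guaranteed by the definition of threads (both endpoints of every thread are on crossable parts).
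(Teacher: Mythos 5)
The overall plan — walk along $e$ and use threads to bridge holes — is the same as the paper's, but the crucial invariant that is supposed to make the walk well-defined has a gap. You state: "one argues inductively along the walk that each time a hole boundary is reached along $e$, the part of $e$ w.r.t.\ that hole incident to the current position is crossable," with "crossability of intermediate parts guaranteed by the definition of threads (both endpoints of every thread are on crossable parts)." But the thread construction only controls crossability with respect to the hole the thread passes through. Suppose the walk enters a hole $H$ via a thread (whose far endpoint therefore lies on a part $P$ that is crossable \emph{with respect to $H$}), and then, continuing along $e$ inside $P$, reaches the boundary of a \emph{different} hole $H'$. Nothing in your argument guarantees that the $H'$-part containing the current position is crossable with respect to $H'$; the crossability label on $u$ and $v$ only certifies crossability of the parts containing $u$ and $v$ themselves, and the thread endpoints only certify crossability of $P$ with respect to $H$. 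So the "thread chain is available" step can fail: the walk may arrive at an $H'$-boundary crossing that is incident to no $H'$-thread at all, and the construction stalls. A related issue is that the visible/hidden decomposition of paragraph~1 does not match the thread structure of paragraph~2: a thread may skip entire non-crossable parts of $e$, and consequently bypass several visible pieces and hidden pieces in between, so "bridging each hidden piece" individually by a thread-path is not possible when the flanking part is non-crossable.

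The paper's proof instead inducts on the number of maximal walks that the traversal of $e$ spawns in $\patchwork$, and in the step where the successor walk has no crossable-labeled segment vertex, it identifies a specific separating hole $H'$ and argues about the \emph{earliest} $H'$-part after the current walk that is crossable, showing it contains a crossable-\emph{labeled} segment vertex before recursing. That last strengthening — passing to a crossable-\emph{labeled} vertex, i.e.\ one whose part is crossable with respect to every relevant hole, rather than a part that is merely crossable with respect to the single hole just traversed — is exactly what your induction lacks and what is needed to iterate the argument across interleaving holes. To repair your proof you would need to show that, after each thread, you can continue to a position that is again crossable-labeled (not merely on a crossable $H$-part for the single hole $H$ just crossed), which is a nontrivial additional claim that has to be argued, not merely asserted.
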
}
	\later{%
	\begin{proof}
		Segment vertices which correspond to segments of an edge \(e\) which is not \(H\)-torn for any hole \(H\) are obviously connected via a path in \patchwork that consists only of segment vertices and crossing vertices following \(e\).
		
		For an edge \(e \in E(G)\) which is \(H\)-torn for some hole \(H\) in \drawing we argue as follows:
		Consider two segments of \(e\) for which the corresponding segment vertices \(v_1\) and \(v_2\) are both marked as crossable.
		Since these segments both belong to \(e\), they are connected along a traversal of \(e\) in \drawing.
		We fix this traversal of \(e\).
		The traversal of \(e\) gives rise to a sequence of walks in \patchwork
		consisting of segment and crossing vertices arising from \(e\) which are interrupted by holes.
		We proceed by induction on the number \(w\) of such maximal walks the traversal of \(e\) gives rise to.
		
		For the base case, assume that the traversal of \(e\) gives rise to \(w = 2\) maximal walks, one of which contains \(v_1\) and the other of which contains \(v_2\).
		These walks can only be interrupted by a single hole \(H\).
		Because \(v_1\) and \(v_2\) are labeled as crossable for \(s_it_i\), 
		both these walks correspond to curves which are subsets of edge parts of the \(H\)-torn edge \(e\) 
		and both are crossable for \(s_it_i\).
		Moreover, since the traversal of \(e\) gives rise to only these two maximal walks, 
		there is no other edge part of \(e\) between these edge parts along the traversal of \(e\).
		This means a thread is inserted between these parts of \(e\) which then leads to a connection via stitches.
		
		Now assume that the claim holds in case the number of maximal walks the traversal of \(e\) gives rise to is \(w\), and consider the case that the number of maximal walks the traversal of \(e\) gives rise to is \(w + 1\).
		Consider the walk \(W_1\) that contains \(v_1\) and the walk \(W_1'\) that succeeds this walk along the traversal of \(e\).
		If \(W_1'\) contains a segment vertex \(v'\) that is marked as crossable for \(s_it_i\), we can apply the base case to \(v_1\) and \(v'\), and the induction hypothesis to \(v'\) and \(v_2\) to find a desired path in \patchwork.
		
		Otherwise, there is no such segment vertex on \(W_1'\) that is marked crossable for \(s_it_i\).
		Hence, there is some hole $H'$ in \drawing such that $e$ is $H'$-torn and 
		\(W_1'\) corresponds to a curve which is a subset of an edge part of \(e\) with respect to \(H'\) which 
		is not crossable for \(s_it_i\).
		Recall that \(v_1 \in W_1\) is labeled as crossable for \(s_it_i\). 
		Consequently, \(W_1\) corresponds to a curve which for each hole \(H\) in \drawing,
		such that $e$ is $H$-torn,
		is a subset of an edge part of \(e\) with respect to \(H\) which is crossable for \(s_it_i\).
		This leads to the observation that \(H'\) is the hole in \drawing that interrupts \(W_1\) from \(W_1'\),
		as else $W_1$ and $W_1'$ correspond to subsets of the same edge part with respect to \(H'\).
		Now consider \(v_2\).
		First, we observe that it is also contained in an edge part of the \(H'\)-torn edge \(e\).
		Second, as \(v_2\) is labeled `crossable' for \(s_it_i\) this edge part is also crossable for \(s_it_i\) and $H'$.
		Hence, the earliest edge part of the \(H'\)-torn edge \(e\) that is crossable for \(s_it_i\) 
		that follows $W_1$ in the traversal of $e$ from $v_1$ to $v_2$ is well-defined.
		Moreover, this part contains one of the considered walks after \(W_1\) which 
		is connected to \(W_1\) via stitches and
		contains a segment vertex \(v\) that is labeled as crossable for \(s_it_i\).
		We complete the proof by invoking the induction hypothesis for \(v\) and \(v_2\).
	\end{proof}}


	\both{Intuitively, we would like Lemma~\ref{lem:crossable-conncetion} to lead to an MSO subformula that can check whether two segment vertices in $\patchwork$ belong to the same edge---an important component of our algorithm for \SDE. The lemma provides us with a characterization that seems suitable for this task since it is easy to define a path in MSO, but there is an issue if we use $\patchwork$ as it is currently defined: a crossing vertex is adjacent to $4$ segment vertices, and $\patchwork$ (viewed as a graph without an embedding)}
	\later{ currently }%
	\both{does not specify which of these segment vertices belong to the same edge.}
	\short{%
	We resolve this by introducing \emph{tracking labels}: for each crossing vertex $v$ in \patchwork created by a crossing between edges $e$ and $e'$ in $\drawing$, we assign the label $1$ to the two unique neighbors of $v$ corresponding to $e$ and the label $2$ to the remaining two neighbors of $v$.}
	
	\later{
		
	We resolve this by introducing \emph{tracking labels} in the following way:
	For each crossing vertex $v$ in \patchwork created by a crossing between edges $e$ and $e'$ in $\drawing$, we assign the label $1$ to the two unique neighbors of $v$ in \patchwork that are segment vertices corresponding to segments of the edge $e$ or segments of threads connecting parts of \(e\), and assign the label $2$ to the remaining two neighbors of $v$ in \patchwork (which must be segment vertices corresponding to segments of $e'$ or segments of threads connecting parts of \(e\)). We break the symmetry between $e$ and $e'$ arbitrarily\footnote{This symmetry breaking is unproblematic because we subdivide each edge segment at least twice; no crossing vertices share neighbors.}.
	Then from the definition of the tracking labels and Lemma~\ref{lem:crossable-conncetion}, we get:}

	\both{\begin{corollary}
	Segment vertices which correspond to edge segments of the same edge in \(e \in E(G)\) and are labeled as crossable for \(s_it_i\) are connected via paths in \patchwork consisting only of segment and crossing vertices with the following property: the two neighbors of each crossing vertex on the path are segment vertices with the same tracking label.
	\end{corollary}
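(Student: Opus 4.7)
The plan is to derive the corollary as an almost immediate consequence of Lemma~\ref{lem:crossable-conncetion} combined with the definition of the tracking labels. First I would invoke Lemma~\ref{lem:crossable-conncetion} on the two given segment vertices to obtain a path $P$ in \patchwork between them whose internal vertices consist only of segment and crossing vertices corresponding to segments and crossings of $e$ itself or of threads that connect parts of $e$.

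Next I would verify that the tracking-label condition is automatically satisfied along $P$. Fix a crossing vertex $v$ on $P$; by the construction of \patchwork (in particular, the subdivisions introduced in Definition~\ref{def:patchwork}, which ensure that no two crossing vertices share a neighbor), $v$ has exactly four neighbors, all of which are segment vertices. The vertex $v$ itself arises either from a crossing in \drawing between two edges $e$ and $e'$, or from a crossing between two threads/stitches. In either case the tracking labeling assigns the same label (say $1$) to the two neighbors that correspond to segments of $e$ or to segments of a thread for $e$, and label $2$ to the remaining two. Because $P$ only visits segment vertices of the first type at $v$, both path-neighbors of $v$ carry label $1$, which is exactly the property claimed.

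The only subtlety, and the place where one has to be slightly careful, is to confirm that the dichotomy used in defining the tracking labels (neighbors belonging to $e$ plus threads-for-$e$ versus neighbors belonging to $e'$ plus threads-for-$e'$) lines up exactly with the class of vertices singled out by Lemma~\ref{lem:crossable-conncetion}. This match is immediate from how the labels and threads were introduced, and in particular covers both crossings inherited from \drawing and crossings introduced among stitches inside holes. Hence no further case analysis is needed and the corollary follows directly.
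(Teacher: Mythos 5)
Your proposal is correct and matches the paper's intended derivation: the paper states this corollary as following immediately from the definition of the tracking labels together with Lemma~\ref{lem:crossable-conncetion}, without giving any further proof, and your argument simply spells out that derivation. Your explicit acknowledgment of the subtlety concerning crossing vertices that arise inside holes (from stitch--stitch crossings) rather than directly from $\drawing$ is a helpful clarification that the paper leaves implicit.
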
}

\section{Using the Patchwork Graph}
\label{sec:MSO}
\latertitle{\section{Extended Version of Section~\ref{sec:MSO}}}

\both{Now that we have constructed the patchwork graph $\patchwork$ and established that it has the properties we need, we can proceed to the final stage of our proof. Here, our aim will be to identify a combinatorial characterization which projects the behavior of a solution from $\drawing$ to $\patchwork$, establish a procedure that allows us to identify (and construct) solutions based on a characterization in $\patchwork$, and finally show how to find such characterizations. To streamline our presentation, at this stage we perform a brute-force branching procedure which will determine, for each $s_it_i\in F$, the number $\ell'_i$ of crossings between the curve connecting $s_i$ to $t_i$ and edges of $\drawing$ in the sought-after solution.} 
\later{Clearly, $\ell'_i\in [\ell_i]$ and this branching procedure only incurs a multiplicative runtime cost of at most $\ell^k$.}

\both{Consider a hypothetical solution $S$,
and let $f$ be a curve in $S$ connecting vertex $a$ to $b$.
The \emph{trace} $r_f$ of $f$ is a walk in \patchwork starting at $a$ such that:
\begin{enumerate}
	\item From \(a\), \(r_f\) proceeds to the shadow vertex that corresponds to the access direction through which $f$ connects to $a$, and then to the cell vertex of the first cell $\mathfrak{c}_1$ in $\drawing$ intersecting $f$.
	\item For each intersection along $f$ with an edge segment \(q\) between cells \(\mathfrak{c}_i\) and \(\mathfrak{c}_{i + 1}\), \(r_f\) proceeds to the shadow vertex of a segment vertex \(v\) in \(\mathfrak{c}_i\) on \(q\), then to \(v\), then to its shadow vertex in \(\mathfrak{c}_{i + 1}\), and then to the cell vertex of \(\mathfrak{c}_{i + 1}\), where \(v\) has the property that the number of segment vertices of \(q\) on either side of \(v\) are at least as large as the number of drawings of added edges in \newedges which intersect \(q\) on the respective side of its intersection with \(f\).
		Such a segment vertex \(v\) exists, since there are \(k = |\newedges|\) segment vertices on \(q\).
	\item Finally, \(r_f\) continues to the shadow vertex that \emph{corresponds} to the direction through which $f$ enters $b$, and finally ends in $b$.
\end{enumerate}

Observe that $r_f$ visits precisely $4\ell'_i + 5$ vertices.
Moreover, for two curves $f,f'$ in $S$, their traces $r_f,r_{f'}$ may only intersect in cell vertices, the real vertices that form the endpoints of the curves, and the associated shadow vertices.

Now, let the \emph{solution trace} ($r_S$,$\eta_S$) of $S$ be a pair where $r_S=\{r_f|f\in S\}$ and $\eta_S$ describes cyclic orders
which will intuitively capture how edges cross into and out of each cell vertex in the solution. Let $R_S=\{v~|~ \exists f\in S: v\in r_f\}$ be the set of all vertices occurring in the traces of $S$. $\eta_S$ then is a mapping from each cell vertex $c\in R_S$ to a cyclic order $\prec_c$ over the shadow vertices in $R_S$ that are incident to $c$.
Specifically, \(\prec_c\) is defined as the cyclic order given by the cycle on the neighborhood of \(c\) in \patchwork restricted to \(R_S\).

Solution traces describe the way in which a solution can be related to a set of walks and cyclic orders in \patchwork.
	Of course we can abstract away from the explicit reference to a solution and define the more general notion of \emph{preimages} whose combinatorial structure is the same as that of a solution trace but which does not arise and in particular does not even need to correspond to a solution.
	(Preimages and solution traces relate in a similar way as solution curves and solutions in Section~\ref{sec:stitches}.)}
	\short{%
Moreover, it is possible to devise a polynomial-time \emph{assembly procedure} $\mathbf{A}$ which ``reverse-engineers'' the definition of solution traces and can construct a set of curves from a preimage (note, however, that not every preimage will lead to an actual solution). In principle, we could at this point solve \SDE\ by running $\mathbf{A}$ on every preimage in $\patchwork$, but that is not feasible simply because there may be too many preimages in $\patchwork$. However, when one abstracts away the specific way a preimage is embedded in $\patchwork$ (intuitively, by forgetting the unique identities of vertices in $\patchwork$ and only keeping their labels), there is only a bounded number of distinct possibilities of how a preimage may look like. This is formalized via the notion of \emph{template traces}~\textbf{(\(\star\))}.}

\later{%
Formally, a \emph{preimage} \((\alpha',\beta')\) is a tuple with the following properties.
$\alpha'$ is a set of $k$ walks in $H$ which are labeled $\alpha'_1,\dots,\alpha'_k$, where each $\alpha'_i$ has length $4(\ell_i + 1)$ and visits vertices with the same orders of labels as traces.
Similarly, $\beta'$ is a mapping from each cell vertex \(c\) visited by the walks in $\alpha'$ to the cyclic order over its neighbors that occur in $\alpha'$, along the cycle on \(N_\patchwork(c)\) in \patchwork.

Obviously every solution trace is a preimage.
	Conversely, one can derive a drawing of all edges of \(F\) into \drawing from a preimage \((\alpha',\beta')\) by the \emph{assembly procedure} $\mathbf{A}$ introduced below.
	For each $\alpha'_i\in \alpha'$, $\mathbf{A}$ will draw a curve $u_i$ that starts and ends at the two vertices labeled $i$ (i.e., the endpoints of $s_it_i\in F$) as described in the following steps.
\begin{enumerate}
	\item $u_i$ exits its starting vertex via the access direction given by the first shadow vertex in $\alpha'_i$.
	\item For each cell vertex $c$ such that $(e_1,v_1,c,v_2,e_2)$ forms a subsequence of visited vertices in $\alpha'_i$, expand $u_i$ by drawing a curve $\iota$ in $c$ connecting the edge segment (or the real vertex) $e_1$ to the edge segment (or the real vertex) $e_2$ in the following way.
	\begin{itemize}
		\item Consider an arbitrary other curve drawn in $c$ by $\mathbf{A}$ up to now, say $\zeta$, that was obtained from some subsequence $(e_1^\zeta,v_1^\zeta,c,v_2^\zeta,e_2^\zeta)$. $\iota$ will intersect $\zeta$ if and only if the shadow vertices of $\iota$ interleave with the shadow vertices of $\zeta$ in $\beta(c)$ (i.e., for instance, if $v_1\prec_c v_1^\zeta \prec_c v_2 \prec_c v_2^\zeta \prec_c v_1$).
		\item Such a drawing can be achieved by, e.g., having the curve $\iota$ follow the inside boundary of $c$ in a clockwise manner while avoiding all curves it is not supposed to cross (as these will be either completely enveloped by or completely enveloping $\iota$).
		\item We remark that $v_1$ and $v_2$ may either be shadows of segment vertices or the actual endpoints $s_i$ or \(t_i\).
	\end{itemize}
	\item $u_i$ ends by entering the final real vertex in $\alpha'_i$ from the direction specified by the last shadow vertex in $\alpha'_i$.
\end{enumerate}
The intuition here is that \(\textbf{A}\) interprets a preimage of a template trace as a specification of precisely which parts of $\drawing$ should be crossed by the drawings of each added edge (this information is provided in $\alpha'$), while controlling when and how individual curves in the newly constructed solutions should cross each other (this information is provided in $\beta'$).
	Note that the output of \(\textbf{A}\) for an arbitrary preimage will in general not be a solution for our edge insertion problem, but---crucially---one can check whether it is in polynomial time.

Observe that, although preimages imply curves in \drawing for all added edges in \newedges, and we can check for each of them if they are a solution, we cannot iterate over them in \FPT\ time as the number of preimages in \patchwork is generally not \FPT.
	We will however be able to distill the structure of preimages, independently of their exact specification in \patchwork.
	For this we define \emph{template traces}.
%
A template trace is a tuple $\tau=(T,\alpha,\beta)$ where:
\begin{itemize}
	\item $T$ is a graph whose vertices are equipped with a labeling that matches the vertex-labeling used in \patchwork (i.e., some may be labeled as segment vertices, some as cell vertices, etc., and in addition some of them may be labeled as the endpoints of added edges in $F$);
	\item $\alpha=\{\alpha_1,\dots,\alpha_k\}$ is a set of walks in $T$, where each walk $\alpha_i$ has length $4(\ell'_i + 1)$ and the types of vertices visited by $\alpha_i$ match the types of vertices visited by a trace (i.e., $\alpha_i$ starts with a real vertex labeled $i$, then proceeds with a shadow vertex, a cell vertex, followed by a sequence of $\ell_i'$-many subsequences of shadow-, segment-, shadow-, cell vertices, and ends with a shadow vertex followed by a different real vertex labeled $i$); and
	\item $\beta$ is a mapping from each cell vertex in $T$ to a cyclic order over its adjacent shadow vertices.
	\item For simplicity, we require that each vertex and edge in $T$ occurs in at least one walk in $\alpha$.
\end{itemize}}


\both{%
\begin{proposition}[$\star$]
	\label{prop:constructtraces}
	There are at most $(k\ell)^{\bigoh(k\ell)}$ distinct template traces. Moreover, the set of all template traces can be enumerated in time $(k\ell)^{\bigoh(k\ell)}$.
\end{proposition}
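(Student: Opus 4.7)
The plan is to exploit the requirement in the definition of a template trace that every vertex and every edge of $T$ must occur in at least one walk $\alpha_i$. This immediately caps the size of the underlying graph: since each walk has length $4(\ell'_i + 1)$ and thus visits $4\ell'_i + 5$ vertices, the graph $T$ has at most $N := \sum_{i=1}^{k}(4\ell'_i + 5) = \bigoh(k\ell)$ vertices and at most $N$ edges. So rather than enumerating $(T,\alpha,\beta)$ by first choosing $T$, I would enumerate the walks and build $T$ out of them on the fly.

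First, I would branch over $(\ell'_1, \dots, \ell'_k) \in [\ell]^k$, which contributes a factor of $\ell^k \leq (k\ell)^{\bigoh(k)}$. Next, I would enumerate the walks $\alpha_1, \dots, \alpha_k$ step by step: at each step of some walk, the next vertex is either a previously introduced vertex of $T$ or a new one, so there are at most $N+1$ choices, for a total of $(N+1)^N = (k\ell)^{\bigoh(k\ell)}$ possibilities. The structural labels (real/shadow/cell/segment) are forced by the position within a walk, endpoint labels $i \in [k]$ are determined by which walk visits a given real vertex, and the only additional per-vertex choice is the tracking label in $\{1,2\}$ on each segment vertex, which contributes an extra factor of $2^{\bigoh(k\ell)}$. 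Since $T$ is a simple graph, its edges are determined by the consecutive pairs of vertices in the walks.

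The step I expect to be the main obstacle is the enumeration of $\beta$: if a cell vertex $c$ has $d_c$ shadow-vertex neighbors in $T$, then $\beta(c)$ can be any of $(d_c - 1)!$ cyclic orders, and a single factorial in $\bigoh(k\ell)$ already blows past the target bound. I would handle this by aggregating over all cell vertices simultaneously: since cell vertices are only adjacent to shadow vertices, $\sum_c d_c \leq 2|E(T)| = \bigoh(k\ell)$, and therefore
\[
\prod_c (d_c - 1)! \;\leq\; \prod_c d_c^{\,d_c} \;\leq\; \Bigl(\textstyle\sum_c d_c\Bigr)^{\sum_c d_c} \;=\; (k\ell)^{\bigoh(k\ell)}.
\]

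Combining the three factors yields the claimed bound of $(k\ell)^{\bigoh(k\ell)}$ on the number of distinct template traces. For the enumeration algorithm, I would simply realize the branching above: each partial template trace has size $\bigoh(k\ell)$ and is produced in time polynomial in that size per branch, so the overall running time matches the counting bound up to polynomial factors absorbed into the exponent.
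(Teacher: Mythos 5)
Your argument is correct and follows essentially the same strategy as the paper's proof: enumerate the walks (which implicitly determine $T$, since every vertex and edge must appear on some walk), then enumerate the cyclic orders. The one place your presentation diverges is the bound on the number of choices for $\beta$: the paper bounds this via permutations of the shadow vertices together with an equal number of auxiliary ``separating'' elements (modelling both which cell each shadow vertex attaches to and the cyclic order within each cell), while you observe that once $T$ is fixed the degrees $d_c$ are determined and aggregate $\prod_c (d_c-1)!$ directly via $\prod_c d_c^{d_c}\le\bigl(\sum_c d_c\bigr)^{\sum_c d_c}$. Your aggregation is a bit more direct and slightly tighter, but both yield the same $(k\ell)^{\bigoh(k\ell)}$ bound. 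Two minor points: the paper fixes $(\ell'_1,\dots,\ell'_k)$ \emph{before} this proposition (in the branching step described at the start of Section~\ref{sec:MSO}), so re-branching over them here is redundant but harmless; and the tracking labels on segment vertices of $\patchwork$ are checked against via the MSO formula at the final stage rather than being guessed as part of the template trace, so the $2^{\bigoh(k\ell)}$ factor you include is not strictly needed---though again it is absorbed into the stated bound.
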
}

\later{%
\begin{proof}
	First, observe that $T$ contains at most $k \cdot (4\ell + 5)$-many vertices. The number of walks of length at most $4(\ell + 1)$ over the vertex set of $T$ can hence be upper-bounded by $(k\ell)^{\ell}$. As for $\beta$, observe that each shadow vertex in $T$ is only adjacent to a single cell vertex, and hence occurs in at most one cyclic order in the image of $\beta$. Hence, once we fix one possible choice for $\alpha$, the number of all possible $\beta$'s is upper-bounded by the number of ways of partitioning all (at most $k\ell$) shadow vertices in $T$ into at most \(k\ell\) parts and then considering all possible permutations of the obtained parts. To show that this is also upper-bounded by $(k\ell)^{\bigoh(k\ell)}$, observe that the desired number of partitionings with internal permutations is upper-bounded by the number of permutations over the set of all shadow vertices and an equal-cardinality set of auxiliary ``separating vertices'', whose sole role is to model possible partitionings of shadow vertices. The number of such permutations is, naturally, in $(k\ell)^{\bigoh(\ell)}$. This also yields a procedure that can construct the set of all possible template traces.
\end{proof}}

\later{
	
	We say that a template trace \((T,\alpha,\beta)\) \emph{matches} a preimage \((\alpha',\beta')\) if there is a label-preserving bijective mapping $\gamma$ (called the \emph{preimaging}) from the vertices on walks in \(\alpha'\) to $V(T)$ such that (1) for each $\alpha'_i\in \alpha'$, $\gamma(\alpha'_i)=\alpha_i$ and (2) $\gamma(\beta')$ maps each \(c\) to \(\beta(\gamma(c))\).
	For a template trace \(\tau\) that matches a preimage \((\alpha',\beta')\), we say that \((\alpha',\beta')\) is a \emph{preimage} of \(\tau\).
	Intuitively, a preimage of a template trace is its firmly embedded counterpart in \patchwork.
	As every solution trace is a preimage, these definitions carry over to solution traces.
}

\short{%
Every preimage gives rise to a template trace; in this case we say that the template trace \emph{matches} the preimage, or the preimage is a preimage \emph{of} the template trace.
While a template trace can match multiple preimages, the following lemma shows that a template trace contains sufficient information to \emph{almost} reconstruct a solution using \textbf{A} on any preimage of a template trace matching a hypothetical solution trace.

\begin{lemma}[\(\star\)]
	\label{lem:usetraces}
	Let $S$ be a solution which matches a template trace $\tau = (T,\alpha,\beta)$, and let $(\alpha',\beta')$ be a preimage of $\tau$. Let $S'$ be the output of $\mathbf{A}$ applied to $(\alpha',\beta')$. 
	Then $S'$ is either a solution, or there exists an edge $e$ of $G$ that intersects some curve in $S'$ more than once.
\end{lemma}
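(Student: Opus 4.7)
The plan is to argue that whenever $S'$ avoids the pathology described in the second disjunct---namely, no curve in $S'$ crosses any edge of $G$ more than once---the output $S'$ must in fact be a solution. I would do this by verifying each requirement for being a solution in turn, exploiting throughout that $\tau$ matches both the solution trace of $S$ and the preimage $(\alpha',\beta')$.

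First, I would handle the crossing budget and endpoints. By the stepwise action of $\mathbf{A}$, the curve $u_i \in S'$ starts at $s_i$, ends at $t_i$, and intersects edges of $\drawing$ exactly at those edge segments indexed by segment vertices on $\alpha'_i$. This gives at most $\ell'_i \leq \ell_i$ crossings. Under the hypothesis that no original edge is crossed twice by $u_i$, each edge of $G$ is intersected at most once by $u_i$, which is the simplicity condition between added curves and existing edges of $\drawing$.

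Next, I would show that distinct curves $u_i, u_j \in S'$ cross each other at most once, and that each $u_i$ is itself a simple curve. This is the most substantive step. The matching of $\tau$ both to the solution trace of $S$ and to $(\alpha',\beta')$ yields two label-preserving bijections whose composition $\phi$ identifies the trace vertices of $S$ with the walk vertices on $\alpha'$ while preserving labels, positions along walks, and the cyclic orders at each cell vertex. By the interleaving rule in $\mathbf{A}$, the pairwise crossings (and self-crossings) among the curves produced by $\mathbf{A}$ within each cell are determined entirely by these cyclic orders; hence the combinatorial pattern of all such crossings in $S'$ is isomorphic via $\phi$ to the corresponding pattern in $S$. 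Since $S$ is a solution, within $S$ each pair of added-edge curves cross at most once and each such curve is simple; thus the same holds in $S'$. The ``follow the cell boundary clockwise while avoiding non-crossing curves'' drawing style used by $\mathbf{A}$ rules out tangencies, higher-multiplicity intersections, and passing through non-endpoint vertices within any cell, while the shadow vertices at the ends of each $\alpha'_i$ enforce correct access directions at $s_i$ and $t_i$.

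Combining these observations with the simplicity of $\drawing$ itself (which $S'$ does not modify), the new curves together with $\drawing$ form a simple drawing of $(V, E \cup F)$ extending $\drawing$, and the crossing counts on the added edges respect the budget, so $S'$ is indeed a solution. The main obstacle I anticipate is the bookkeeping in the second step: carefully verifying that the composition of the two matchings yields a well-defined correspondence that preserves not only walk structure but also cyclic orders, so that the local crossing pattern within each cell in $S'$ truly coincides with that in $S$. Everything else amounts to unpacking the definitions of traces and preimages and invoking the explicit drawing recipe of $\mathbf{A}$.
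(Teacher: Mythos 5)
Your proposal is correct and takes essentially the same approach as the paper's proof. Both arguments compose the two label-preserving matchings (solution trace of $S$ $\to$ $\tau$ $\to$ preimage $(\alpha',\beta')$) to transfer the interleaving pattern of curve-segment endpoints at each cell vertex, so that $\mathbf{A}$ reproduces exactly the pairwise crossings of $S$, leaving a double-crossing of an existing edge of $\drawing$ as the only remaining way $S'$ could fail to be a solution.
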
}

\later{
The following lemma shows that a template trace trace \(\tau\) matching the solution trace of a hypothetical solution contains a sufficient amount of information to \emph{almost} reconstruct a solution using \textbf{A} on a preimage of \(\tau\).

\begin{lemma}
	\label{lem:usetraceslong}
	Let $S$ be a solution which matches a template trace $\tau = (T,\alpha,\beta)$, and let $(\alpha',\beta')$ be a preimage of $\tau$. Let $S'$ be the output of $\mathbf{A}$ applied to $(\alpha',\beta')$. 
	Then $S'$ is either a solution, or there exists an edge $e$ of $G$ that intersects some curve in $S'$ more than once.
\end{lemma}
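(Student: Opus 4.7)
The plan is to show that, thanks to the combinatorial equivalence mediated by $\tau$, the drawing $S'$ produced by $\mathbf{A}$ satisfies every condition required for a solution of \SDE\ with the single potential exception of an $F$-curve crossing the same $G$-edge twice. First I would unpack what it means that both $(\alpha',\beta')$ and the solution trace $(r_S, \eta_S)$ of $S$ match $\tau$: composing the preimaging of $(\alpha',\beta')$ with the inverse of the preimaging of $(r_S,\eta_S)$ yields a label-preserving bijection $\phi$ from the vertices appearing in $\alpha'$ to those appearing in $r_S$ such that $\phi(\alpha'_i)$ is the trace of the $s_it_i$-curve of $S$ for every $i$, and such that $\phi$ transports $\beta'(c)$ to $\eta_S(\phi(c))$ for every visited cell vertex $c$. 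Since $\mathbf{A}$ is designed by reverse-engineering the definition of a solution trace, applying $\mathbf{A}$ to $(r_S, \eta_S)$ reproduces $S$, and more generally $\mathbf{A}$ applied to any preimage of $\tau$ is determined up to ambient isotopy by $\alpha'$ and $\beta'$ alone.

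Next I would go through each property required of a solution and verify it for $S'$, using $\phi$ to transport the property from $S$. Each curve $u_i$ is simple by construction, since it is assembled from a sequence of simple pieces drawn inside distinct cells and linked across segment vertices. Each $u_i$ produces exactly $\ell'_i \leq \ell_i$ crossings with \drawing, one per segment vertex visited by $\alpha'_i$. For $i \neq j$, the curves $u_i$ and $u_j$ cross inside the cell corresponding to a cell vertex $c$ if and only if their two pairs of shadow vertices interleave in $\beta'(c)$; by the interleaving-preserving property of $\phi$ this happens exactly when the matching pair of curves of $S$ cross inside the cell corresponding to $\phi(c)$, so since $S$ is simple these pairs cross in at most one cell each, hence at most once in $S'$. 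General-position conditions (no tangencies, no three curves through a point, no curve through a vertex) are built into the description of $\mathbf{A}$, which draws each in-cell piece sufficiently close to the cell boundary in the prescribed cyclic order.

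The only remaining condition for $S'$ to be a solution is that no $u_i$ crosses any single edge of $G$ more than once, and this is precisely the condition that $\tau$ fails to directly enforce: although the template trace fixes the number and order of segment vertices visited by each $\alpha_i$, a preimage of $\tau$ may contain two distinct segment vertices on one walk $\alpha'_i$ whose underlying edge segments belong to the same $G$-edge. When this never happens, the verification in the previous paragraph shows that $S'$ is a valid solution; otherwise, the corresponding $u_i$ meets some edge $e \in E(G)$ in at least two points, which is precisely the second alternative in the statement. The main obstacle is the middle paragraph, specifically the claim that the interleaving pattern recorded in $\beta'$ not only predicts but is actually realized by $\mathbf{A}$ as the pattern of pairwise $F$-curve crossings inside each cell; this reduces to the fact that every shadow vertex is adjacent to a unique cell vertex, so that interleavings can be read off locally from $\beta'$, together with the observation that any prescribed cyclic pattern of crossings and non-crossings among a bounded number of curve pieces inside a single disk can be realized by carefully hugging the boundary, exactly as stipulated in the definition of $\mathbf{A}$.
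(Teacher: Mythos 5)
Your proposal is correct and follows essentially the same approach as the paper: both arguments hinge on the observation that the cyclic orders $\beta$/$\beta'$ carry exactly the interleaving information that determines pairwise crossings among the new curves inside each cell, so that transporting this information through the two preimagings (you compose them into a single bijection $\phi$, the paper traces through them one at a time) shows $\mathbf{A}$ reproduces the crossing pattern of $S$, leaving only the possibility of a curve double-crossing an existing edge. The one inessential over-claim is that $\mathbf{A}$ applied to $(r_S,\eta_S)$ ``reproduces $S$''—this is not needed and is only true at the level of crossing patterns, not up to isotopy—but removing it does not affect the argument.
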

\begin{proof}
	The construction guarantees that for each added edge in $F$, say $s_it_i$, there will be a simple curve $u_i$ in $S'$ connecting $s_i$ to $t_i$. Moreover, $u_i$ will cross precisely as many curves in $\drawing$ as $S$, since $S$ matches $\tau$ and $(\alpha',\beta')$ is a preimage of $\tau$. To complete the proof, we argue that the crossings between distinct curves in $S$ are replicated by $S'$.
	
	Consider a crossing $x$ between the drawings of two added edges, say $s_i$-$t_i$ curve $p$ and $s_j$-$t_j$ curve $q$ in $S$, and assume this crossing occurs in a cell $\mathfrak{c}$ of \drawing. Let us split each such curve into \emph{curve segments}, which are maximal connected subcurves of these curves that do not intersect any other curves in $\drawing$ (in particular, each such curve segment must lie completely in a cell of $\drawing$, but a cell of \drawing may contain multiple curve segments of a single curve).
		\(x\) is contained in exactly one curve segment of \(p\) and one curve segment of \(q\), and both these segments start and end on the boundary of \(\mathfrak{c}\).
		Because the curve segments intersect exactly once in \(x\), their endpoints have to interleave on the boundary of \(\mathfrak{c}\).
		These start- and endpoints correspond via the preimaging between the solution trace of \(S\) and \(\tau\) to shadow vertices \(p_1\) and \(p_2\) and \(q_1\) and \(q_2\) respectively in \(V(T)\) which also interleave in the cyclic ordering given by \(\beta(c)\) where \(c\) corresponds to \(\mathfrak{c}\) via the preimaging between the solution trace and \(\tau\).
		This means, via the preimaging \(\gamma\) between \(\tau\) and \((\alpha',\beta')\), that \(\gamma(p_1)\), \(\gamma(c)\) and \(\gamma(p_2)\) are consecutive on \(\alpha'_i\), \(\gamma(q_1)\), \(\gamma(c)\) and \(\gamma(q_2)\) are consecutive on \(\alpha'_j\), and the pairs \((\gamma(p_1),\gamma(p_2))\) and \((\gamma(q_1),\gamma(q_2))\) interleave in \(\prec_{\gamma(c)}\).
		At this point, $\mathbf{A}$ ensures that the drawings of \(s_it_i\) and \(s_jt_j\) in \(S'\) intersect exactly once in the cell corresponding to \(\gamma(c)\)---mirroring the behavior of $S$.
	%
	
	On the other hand, let us now make an analogous argument to analyze what happens if two curves \(p\) (the drawing of \(s_it_i\)) and \(q\) (the drawing of \(s_jt_j\)) in \(S\) do not cross each other in a cell $\mathfrak{c}$ which they both intersect (drawings of added edges which do not intersect the same cell of \drawing in \(S\) are easily argued not to intersect the same cell of \drawing in \(S'\)).
	Consider two arbitrary curve segments of
	\(p\) and \(q\) in \(\mathfrak{c}\).
		Using the preimagings between the solution trace of \(S\) and \(\tau\) and \(\tau\) and \((\alpha',\beta')\) we can find shadow vertices in \patchwork that correspond to the endpoints of the curve segments and do not interleave in the cyclic ordering for the cell vertex corresponding to \(\mathfrak{c}\) in the preimage \((\alpha',\beta')\).
		As this is true for arbitrary edge segments of \(p\) and \(q\) in \(\mathfrak{c}\), applying \textbf{A} to \((\alpha',\beta')\) then ensures that the drawings of \(s_it_i\) and \(s_jt_j\) do not intersect in the cell that corresponds to the cell vertex that is associated to \(\mathfrak{c}\) via the preimagings.
	
	Altogether, we conclude that the curves obtained by $\mathbf{A}$ will contain the same number of crossings as the curves in $S$, and will pairwise cross each other if and only if they pairwise crossed each other in $S$ (and, in any case, at most once). Hence either $S'$ is a solution, or it contains a curve that crosses an existing edge in $\drawing$ more than once. 
\end{proof}}

\both{
	
	Next, we show that the problem of finding a preimage of a template trace (or determining that there is none) can be encoded in Monadic Second Order (MSO) logic.}

\short{%
\begin{lemma}[$\star$]
	\label{lem:MSOtrace}
	Let $\tau=(T,\alpha,\beta)$ be a template trace. There exists an MSO formula $\phi_\tau(V(T))$ of size independent of $G$ and $\drawing$ which is satisfiable in \patchwork if and only if there exists a preimage for $\tau$ in \patchwork.
\end{lemma}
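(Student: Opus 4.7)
The plan is to build $\phi_\tau$ as an MSO formula with free individual variables $\{x_v : v \in V(T)\}$---note that $|V(T)| = \mathcal{O}(k\ell)$, independent of $G$ and $\drawing$---such that a satisfying assignment corresponds precisely to a preimaging $\gamma^{-1} : V(T) \to V(\patchwork)$ witnessing that $\tau$ matches some preimage of $\patchwork$. The formula is the conjunction of four blocks: (i) \emph{label preservation}: each $x_v$ is required to carry exactly the labels prescribed by $v$ in $T$ (vertex-type, endpoint index in $[k]$, crossability, and tracking labels); (ii) \emph{injectivity}: $x_u \neq x_v$ for every distinct $u,v \in V(T)$; (iii) \emph{walk adjacency}: for each walk $\alpha_i \in \alpha$ and each consecutive pair $(u,v)$ along $\alpha_i$, $\{x_u, x_v\}$ must be an edge of $\patchwork$, which combined with (i) forces the images of the walks $\alpha_i$ to be walks in $\patchwork$ matching the prescribed label patterns (i.e., the sought $\alpha'_i$); and (iv) \emph{cyclic-order matching}: for each cell vertex $c \in V(T)$ with $\beta(c)$-ordered shadow-neighbors $u_1,\dots,u_m$, the images $x_{u_1},\dots,x_{u_m}$ must appear in this cyclic order around $x_c$ in $\patchwork$.

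The hard part will be block (iv): MSO over a plain graph does not directly see a planar rotation system. I would handle this by working on an enrichment of $\patchwork$ in which, for every cell vertex $c$, the shadow-neighbors of $c$ are joined by a ``hub cycle'' $C_c$ in the order they appear along the boundary of the face of $\planarization{\drawing}$ represented by $c$ (read off directly from the $\varepsilon$-boundary traces used in Definition~\ref{def:patchwork}). Since each $C_c$ lies inside the single face of $\patchwork$ that houses $c$, planarity is preserved; and since the diameter is unchanged, the treewidth of the augmented graph remains bounded by a function of $k$ and $\ell$ via Lemma~\ref{lem:patchwork}, so Courcelle's theorem continues to apply. With $C_c$ present as an actual graph cycle, the cyclic-order constraint becomes a standard MSO pattern: existentially quantify sets $S_1,\dots,S_m \subseteq V(C_c)$ partitioning $V(C_c)$, and require each $S_i$ to induce a connected arc of $C_c$ whose endpoints are $x_{u_i}$ and $x_{u_{i+1}}$ (indices taken cyclically) and whose interior contains none of the other $x_{u_j}$.

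Correctness then follows by direct comparison with the definitions of preimage and template-trace matching: blocks (i)--(iii) precisely encode the walks $\alpha'_i$ together with the bijection $\gamma$, while (iv) is exactly the condition $\gamma(\beta')(c) = \beta(\gamma(c))$. Since $|V(T)|$, $|\alpha|$, and the maximal walk length in $T$ are all bounded by functions of $k$ and $\ell$, the total size of $\phi_\tau$ is bounded by the same, in particular independently of $G$ and $\drawing$, as claimed.
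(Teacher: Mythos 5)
Your plan follows the paper's route: free individual variables indexed by $V(T)$ encode the candidate preimaging, and the formula is a conjunction of label-preservation, walk-adjacency, and cyclic-order constraints (the injectivity block is implicit in the paper, but harmless to state). The only real subtlety is the cyclic-order block, and here you correctly diagnose the obstacle---that an MSO formula over a bare graph cannot see a rotation system unless it is materialized as graph structure---but then proceed to add structure that is already present. In step~4 of Definition~\ref{def:patchwork}, the $\varepsilon$-distance boundary curve traced inside each face of $\planarization{\drawing}$ is retained as edges joining consecutive shadow vertices, so the shadow vertices around a cell vertex already form a cycle in $\patchwork$; this is exactly why the definition of $\eta_S$ can refer to ``the cyclic order given by the cycle on the neighborhood of $c$ in $\patchwork$.'' Your hub-cycle augmentation is therefore a no-op and the treewidth re-argument is unneeded. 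Given the cycle, your arc-based encoding of the cyclic order works (modulo the minor slip that arcs sharing endpoints do not literally partition $V(C_c)$---``cover'' is what you mean). The paper phrases the same check more locally and a little more economically: for each pair $v,v'$ of shadow vertices in $V(T)$ that are directly consecutive in $\beta(c)$, it asserts the existence of a $v$--$v'$ path through shadow vertices that avoids the interpretations of all other variables of $V(T)$. The two encodings are interchangeable; both are of constant size and yield exactly the preimages of $\tau$.
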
}

\later{%
\begin{lemma}
	\label{lem:MSOtracelong}
	Let $\tau=(T,\alpha,\beta)$ be a template trace. There exists an MSO formula $\phi_\tau(V(T))$ of size independent of $G$ and $\drawing$ which is satisfiable in \patchwork if and only if there exists a preimage for $\tau$ in \patchwork. Moreover, if the formula is true, then the interpretation of $V(T)$ defines a \emph{preimaging} between a preimage of \(\tau\) and $\tau$.
\end{lemma}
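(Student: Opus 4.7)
The plan is to build $\phi_\tau$ with one free individual variable $x_v$ for each $v\in V(T)$; since $|V(T)|\leq k(4\ell+5)$, this is a bounded number of free variables and the formula will have constant size with respect to $G$ and $\drawing$. A satisfying assignment will directly specify where each vertex of $T$ is mapped into \patchwork, and the intended witness $\gamma^{-1}$ of a preimaging is then read off as $\gamma^{-1}(v)\coloneqq x_v$. The formula will be a conjunction of three blocks: (i) a label/injectivity block, (ii) an adjacency block that embeds the walks of $\alpha$, and (iii) a cyclic-order block that enforces $\beta$.

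First I would handle (i) and (ii). Injectivity is encoded by the finite conjunction $\bigwedge_{u\neq v} x_u\neq x_v$ and label preservation by asserting $P_L(x_v)$ for each label $L$ carried by $v\in V(T)$ (vertex-type labels, endpoint labels $i\in[k]$, crossability labels, and tracking labels). For every walk $\alpha_i=v^i_0,\dots,v^i_{4\ell'_i+4}$ and each consecutive pair $(v^i_j,v^i_{j+1})$, we add the atomic adjacency conjunct asserting that $x_{v^i_j}$ and $x_{v^i_{j+1}}$ share an edge in \patchwork. Since repeated occurrences of a vertex of $T$ in $\alpha$ (e.g.\ a cell vertex revisited by several walks) share the same free variable, the walks in \patchwork automatically glue together consistently.

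The main obstacle is block (iii): expressing that $\beta(c)$ is realized by the planar rotation of \patchwork around $x_c$, purely in MSO on the underlying graph. I would exploit that, by the construction in Definition~\ref{def:patchwork}, all shadow vertices adjacent to a cell vertex $x_c$ lie on the boundary cycle of the corresponding cell/face, and this boundary cycle $C_c$ consists of edges of \patchwork (running through segment vertices, crossing vertices, real vertices, and the embedded stitches). Hence I would first write an MSO subformula $\text{Boundary}(c,X)$ selecting the set $X$ of vertices forming this face-boundary cycle (it is the unique cycle in the neighborhood-incidence structure of $x_c$), and then enforce $\beta(c)$ by existentially quantifying $m\leq |V(T)|$ pairwise internally-disjoint subpaths $Q_1,\dots,Q_m$ of $C_c$ such that $Q_j$ runs from $x_{s_j}$ to $x_{s_{j+1}}$ (indices mod $m$) while avoiding the other selected shadow vertices $\{x_{s_1},\dots,x_{s_m}\}$. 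Because $m$ is bounded by $|V(T)|$, this amounts to a bounded quantifier alternation and a bounded conjunction, and ``subpath in $X$ from $a$ to $b$ avoiding $Y$'' is a textbook MSO property.

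The final step is to verify that these three blocks together characterize preimages. Given a satisfying assignment, (i) guarantees that the interpretation is a label-preserving injection $V(T)\to V(\patchwork)$; (ii) guarantees that the images of the $\alpha_i$ are walks $\alpha'_i$ in \patchwork of the correct length and vertex-label pattern; and (iii) guarantees that the cyclic order around each interpreted cell vertex restricted to the interpreted shadow vertices equals $\beta(c)$, which is precisely the definition of $\beta'$ from the preimage. Conversely, any preimage $(\alpha',\beta')$ with preimaging $\gamma$ yields an assignment $x_v\mapsto \gamma^{-1}(v)$ satisfying all three blocks, so $\phi_\tau$ is satisfiable iff a preimage exists, and the interpretation of $V(T)$ is exactly the preimaging asserted in the ``moreover'' clause.
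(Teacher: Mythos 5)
Your overall strategy is the same as the paper's: one free individual variable per vertex of $T$, a conjunct forcing labels and adjacencies so that the interpretations of the $\alpha_i$ are walks in $\patchwork$ of the correct shape, and a conjunct forcing the cyclic orders $\beta(c)$ to be realized around the interpreted cell vertices. Blocks (i) and (ii) are exactly what the paper does (the paper leaves injectivity implicit, but making it explicit as you do is harmless).

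There is, however, a real error in block (iii), stemming from a misreading of Definition~\ref{def:patchwork}. You assert that the shadow vertices adjacent to a cell vertex $x_c$ ``lie on the boundary cycle $C_c$ of the corresponding cell/face'' and that $C_c$ ``runs through segment vertices, crossing vertices, real vertices, and the embedded stitches.'' That is not how the patchwork graph is built: shadow vertices sit on the $\varepsilon$-curve traced \emph{inside} the face, and each is adjacent only to the cell vertex $v_f$, to its corresponding segment/endpoint vertex, and to its two neighbours along the $\varepsilon$-curve. They do not lie on the primal face boundary, which indeed consists of segment, crossing, real, and stitch vertices and contains no shadow vertices. Consequently, if $\text{Boundary}(c,X)$ selects that primal boundary cycle, the interpreted shadow vertices $x_{s_1},\dots,x_{s_m}$ are simply not in $X$, and the subpaths $Q_j\subseteq X$ joining them do not exist; the conjunct would be unsatisfiable even on a genuine preimage. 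You even contradict yourself a sentence later when you call $C_c$ ``the unique cycle in the neighborhood-incidence structure of $x_c$'': the neighbourhood of a cell vertex consists \emph{only} of shadow vertices, and the cycle one should use is precisely the shadow-vertex cycle, not the primal face boundary.

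The fix is small and recovers the paper's formulation. Replace $C_c$ by the shadow-vertex cycle (equivalently, drop $\text{Boundary}(c,X)$ altogether), and for each pair $v,v'$ that are consecutive in $\beta(c)$ assert the existence of a $v$-$v'$ path in $\patchwork$ consisting exclusively of shadow vertices and avoiding all the other interpreted shadow vertices in $V(T)$. This is exactly what the paper does, and your partition-into-disjoint-arcs phrasing is an equivalent MSO-expressible reformulation of it once $C_c$ is the correct cycle. With that correction your argument is sound, and the correctness verification in your final paragraph goes through unchanged.
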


\begin{proof}
	We prove the lemma by giving the construction of $\phi_\tau(V(T))$. First of all, the formula requires that, for each walk $\alpha_i\in \alpha$, the variables in $\alpha_i\subseteq V(T)$ form a walk in \patchwork that visits the variables in the order prescribed by $\alpha_i$, whereas the interpretation of each variable has the correct label (including the specific labels marking the endpoints of walks and crossability of segment vertices). Observe that an interpretation satisfying this initial condition will result in a set of walks in \patchwork satisfying the requirements imposed on the first tuple of a preimage of $\tau$. 
	
	To complete the proof, we now need to ensure that the cyclic orders defined by $\beta$ (within the template trace $\tau$, over vertices in $T$) match those we obtain for the interpretation of $V(T)$ when following the procedure used to define cyclic orders for preimages. In particular, the following must hold: for each cell vertex $c'\in V(\patchwork)$ that is the interpretation of some cell vertex $c$ in $V(T)$, the cycle on the shadow vertices in the neighborhood of \(c'\) contains the vertices that are interpretations of shadow vertices in $V(T)$ in the same cyclic order as the one given in $\beta(c)$. To express this condition in MSO, we proceed as follows. For each pair of variables for shadow vertices, say $v$ and $v'$ in $V(T)$ that are adjacent to $c$ and directly consecutive in $\beta(c)$, we express the existence of an $v$-$v'$ path in \patchwork consisting exclusively of shadow vertices none of which are among the variables for vertices in \(V(T) \setminus \{v,v'\}\). It is easy to verify that the formula constructed in this way directly encodes all the requirements imposed on preimages of~$\tau$.
\end{proof}}

\both{
	
	Finally, we have all the ingredients needed to prove our main result.}

\both{%
\begin{theorem}[$\star$]
	\label{thm:main}
	\SDE\ is fixed-parameter tractable.
\end{theorem}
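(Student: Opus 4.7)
The plan is to assemble the machinery of Sections~\ref{sec:patchwork}--\ref{sec:MSO} into an \FPT\ procedure via the three-step strategy outlined in the introduction. First, we brute-force branch over the values $\ell'_i \in [\ell_i]$ for each added edge $s_it_i \in \newedges$, fixing the number of crossings this edge will contribute; this costs a factor of $\ell^k$. For each branch, we compute \patchwork in \FPT\ time using Lemmas~\ref{lem:findcrossable} and~\ref{lem:stitches} together with Definition~\ref{def:patchwork}, and by Lemma~\ref{lem:patchwork} its treewidth is bounded by a function of $k+\ell$.

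Next, I would enumerate all template traces using Proposition~\ref{prop:constructtraces}, incurring a $(k\ell)^{\bigoh(k\ell)}$ factor. For each template trace $\tau = (T,\alpha,\beta)$ we construct the MSO formula $\phi_\tau(V(T))$ from Lemma~\ref{lem:MSOtrace} and conjoin it with an additional subformula $\psi$ that rules out the pathological case of Lemma~\ref{lem:usetraces}: for every walk $\alpha_i$, no two segment-vertex variables on $\alpha_i$ may correspond to segments of the same edge of $G$. By the corollary following Lemma~\ref{lem:crossable-conncetion}, this is equivalent to saying that no two such variables are connected by a path consisting only of segment and crossing vertices in which, at each crossing vertex on the path, the two neighbors on the path carry the same tracking label; this property is readily MSO-expressible. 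We then invoke Courcelle's theorem on \patchwork; if $\phi_\tau \wedge \psi$ is satisfied we extract a preimage $(\alpha',\beta')$ of $\tau$ from the interpretation of $V(T)$ and feed it to the assembly procedure $\mathbf{A}$.

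For correctness we argue two directions. If $\mathbf{A}$ produces a drawing from such a preimage, Lemma~\ref{lem:usetraces} tells us it is either a valid solution or contains a double crossing of some edge of $G$; the subformula $\psi$ excludes the latter, so the drawing is a legitimate solution which we return. Conversely, if the instance admits a solution, Lemma~\ref{lem:unnecessary} lets us assume a solution $S$ without removable detours; its solution trace matches some template trace $\tau$ and is itself a preimage of $\tau$ satisfying $\phi_\tau$, and its trace walks visit each edge of $G$ at most once in segment vertices, so $\psi$ is also satisfied, meaning the search will succeed for that $(\tau, \ell'_1,\dots,\ell'_k)$. The main obstacle, as anticipated in the Proof Overview, is the subformula $\psi$: one must verify both that same-edge identification is faithfully MSO-encodable through the stitches and tracking labels (requiring Lemmas~\ref{lem:crossable-label} and~\ref{lem:crossable-conncetion}), and that $\psi$ is actually satisfied by the preimage induced by a genuine solution, so that no solution is missed. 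The total running time is $\ell^k \cdot (k\ell)^{\bigoh(k\ell)} \cdot g(k+\ell) \cdot |G|^{\bigoh(1)}$ for a computable function $g$, yielding the claimed fixed-parameter tractability.
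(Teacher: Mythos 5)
Your proposal follows the same overall route as the paper's proof: branch over the exact crossing counts $\ell'_i$ and over template traces, construct an MSO formula from Lemma~\ref{lem:MSOtrace} enhanced with a double-crossing-avoiding subformula, invoke Courcelle's theorem on the patchwork graph, and run $\mathbf{A}$ on the resulting interpretation. However, two points need tightening, and both are exactly where the argument could go wrong. First, the paper's $\psi_\tau$ encodes \emph{two} conditions on any preimage: (1) every segment vertex visited by walk $\alpha'_i$ must be labeled crossable for $s_it_i$, and (2) no two segment vertices on the same walk belong to the same edge of $G$. You only encode~(2). Condition~(1) is not cosmetic: the corollary to Lemma~\ref{lem:crossable-conncetion} establishes the tracking-label-consistent path connection \emph{only for crossable-labeled} segment vertices, so if Courcelle's theorem hands back an interpretation that uses a non-crossable segment vertex, the path-based ``same edge'' test may silently miss a double crossing. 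Your claim that the path test is ``equivalent'' to the same-edge condition does not hold in general; you gesture at Lemma~\ref{lem:crossable-label}, but that lemma is only about the \emph{existence} of a solution with crossable crossings, not a license to omit the restriction from the formula.

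Second, you apply Lemma~\ref{lem:usetraces} as if its conclusion held for every preimage of every template trace, but its hypothesis requires that $\tau$ \emph{match some actual solution}. Courcelle's theorem returns an arbitrary satisfying interpretation, and for a template trace that matches no solution the lemma gives no guarantee about the output of $\mathbf{A}$. The paper handles this cleanly: it always applies $\mathbf{A}$ and then \emph{checks} whether the output is a solution, accepting only on a pass (soundness by verification), and separately argues completeness---by Lemma~\ref{lem:crossable-label} there is a solution $S$ whose trace walks are entirely crossable, this trace is a preimage of some $\tau'$ satisfying both conditions, so Courcelle succeeds for $\tau'$, and Lemma~\ref{lem:usetraces} together with condition~(2) guarantees that the resulting $S'$ is in fact a solution. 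With condition~(1) absent from $\psi$, this completeness argument breaks because the returned interpretation for the ``good'' $\tau'$ might be a non-crossable one for which the failure mode in~(2) goes undetected. Both gaps are local and fixable by adding the missing clause and the final verification step, but they must be made explicit.
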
}

\short{%
\begin{proof}[Proof Sketch]
	Given some instance $\mathcal I$ of \SDE,
	construct the patchwork graph \patchwork as described in Section~\ref{sec:patchworkdef}.
	Branch on the exact number of crossings between each added edge and \drawing and 
	the template trace matching a hypothetical solution as per Proposition~\ref{prop:constructtraces}.
	In each branch construct an MSO formula \(\psi_\tau\) using Lemma~\ref{lem:MSOtrace}, as well as Lemma~\ref{lem:crossable-conncetion} together with tracking labels to encode the fact that a preimage for the respective template trace can be embedded into \patchwork, in a way that the result of \textbf{A} applied to the preimage crosses only edge segments whose segment vertices are labeled as crossable for the respective added edge, and does not cross edge segments of the same edge in \drawing.
	Then employ Courcelle's theorem to decide if there is an interpretation of \(\psi_\tau\) in $\patchwork$.
	For each such interpretation apply \textbf{A} and check if the resulting drawing implies a solution for the \SDE instance.
	Reject if no branch encounters a solution.
	Correctness follows from Lemmas~\ref{lem:crossable-label} and \ref{lem:usetraces}.
\end{proof}}

\later{%
\begin{proof}
	We provide a fixed-parameter algorithm for \SDE. The algorithm begins by constructing the patchwork graph \patchwork as per Definition~\ref{def:patchwork}, including the crossability and other labels described in Section~\ref{sec:patchworkdef}.
	Then, we branch over the exact number of crossings of each added edge with \drawing and, for each branch we construct and branch over the set of all possible template traces as per Proposition~\ref{prop:constructtraces}. For each template trace $\tau$, we use Lemma~\ref{lem:MSOtracelong} to construct an MSO formula $\phi_\tau(\cdot)$ that checks for a preimage of $\tau$. We enhance $\phi_\tau(\cdot)$ with formulas which ensure that any preimage $(\alpha',\beta')$ obtained from its interpretation will satisfy the following conditions:
	(1) each walk \(\alpha_i \in \alpha'\) contains no segment vertex which is not labeled as crossable for \(s_it_i\);
		and (2) no two segment vertices occurring in the same walk in $\alpha'$ correspond to segments of the same edge in $G$.
		Encoding (1) is trivial.
		For (2) we can add formulas \(\phi^i_\tau(\cdot)\) for \(i \in [k]\) that exclude the existence of paths between segment vertices on \(\alpha'_i\) consisting only of segment and crossing vertices such that on these paths every time a crossing vertex succeeds a segment vertex \(v\) it is followed by a segment vertex \(v'\) such that \(v\) and \(v'\) have the same tracking label.
		We call the resulting formula \(\psi_\tau(\cdot)\).
	
	For each $\tau$, we now use Courcelle's theorem to compute an interpretation $\omega$ of $\psi_\tau(\cdot)$ or determine that there is none in $\patchwork$. For each interpretation computed in this way, we apply \textbf{A} and check if we get a solution to the original \SDE instance (before branching on the exact number of crossings for each added edge with \drawing). If we obtain a solution, we accept.
	
	If none of the formulas constructed in this way lead to a solution, then we reject. This is correct, as can be argued as follows.
	If the input instance $\mathcal{I}$ of \SDE has a solution, by Lemma~\ref{lem:crossable-label} there is a solution $S$ in which the drawing of \(s_it_i\) only crosses edge segments of which the corresponding segment vertices are labeled as crossable for \(s_it_i\).
	The solution trace of \(S\) must match some template trace $\tau'$. Moreover, the solution trace $(r_S,\eta_S)$ is a witness for the existence of a preimage of $\tau'$ which will also satisfy $\phi^i_{\tau'}(\cdot)$ (since we know that $S$ did not have any double-crossings). Hence the second case of Lemma~\ref{lem:usetraceslong} cannot occur, and we would have found an interpretation of $\psi_{\tau'}(\cdot)$ that leads to a solution.
\end{proof}}

\both{
	
	Theorem~\ref{thm:main} immediately implies the fixed-parameter tractability of \SCEI\ and \textsc{SLCEI} parameterized by $k+\ell$ (see Section~\ref{sec:prelims}). But the same approach can also be used to obtain the fixed-parameter tractability of the other problems of interest to us that were defined in the introduction, with only minor adaptations required.}

\both{%
\begin{theorem}[$\star$]
	\label{thm:therest}
	\textsc{S$\ell$-PEI}, $\ell$-\textsc{PEI} and \textsc{Locally Crossing-Minimal Edge Insertion} are fixed-parameter tractable when parameterized by $\ell+k$.
\end{theorem}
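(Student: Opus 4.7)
The plan is to piggy-back on the proof of Theorem~\ref{thm:main} and observe that each of the three variants requires only a targeted modification of the framework.

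For \textsc{Locally Crossing-Minimal Edge Insertion} the only difference from \SDE\ is that the resulting drawing need not be simple, i.e., new edges may cross existing edges multiple times. Since the stitches of Section~\ref{sec:patchwork}, the tracking labels of Section~\ref{sec:patchworkdef}, and the clauses $\phi^i_\tau$ from the proof of Theorem~\ref{thm:main} exist exclusively to forbid such double-crossings, I will skip Steps~6--7 of Definition~\ref{def:patchwork} (so each hole becomes a single face of \patchwork), which only simplifies the diameter/treewidth argument of Lemma~\ref{lem:patchwork}, and omit the $\phi^i_\tau$ clauses from $\psi_\tau$. Template traces, the assembly procedure $\mathbf{A}$, and the Courcelle-based enumeration are unchanged.

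For $\ell$-\textsc{PEI} I start from the simpler patchwork graph just described and additionally enforce that every edge of $G$ ends up with at most $\ell$ crossings. Concretely, I precompute for each $e \in E(G)$ its current crossing count $c_e$; if $c_e > \ell$ the instance is infeasible, otherwise I set the budget $b_e = \ell - c_e$ and attach it as an extra label to the segment vertices of $e$. I then extend $\psi_\tau$ by a clause stating that, for every edge of $G$, at most $b_e$ of the walks $\alpha'_1,\dots,\alpha'_k$ contain a segment vertex belonging to $e$. ``Segment vertices of the same edge of $G$'' are characterised (in the absence of stitches) by a tracking-consistent segment/crossing path exactly as in Lemma~\ref{lem:crossable-conncetion}, and since $b_e\leq \ell$ the clause is expressible in MSO by forbidding, for each candidate edge, a $(b_e+1)$-tuple of walks sharing a same-edge segment vertex. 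The resulting formula still has size bounded by a function of $k+\ell$.

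For \textsc{S$\ell$-PEI} I combine both ingredients: keep the full stitches/tracking machinery of Theorem~\ref{thm:main} (needed to preserve simplicity) with $\ell_i=\ell$ for every $i \in [k]$, and on top of that add the per-edge budget clause from the $\ell$-\textsc{PEI} adaptation. The treewidth bound of Lemma~\ref{lem:patchwork} is unaffected since no new vertices are introduced into \patchwork, and $\psi_\tau$ remains of parameter-bounded size. The main obstacle is making sure that the same-edge relation needed for the budget clause interacts correctly with stitches, but Lemma~\ref{lem:crossable-conncetion} already packages precisely this relation into MSO-usable form. Correctness in each case then follows from the obvious analogues of Lemmas~\ref{lem:crossable-label} and~\ref{lem:usetraces} for the new constraints.
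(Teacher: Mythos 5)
You take a slightly different route than the paper: the paper keeps the full patchwork construction (including stitches) for all three variants and only drops the $\phi^i_\tau$ subformulas for the two non-simple problems, whereas you rebuild the patchwork without stitches (Steps 6--7 of Definition~\ref{def:patchwork}) for \textsc{LCMEI} and $\ell$-\textsc{PEI}. Your \textsc{LCMEI} argument (drop stitches, drop $\phi^i_\tau$) and your \textsc{S$\ell$-PEI} argument (keep stitches, add a budget clause) are sound and essentially parallel to the paper's. The $\ell$-\textsc{PEI} argument, however, has a genuine gap.

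You drop the stitches and then appeal to Lemma~\ref{lem:crossable-conncetion} to characterise ``segment vertices of the same edge.'' That lemma relies precisely on the stitches: the path it promises between crossable segment vertices of an $H$-torn edge $e$ passes through the threads (and their planarizing stitches) placed inside the hole. If you remove Steps 6--7, the different edge parts of $e$ lie in different connected components of \patchwork, and no tracking-consistent segment/crossing path exists between them, so your per-edge budget clause treats each part of $e$ as an independent edge with its own budget $b_e$. It therefore fails to cap the \emph{total} number of new crossings of $e$ when a solution crosses parts of $e$ on opposite sides of a hole: a preimage adding $b_e$ crossings to each of two crossable parts of $e$ would pass your clause even though $e$ then has $c_e + 2b_e > \ell$ crossings. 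There is also a smaller counting issue: you forbid ``a $(b_e+1)$-tuple of walks sharing a same-edge segment vertex,'' which counts \emph{walks} rather than \emph{crossings}; in the non-simple setting a single walk $\alpha'_i$ may visit several segment vertices of $e$, so the clause undercounts, whereas the paper counts distinct segment vertices of $e$ in the interpretation. (For \textsc{S$\ell$-PEI} this second point is harmless since $\phi^i_\tau$ already forbids a walk from touching the same edge twice.) The simplest repair for $\ell$-\textsc{PEI} is to keep the stitches/tracking machinery as the paper does; alternatively, as the introduction remarks, one could adapt Step~1 so that no edge becomes $H$-torn---but your proposal does neither.
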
}

\later{%
\begin{proof}
	For \textsc{S$\ell$-PEI}, we simply need to adapt the proof of Theorem~\ref{thm:main} in a way which ensures that after we add the solution curves to $\drawing$, every edge in $G$ still has at most $\ell$ crossings. To do this, first we'll enrich the labeling in $\patchwork$ as follows: for each segment vertex in $\patchwork$ which is labeled as crossable for any added edge in \newedges that correspond to an edge $e$ in $\drawing$, we add a ``\emph{crossings}'' label which contains the total number of crossings of $e$ in $\drawing$. Otherwise, we proceed exactly as in the proof of Theorem~\ref{thm:main}, up to the construction of the formula $\psi_\tau(\cdot)$. At that point, we enrich $\psi_\tau(\cdot)$ with an auxiliary MSO formula which does the following. For each segment vertex $v$ in $\patchwork$ whose crossings-label is $i$ (for $i\in \{0,\dots,\ell\}$), it identifies a set $L$ of \emph{all} segment vertices belonging to the edge corresponding to $v$. Then, for each such individually identified set $L$, the formula requires that $L$ intersects its interpretation in at most $\ell-i$ distinct vertices. This auxiliary formula is easy to construct and ensures that each edge in $\drawing$ will contain at most $\ell$-many crossings after the insertion of the solution identified by $\psi_\tau(\cdot)$, as required.
	
	For $\ell$-textsc{PEI} and also \textsc{Locally Crossing-Minimal Edge Insertion}, it suffices to follow precisely the steps used to solve \textsc{S$\ell$-PEI} and \SDE, and then remove the separate subformulas (denoted $\phi^i_{\tau'}(\cdot)$ in the proof of Theorem~\ref{thm:main}) that were used to ensure simplicity from the final constructed MSO formulas.
\end{proof}}

 	\section{Adding a Single Edge}
 	\label{sec:onedge}
 	\latertitle{\section{Extended Version of Section~\ref{sec:onedge}}}
	\both{In this section we present a single-exponential fixed-parameter algorithm for \SCEI\ parameterized by $\ell$ in the case where $|F|=1$; we hereinafter denote this problem \SConeEI.
	We remark that this algorithm is tight under the exponential time hypothesis~\cite{ImpagliazzoPZ01}, since Arroyo et al.~\cite{ArroyoKPSVW20} gave a reduction from 3-SAT to the simple drawing extension problem with $1$ extra edge, and the number of edges in the obtained graphs is linear in the size of the 3-SAT instance.
	}

	\short{
	As the first step towards our algorithm, we transform \SConeEI\
	to the problem of finding a colorful path (i.e., a path where no color is repeated) of length at most $\kappa$ in a vertex-colored graph obtained from $\planarization{\drawing}$.
	Finding this colorful path then amounts to a
	straightforward application of so-called \emph{representative sets}, see e.g.~\cite[Chapter~12]{CyganFKLMPPS15}.}

\later{
	As the first step towards our algorithm, we transform \SConeEI\
	to the problem of finding a colorful path of length at most $\kappa$ in the dual of the planarization of the given drawing.
	Finding this colorful path then amounts to a
	straightforward application of so-called \emph{representative sets}, see e.g.~\cite[Chapter~12]{CyganFKLMPPS15}.
	
	\begin{center}
		\begin{boxedminipage}{0.98 \columnwidth}
			\textsc{Colorful Short Path}\\[2pt]
			\begin{tabular}{l p{0.80 \columnwidth}}
				Input: & A graph $G$ with a vertex-coloring $\chi : V(G) \rightarrow [|V(G)|]$,
				two vertices $s,t \in V(G)$, and a positive integer $\kappa$.\\
				Question: & Does $G$ have a colorful $s$-$t$-path (i.e., a path where no color occurs more than once) of length at most $\kappa$?
			\end{tabular}
		\end{boxedminipage}
	\end{center}}
	
\both{\begin{proposition}[$\star$]		
	\label{prop:oneedgetopaths}
	There is a linear-time reduction that converts an instance $(G, \drawing, \{st\}, \ell)$ of \SConeEI\ to an equivalent instance $(G,\chi,s,t,2\ell+3)$ of \textsc{Colorful Short Path}.
	\end{proposition}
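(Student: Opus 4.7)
The plan is to model simple drawings of the added edge $st$ with at most $\ell$ crossings as short colorful paths in an auxiliary graph derived from the planarization $\pgraph$. I would build this auxiliary graph $H$ as follows. Start from the face-dual of $\pgraph$; subdivide each of its edges by a new vertex, called a \emph{segment vertex}; and add the endpoints $s$ and $t$ of the new edge as fresh vertices, each connected to the face-vertices of the faces of $\drawing$ whose boundary contains $s$ (respectively $t$), possibly via a small number of auxiliary access vertices to encode the rotation of $\drawing$ at each endpoint. The coloring $\chi$ gives every face-vertex its own color, gives $s$ and $t$ their own colors, and colors all segment vertices by the original edge of $G$ on which the underlying edge segment lies, so that segment vertices subdividing the same edge of $G$ share a color. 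The construction is plainly linear in the size of $\pgraph$.

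For the forward direction, suppose there is a simple extension in which the drawn curve $\gamma$ of $st$ has at most $\ell$ crossings. After shortcutting any unnecessary revisits of faces---which is always possible inside a face without adding new crossings---$\gamma$ visits a sequence $f_1, \ldots, f_m$ of pairwise distinct faces of $\pgraph$ and crosses $m-1 \leq \ell$ edge segments, each lying on a distinct edge of $G$ by simplicity. Translating this into a walk of the form $s, f_1, x_1, f_2, x_2, \ldots, x_{m-1}, f_m, t$ in $H$, where each $x_j$ is the segment vertex corresponding to the $j$-th crossing, yields an $s$-$t$-path whose visited colors are pairwise distinct and whose length is at most $2\ell+3$.

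Conversely, any colorful $s$-$t$-path $\pi$ in $H$ of length at most $2\ell+3$ must, by the structure of $H$, alternate face-vertices and segment vertices between $s$ and $t$ and use at most $\ell$ segment vertices in total. Thus $\pi$ encodes a sequence of faces of $\pgraph$ and a sequence of edge segments to cross. Inside each face $f_i$ we draw a simple curve between the specified entry and exit points on its boundary; colorfulness of segment vertices ensures each edge of $G$ is crossed at most once, while colorfulness of face-vertices ensures no face is revisited, so the in-face pieces can be stitched into a simple $s$-$t$ curve $\gamma$ with at most $\ell$ crossings. Hence $\gamma$ realizes a simple extension of $\drawing$ by $st$ and the two instances are equivalent.

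The main obstacle is correctly handling simplicity of $\gamma$ near the endpoints $s$ and $t$: one must ensure that the resulting curve leaves $s$ in a well-defined position of the rotation around $s$ in $\drawing$ (and analogously at $t$) and that it does not create a tangency with an edge incident to $s$ or $t$. This is also where the precise constant $+3$ in the length bound originates, since the endpoint bookkeeping may cost an auxiliary vertex on the path beyond the $2\ell+2$ edges of the alternating face/segment sequence itself. Once this endpoint handling is discharged via the auxiliary access vertices in $H$, both directions of the equivalence go through as sketched.
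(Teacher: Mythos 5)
Your core construction is essentially the one in the paper: subdivide the planar dual of $\pgraph$ so that it has cell vertices and segment vertices, attach $s$ and $t$ to the cell vertices of the cells of $\drawing$ whose boundaries contain them, color cell vertices (and $s$, $t$) with distinct fresh colors, and color every segment vertex with the color of the original edge of $\drawing$ it lies on. Your forward and backward translations (curve $\leftrightarrow$ alternating $s$--cell--segment--cell--$\cdots$--cell--$t$ walk) also match. So the approach is the same, but a few of your added ingredients do not hold up.

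First, the ``auxiliary access vertices'' are not part of the paper's construction and are not needed: $s$ and $t$ are joined \emph{directly} to the cell vertices of incident cells. Since rotation around $s$ only determines \emph{which} cell the curve leaves $s$ into, attaching $s$ to one cell vertex per incident cell already captures all the relevant information; nothing further about the cyclic order is needed. Consequently, the explanation you give for the $+3$ is wrong. With $j$ crossings the walk $s, c_1, \sigma_1, c_2, \ldots, \sigma_j, c_{j+1}, t$ has length $2j+2 \le 2\ell+2$, and the budget $2\ell+3$ is simply slack: by the bipartite-plus-pendants structure of $G^*$, every $s$--$t$ path has even length, so ``length at most $2\ell+3$'' forces length at most $2\ell+2$ and hence at most $\ell$ crossings. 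No extra vertex is consumed by ``endpoint bookkeeping.''

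Second, the issue you raise about edges incident to $s$ and $t$ is real, but you mislabel it and do not actually resolve it. It is not a tangency problem; it is that in a simple drawing the new edge $st$ may not \emph{cross} any edge already incident to $s$ or to $t$ (they would then share both a common endpoint and a proper crossing). In your graph $H$ a colorful $s$--$t$ path can, as stated, happily use a segment vertex lying on an edge $sv$, which does not correspond to any valid extension. Access vertices encoding the rotation do not prevent this. The clean fix is to make those walks non-colorful by construction: give every segment vertex of an edge incident to $s$ the color of $s$, and likewise for $t$ (equivalently, delete those segment vertices from $H$). This is a genuine gap in your argument; the paper's own proof is also terse on this point, but your sketch introduces machinery that does not close it, so you should replace the ``auxiliary access vertices'' idea with this coloring-based exclusion and then the two directions go through as you describe.
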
}
	
\later{
	\begin{proof}
	Consider an instance $(G,\drawing,\{st\},\ell)$ of \SConeEI.
	We transform this instance into an instance of \textsc{Colorful Short Path} as follows.
	
	Let $G^*_0$ be the plane dual of the planarization of \drawing, and let $G^*$ be obtained from $G^*_0$ by subdividing every edge. In particular, $G^*$ is a bipartite graph where every vertex corresponds to either a cell (i.e., it is a cell vertex) or an edge segment (in which case we call it a segment vertex) in \drawing, and adjacencies represent incidences between cells and edge segments in \drawing. Moreover, we add $s$ and $t$ to $G^*$ as special marker vertices, and for each cell $c$ incident to $s$ (or to $t$) in $\drawing$, we add the edge $sc$ ($ct$) to $G^*$. Next, we assign to each edge in \drawing a unique color $i \in [|E(G)|]$. For the coloring function $\chi$, we use assign a unique color to each cell vertex of $G^*$ as well as to $s$ and $t$. Finally, for each segment vertex $e$, we color it using the color of its primal edge in \drawing.
	
	
		To conclude the proof, notice that $(G,\drawing,\{st\},\ell)$ is a \texttt{yes}-instance of \SConeEI\ if and only if $G^*$ contains a colorful $s$-$t$ path of length $2\ell+3$.
	Indeed, every drawing of the edge $st$
	with endpoints $s$ and $t$ in $\drawing$ allows us to construct such a colorful path---simply use the cells visited by that curve and the edge segments it intersects; colorful-ness follows from the fact that the curve cannot cross an edge in $\drawing$ more than once.
	Similarly, every colorful $s$-$t$-path allows us to obtain a solution to $(G,\drawing,\{st\},\ell)$ by having the solution curve intersect the cells and edge segments specified by the $s$-$t$ path, in the same order in which they are visited by the path.
	\end{proof}

	Our algorithm for \textsc{Colorful Short Path} uses representative sets in a manner that is similar to the presentation provided in, e.g., the book by Cygan et al.~\cite{CyganFKLMPPS15}. Intuitively, for each vertex $u$ in an instance of \textsc{Colorful Short Path}, the algorithm will dynamically compute a family of color sets, where the colors used by each $s$-$u$ path will form one set in the family. The caveat is that this family may become too large to effectively compute and store. Once a family for some vertex $u$ becomes larger than a specified bound (depending only on $\kappa$), representative sets allow us to prune some sets from our family while maintaining the property we need---in particular, if there was a way to extend the original family to a solution, then there is a way to extend the pruned family to a solution as well.
	
	Formally, one can define $q$-representative sets on matroids.
	Let $\mathcal M = (E,\mathcal I)$ be a matroid over universe $U$.
	We say that $A \subseteq U$ \emph{fits} $B\subseteq U$ if $A \cap B = \emptyset$ and $A\cup B$ is independent for $\mathcal M$.
	
	\begin{definition}[Definition~12.14 in \cite{CyganFKLMPPS15}]
		Let $\mathcal M$ be a matroid and $\mathcal A$ be a family of sets of size $p$ in $\mathcal M$.
		A subfamily $\mathcal A' \subseteq \mathcal A$ is said to $q$-represent $\mathcal A$ if for every set $B$ of size $q$ such that
		there is an $A \in \mathcal A$ that fits $B$, there is an $A' \in \mathcal A'$ that also fits $B$.
		If $\mathcal A'$ $q$-represents $\mathcal A$, we write $\mathcal A'  \qrep \mathcal A$
	\end{definition}
	
	While representative sets can be computed even for more general matroids,
	we only require uniform matroids of bounded rank.
	The following theorem allows us to compute a $q$-representative family
	of a family of $p$ sets in $\FPT$ time in $p$ and $q$.
	The following theorem was independently proven by Fomin et al.~\cite{DBLP:journals/jacm/FominLPS16} and
	Shachnai and Zehavi~\cite{DBLP:journals/corr/ShachnaiZ14}.
	
	\begin{theorem}[Theorem~12.31 in \cite{CyganFKLMPPS15}, see also \cite{DBLP:journals/jacm/FominLPS16,DBLP:journals/corr/ShachnaiZ14}]
		\label{thm:qrep}
		Let $\mathcal M$ be a uniform matroid over universe $U$ and $\mathcal A$ be a $p$-family of independent sets of $\mathcal M$.
		There is an algorithm that given $\mathcal A$, a rational number $0 < x < 1$,  and an integer $q$ 
		computes a $q$-representative family $\mathcal A' \qrep \mathcal A$
		of size at most $x^{-p}(1-x)^{-q}2^{o(p+q)}$ in time $|\mathcal A|(1-x)^{-q}2^{o(p+q)}\log|U|$.
	\end{theorem}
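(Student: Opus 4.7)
The plan is to prove the bound in two stages: first, a randomized construction that hits every ``bad'' set with a controlled biased distribution; second, a derandomization via an explicit small family of colorings of $U$, and finally an iterative pruning of $\mathcal{A}$ using the derandomized family that drives the size down to the claimed $x^{-p}(1-x)^{-q}2^{o(p+q)}$.

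For the randomized core, I would fix the parameter $x$ and consider a random two-coloring of $U$ obtained by independently coloring each element \emph{red} with probability $x$ and \emph{blue} with probability $1-x$. For any fixed disjoint pair $(A,B)$ with $|A|=p$, $|B|=q$, the probability that $A$ is entirely red and $B$ is entirely blue equals $x^{p}(1-x)^{q}$. From each random coloring, I retain an arbitrary single set $A'\in \mathcal{A}$ that is fully red (if any), forming a candidate subfamily. Repeating the sampling sufficiently many times ensures, with high probability, that for every $(A,B)$ with $A$ fitting $B$ there is some $A'$ in the retained subfamily that is also fully red under the same coloring in which $A$ and $B$ are red/blue, respectively; such an $A'$ must be disjoint from $B$ (being fully red) and, since its rank in the uniform matroid is at most $p$, fits $B$ as well.

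To make this constructive, I would derandomize using an explicit \emph{$(|U|,p+q)$-lopsided universal family} in the sense of Naor--Schulman--Srinivasan: a collection of functions $U\to\{R,B\}$ such that for every disjoint pair $(A,B)$ of sizes $p,q$, some function colors $A$ fully red and $B$ fully blue. Such families exist of size $x^{-p}(1-x)^{-q}2^{o(p+q)}\log|U|$ and are constructible in the corresponding time, using the reduction through $(|U|,p+q)$-universal sets combined with a biased-probability composition. Then I enumerate colorings in this family and, for each, select at most one red-monochromatic representative from $\mathcal{A}$.

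The final step is the pruning procedure that removes the $\log|U|$ factor from the \emph{output} size: process $\mathcal{A}$ by scanning each element of $\mathcal{A}$ in turn and deciding whether to add it to the current candidate $\mathcal{A}'$ or discard it, based on whether it is a red-monochromatic witness for a coloring that is not yet witnessed by $\mathcal{A}'$; at the end, $|\mathcal{A}'|$ is bounded by the number of colorings in the universal family, giving the claimed time bound $|\mathcal{A}|(1-x)^{-q}2^{o(p+q)}\log|U|$ and the size bound $x^{-p}(1-x)^{-q}2^{o(p+q)}$ after absorbing the $\log|U|$ into the $2^{o(p+q)}$ factor by the standard ``two-level'' construction (first reduce $U$ to a universe of size $p+q$ via a perfect hash family). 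The main obstacle will be the derandomization itself --- in particular, carefully glueing the Naor--Schulman--Srinivasan splitter construction with the biased probability $x$ so that the size dependence factorizes as $x^{-p}(1-x)^{-q}$ rather than degrading to the worst-case bound $\binom{p+q}{p}\le 2^{p+q}$.
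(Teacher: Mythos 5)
The paper does not prove this statement at all: it is quoted verbatim as an external result (Theorem~12.31 in Cygan et al., due to Fomin--Lokshtanov--Panolan--Saurabh and Shachnai--Zehavi), so the only meaningful comparison is with the proofs in those sources. Your outline does follow the same route as those proofs---biased random subsets (each element red with probability $x$), derandomization by an explicit lopsided universal family built from Naor--Schulman--Srinivasan splitters, and then keeping one all-red member of $\mathcal{A}$ per set in the family, with fitting in the uniform matroid reduced to disjointness. That architecture is right, and your randomized core and the ``one representative per coloring'' correctness argument are fine.

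However, the proposal has genuine gaps exactly at the points that make the theorem nontrivial. First, the decisive step---an explicit family with overhead only $2^{o(p+q)}$ \emph{and} the factorization $x^{-p}(1-x)^{-q}$ rather than $\binom{p+q}{p}$ or $2^{p+q}$---is not proved but deferred (``the main obstacle will be the derandomization itself''); that construction is the heart of the cited result. Second, the one concrete mechanism you do propose, ``first reduce $U$ to a universe of size $p+q$ via a perfect hash family,'' does not work: a perfect hash family onto a range of size exactly $p+q$ necessarily has size on the order of $e^{p+q}\,\mathrm{poly}(p+q)\log|U|$, and the $e^{p+q}=2^{\Theta(p+q)}$ factor destroys both the $2^{o(p+q)}$ slack and the biased factorization. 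The actual constructions reduce instead to a universe of size $(p+q)^2$ via an $\bigl(|U|,p+q,(p+q)^2\bigr)$-splitter of size $\mathrm{poly}(p+q)\log|U|$, and obtain the $2^{o(p+q)}$ overhead on the small universe by a further splitting into roughly $(p+q)/\log(p+q)$ blocks with brute force inside blocks. Third, the bookkeeping in your final step does not deliver the stated bounds: keeping one representative per coloring gives $|\mathcal{A}'|\le x^{-p}(1-x)^{-q}2^{o(p+q)}\log|U|$, and the $\log|U|$ cannot be ``absorbed'' into $2^{o(p+q)}$ since it is not a function of $p+q$; likewise, deciding for each $A\in\mathcal{A}$ whether it witnesses a not-yet-witnessed coloring by testing all colorings costs an extra $x^{-p}$ factor over the claimed running time. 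The cited proofs avoid both issues by building the separating collection together with degree bounds and query data structures (every $p$-set is all-red in only $(1-x)^{-q}2^{o(p+q)}\log|U|$ members, which can be enumerated directly), and this machinery is missing from your plan.
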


	A useful property of representative sets is that they are transitive.
	\begin{lemma}[Lemma~12.27 in \cite{CyganFKLMPPS15}]
		\label{lem:qreptrans}
		If $\widehat{\mathcal A}$ $q$-represents $\widetilde{\mathcal A}$ and $\widetilde{\mathcal A}$ $q$-represents $\mathcal A$, 
		then $\widehat{\mathcal A}$ $q$-represents $\mathcal A$.
	\end{lemma}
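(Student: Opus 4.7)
The plan is a direct two-step chase through the definition of $q$-representation. The approach is to fix an arbitrary set $B \subseteq U$ of size $q$ together with an element $A \in \mathcal{A}$ that fits $B$; the goal is then to exhibit some element of $\widehat{\mathcal{A}}$ that also fits $B$, which is exactly what is required to conclude $\widehat{\mathcal{A}} \qrep \mathcal{A}$.

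The first key step is to invoke the hypothesis $\widetilde{\mathcal{A}} \qrep \mathcal{A}$ on the witness $A$: by the definition of $q$-representation, the existence of $A \in \mathcal{A}$ fitting $B$ produces some $\widetilde{A} \in \widetilde{\mathcal{A}}$ that also fits $B$. The second key step is to feed $\widetilde{A}$ into the other hypothesis $\widehat{\mathcal{A}} \qrep \widetilde{\mathcal{A}}$: the existence of $\widetilde{A} \in \widetilde{\mathcal{A}}$ fitting $B$ yields a $\widehat{A} \in \widehat{\mathcal{A}}$ that fits $B$. Since $B$ was arbitrary, this is exactly the statement of $\widehat{\mathcal{A}} \qrep \mathcal{A}$.

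There is no genuine obstacle: the argument is pure composition, and the only thing to verify is that the witness set $B$ is propagated unchanged through both applications of the definition (which it is, since neither application modifies $B$). A minor bookkeeping point is that the subfamily convention in the definition of $q$-representation implicitly enforces the chain $\widehat{\mathcal{A}} \subseteq \widetilde{\mathcal{A}} \subseteq \mathcal{A}$, so the conclusion $\widehat{\mathcal{A}} \qrep \mathcal{A}$ is at least syntactically meaningful; this consistency check is the only ``subtlety'' worth recording, and it plays no further role in the argument.
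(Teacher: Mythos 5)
Your proof is correct and is exactly the canonical two-step composition argument for transitivity of $q$-representation. The paper states this lemma as a citation to Cygan et al.\ (Lemma~12.27) and does not reprove it, so there is no distinct approach to compare against; your argument matches what one would find in the cited reference, and the remark about the subfamily chain $\widehat{\mathcal A} \subseteq \widetilde{\mathcal A} \subseteq \mathcal A$ is a correct bookkeeping check.
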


	Furthermore, the union of representative sets is again a representative set.
	\begin{lemma}[Lemma~12.26 in \cite{CyganFKLMPPS15}]
		\label{lem:qrepunion}
		If $\mathcal A_1$ and $\mathcal A_2$ are both $p$-families, $\mathcal A_1'$ $q$-represents $\mathcal A_1$ and
		$\mathcal A_2'$ $q$-represents $\mathcal A_2$, then $\mathcal A_1' \cup \mathcal A_2'$ $q$-represents $\mathcal A_1\cup \mathcal A_2$.
	\end{lemma}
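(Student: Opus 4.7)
The plan is to prove the statement directly from the definition of $q$-representation, as it is essentially a routine unpacking of what it means for a subfamily to $q$-represent a family. I would begin by fixing an arbitrary set $B \subseteq U$ of size $q$ such that some $A \in \mathcal A_1 \cup \mathcal A_2$ fits $B$; the goal is then to exhibit an $A' \in \mathcal A_1' \cup \mathcal A_2'$ that also fits $B$.

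Next I would do a case distinction on which of the two families contains the witness $A$. If $A \in \mathcal A_1$, then $A$ is a $p$-element set of $\mathcal A_1$ that fits $B$, so the assumption that $\mathcal A_1' \xrep{q} \mathcal A_1$ immediately yields an $A' \in \mathcal A_1'$ that fits $B$, and therefore $A' \in \mathcal A_1' \cup \mathcal A_2'$ as required. The case $A \in \mathcal A_2$ is entirely symmetric, using $\mathcal A_2' \xrep{q} \mathcal A_2$. (If $A$ happens to lie in both families, either case works.) Since $B$ was arbitrary, this establishes that $\mathcal A_1' \cup \mathcal A_2'$ $q$-represents $\mathcal A_1 \cup \mathcal A_2$.

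There is essentially no obstacle here: the union operation respects representation because the existential quantifier in the definition is over \emph{any} set in the represented family, so a witness coming from either side transfers to a witness on the corresponding representative side, which sits inside the union. The only thing worth stating explicitly at the start of the proof is that both $\mathcal A_1 \cup \mathcal A_2$ and $\mathcal A_1' \cup \mathcal A_2'$ are still $p$-families (all their elements have size $p$), so that the notion of $q$-representation is well-defined for the pair in question. No matroid-specific machinery, and in particular no computation of representative sizes (as in Theorem~\ref{thm:qrep}) is needed for the statement itself.
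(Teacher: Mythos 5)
Your proof is correct: the paper itself gives no proof of this lemma (it is imported verbatim as Lemma~12.26 from Cygan et al.~\cite{CyganFKLMPPS15}), and your direct unpacking of the definition --- fix $B$ of size $q$, case-split on whether the witness $A$ lies in $\mathcal A_1$ or $\mathcal A_2$, and transfer via the corresponding hypothesis --- is exactly the standard argument. The only point worth adding explicitly is that $\mathcal A_1' \cup \mathcal A_2' \subseteq \mathcal A_1 \cup \mathcal A_2$, so the union is indeed a subfamily as the definition of $q$-representation requires; this is immediate from $\mathcal A_i' \subseteq \mathcal A_i$.
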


	To build the representative sets we make use of the operation of set convolution to iteratively construct the representative families.
	Let $\mathcal A$ and $\mathcal B$ be two families, with $\mathcal A * \mathcal B$ 
	we denote the \emph{convolution} of $\mathcal A$ and $\mathcal B$, i.e.
	\begin{align*}
		\mathcal A * \mathcal B = \{A \cup B \mid A \in \mathcal A, B \in \mathcal B, \emph{and } A \cap B = \emptyset\}.
	\end{align*}

	\begin{lemma}[Lemma~12.28 in \cite{CyganFKLMPPS15}]
		\label{lem:qrepconv}
		Let $\mathcal A_1$ be a $p_1$-family and $\mathcal A_2$ be a $p_2$-family. Suppose $\mathcal A_1'$ $(k-p_1)$-represents $\mathcal A_1$ and $\mathcal A_2'$ $(k-p_2)$-represents $\mathcal A_2$. Then $\mathcal A_1'*\mathcal A_2'$ $(k-p_1-p_2)$-represents $\mathcal A_1 * \mathcal A_2$.
	\end{lemma}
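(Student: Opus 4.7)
The plan is to prove \Cref{lem:qrepconv} by a straightforward ``two-step swap'' argument that uses each of the representation hypotheses once. Specifically, I would pick an arbitrary witness $C \in \mathcal A_1 * \mathcal A_2$ that fits some set $B$ of size $k - p_1 - p_2$, and then construct a corresponding witness in $\mathcal A_1' * \mathcal A_2'$ by successively replacing the two components of $C$.

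In more detail: write $C = A_1 \cup A_2$ with $A_1 \in \mathcal A_1$, $A_2 \in \mathcal A_2$, and $A_1 \cap A_2 = \emptyset$, and assume $C$ fits $B$, i.e.\ $C \cap B = \emptyset$ and $C \cup B$ is independent. The first step is to observe that $A_2 \cup B$ has size exactly $p_2 + (k - p_1 - p_2) = k - p_1$, is disjoint from $A_1$, and together with $A_1$ forms an independent set; hence $A_1$ fits $A_2 \cup B$. Applying the hypothesis that $\mathcal A_1'$ $(k - p_1)$-represents $\mathcal A_1$ yields some $A_1' \in \mathcal A_1'$ that also fits $A_2 \cup B$, meaning $A_1' \cap (A_2 \cup B) = \emptyset$ and $A_1' \cup A_2 \cup B$ is independent.

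The second step is symmetric: consider $A_1' \cup B$, which now has size $p_1 + (k - p_1 - p_2) = k - p_2$, is disjoint from $A_2$ (since $A_1' \cap A_2 = \emptyset$ from the previous step and $A_2 \cap B = \emptyset$ by assumption), and together with $A_2$ forms the independent set $A_1' \cup A_2 \cup B$. Hence $A_2$ fits $A_1' \cup B$, and by the hypothesis that $\mathcal A_2'$ $(k - p_2)$-represents $\mathcal A_2$ there is $A_2' \in \mathcal A_2'$ that fits $A_1' \cup B$. Then $A_1' \cap A_2' = \emptyset$, so $A_1' \cup A_2' \in \mathcal A_1' * \mathcal A_2'$, and by construction $(A_1' \cup A_2') \cap B = \emptyset$ while $A_1' \cup A_2' \cup B$ is independent, so $A_1' \cup A_2'$ fits $B$, completing the proof.

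The only subtlety I expect is bookkeeping: one has to verify at each step that the set being swapped out has the exact size matching the representation parameter, and that disjointness of the swapped-in replacement with both the other component and $B$ is preserved. Since the definition of ``fits'' already packages disjointness with independence, these checks are routine, so I anticipate no real obstacle beyond careful arithmetic on $p_1$, $p_2$, and $k$.
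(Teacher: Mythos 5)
Your proof is correct. Note, however, that the paper does not prove this lemma at all — it cites it directly as Lemma~12.28 of Cygan et al.\ and uses it as a black box. Your two-step swap argument (first swap $A_1$ for some $A_1' \in \mathcal A_1'$ fitting $A_2 \cup B$, then swap $A_2$ for some $A_2' \in \mathcal A_2'$ fitting $A_1' \cup B$) is exactly the standard proof given in that reference, and the size and disjointness bookkeeping you carry out is precisely what is needed; there is no gap.
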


	The proof of the following theorem closely follows the ones given by Cygan et al.~\cite[Chapter~12]{CyganFKLMPPS15}
	for longest cycle and path.}
	
	\both{\begin{theorem}[$\star$]
		\label{thm:colorfulpath}
		\textsc{Colorful Short Path} can be solved in time $\bigoh(2^{\bigoh(\kappa)}\cdot |E(G)|\log |V(G)|)$.
	\end{theorem}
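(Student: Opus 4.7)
The plan is a standard dynamic-programming scheme on path length that uses representative families to prevent a blow-up of stored color sets. For every vertex $v \in V(G)$ and every $i \in \{0,1,\ldots,\kappa\}$ we will maintain a family $\mathcal A_{v,i}$ of $(i+1)$-element subsets of the color universe $[|V(G)|]$ such that $\mathcal A_{v,i}$ $(\kappa-i)$-represents the family $\mathcal B_{v,i}$ of all color sets realised by colorful $s$-$v$ walks of length $i$ in $G$. Because no color is repeated along such a walk, and because a repeated vertex would repeat its own color, every such walk is in fact a path; hence tracking colors is sufficient. Note that we work with the uniform matroid of rank $\kappa+1$ over the color universe, so every color set arising in the computation is independent.

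Initialisation sets $\mathcal A_{s,0}=\{\{\chi(s)\}\}$ and $\mathcal A_{v,0}=\emptyset$ for $v\neq s$. In the transition from length $i$ to length $i+1$, we form for each vertex $v$ the family
\[
\mathcal C_{v,i+1} \;=\; \bigcup_{u\,:\,uv\in E(G),\ \chi(v)\notin A}\{A\cup\{\chi(v)\}\mid A\in \mathcal A_{u,i}\},
\]
and then invoke Theorem~\ref{thm:qrep} with $p=i+2$ and $q=\kappa-i-1$ (choosing, e.g., $x=(i+2)/(\kappa+1)$) to obtain $\mathcal A_{v,i+1}\qrep \mathcal C_{v,i+1}$ of size $2^{\bigoh(\kappa)}$ in time $|\mathcal C_{v,i+1}|\cdot 2^{\bigoh(\kappa)}\log|V(G)|$. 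Finally, we answer \texttt{yes} iff $\mathcal A_{t,i}\neq\emptyset$ for some $i\le 2\ell+3=\kappa$.

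Correctness follows from the interplay of Lemmas~\ref{lem:qrepunion}, \ref{lem:qreptrans}, and~\ref{lem:qrepconv}: the ``single-edge extension'' can be viewed as a convolution with the $1$-family $\{\{\chi(v)\}\}$, so extending a $(\kappa-i)$-representative family yields a $(\kappa-i-1)$-representative family of $\mathcal B_{v,i+1}$; taking unions over neighbours (Lemma~\ref{lem:qrepunion}) and then pruning via Theorem~\ref{thm:qrep} preserves the representation by transitivity (Lemma~\ref{lem:qreptrans}). In particular, if some colorful $s$-$t$ path of length $i\le\kappa$ exists, the empty target set fits its color set, so $\mathcal A_{t,i}$ is nonempty. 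For the running time, each length step processes every edge $uv$ at most $O(1)$ times; since $|\mathcal A_{u,i}|\le 2^{\bigoh(\kappa)}$, the total work across all $\kappa$ length steps is $\bigoh(2^{\bigoh(\kappa)}\cdot |E(G)|\log|V(G)|)$, matching the claimed bound. The only subtle point (the main obstacle) is maintaining the invariant that the computed families are genuinely representative of the color-sets of \emph{paths} rather than arbitrary walks, which is exactly what the color-distinctness of the tracked sets guarantees.
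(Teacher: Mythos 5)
Your proof is correct and follows essentially the same approach as the paper: dynamic programming indexed by path length, maintaining for each vertex a $q$-representative family of color sets, with the transition realized as a union over neighbours of convolutions with singleton color sets, pruned via Theorem~\ref{thm:qrep} and justified by Lemmas~\ref{lem:qrepunion}, \ref{lem:qrepconv}, and~\ref{lem:qreptrans}. The only difference is a slightly less tuned choice of the parameter $x$ (the paper optimizes $x=p/(p+2q)$ to obtain a concrete $2.619^{\kappa+o(\kappa)}$ bound), which does not affect the claimed $2^{\bigoh(\kappa)}$ bound.
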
}
\short{
	\begin{proof}[Proof Sketch]
	Let $\mathcal{I}=(G,s,t,\chi,\kappa)$ be an instance of \textsc{Colorful Short Path}. Let $\mathcal P_u^p$ be the family of sets defined as follows: for $X\subseteq [|V(G)|]$, we have $X\in \mathcal P_{u}^p$ if and only if there is a colorful $s$-$u$ path $W$ in $G$ of length $p$ such that all colors in $X$ are used along $W$. Clearly, $\mathcal P_t^z\neq \emptyset$ for some $z\in [\kappa]$ if and only if $\mathcal{I}$ is a \texttt{yes}-instance. We now proceed by dynamic programming, where at each step $i\in [\kappa]$ we compute the sets $\mathcal P_v^i$ for each vertex $v$ of $G$. Whenever a computed set, say $Q$, becomes too large to effectively store and process by our fixed-parameter algorithm, we invoke the representative set machinery to replace $Q$ with a representative $Q'$ that can be used instead of $Q$ for our purposes.
	\end{proof}}
	
	\later{
	\begin{proof}
		Let $(G,s,t,\chi,\kappa)$ be an instance of \textsc{Colorful Short Path} with an $n$-vertex graph $G$.
		We consider the uniform matroid of rank at most $p + q$ with universe $V(G) \cup [n]$.
		Let $\mathcal P_u^p$ be the family of sets defined as follows. For $X\subseteq [n]$, we have $X\in \mathcal P_{u}^p$ if and only if there is a colorful $s$-$u$ path $W$ in $G$ of length $p$ such that all colors in $X$ are used along $W$.
		
		Clearly, if $\mathcal P_t^\kappa$ is non-empty there exists a colorful path of length at most $\kappa$ that starts at $s$ and ends at $t$.
		We will check this property by computing a family $\widehat{\mathcal P}_t^{\kappa,0} \xrep{0} \mathcal{P}_t^z$ and
		then test if $\widehat{\mathcal P}_t^{\kappa,0}$ is non-empty.
		This is sufficient for the following reason.
		Suppose $\mathcal P_t^\kappa$ is not empty.
		Consequently it contains a set that fits with the empty set $\emptyset$,
		but then $\widehat{\mathcal P}_t^{\kappa,0} \xrep{0} \mathcal{P}_t^\kappa$ also must contain such a set that fits with $\emptyset$ as 
		it is a $0$-representative set of $\mathcal P_t^\kappa$.
		
		To compute the representative sets we cannot just invoke Theorem~\ref{thm:qrep} as
		it would require us to enumerate all members of $\mathcal P_t^\kappa$.
		Instead, we are going to construct the representative sets for every vertex in an iterative fashion, with the aim of obtaining $\widehat{\mathcal P}_t^{\kappa,0}$ without directly computing $\mathcal P_t^\kappa$.
		Note that we will compute a representative set only for every second value of $p$,
		because we will always be adding a color and a vertex simultaneously in one step.
		In the beginning let us set
		$\widehat{\mathcal P}_s^0 = \mathcal P_s^0 = \{\{\chi(s)\}\}$ and
		$\widehat{\mathcal P}_u^0 = \mathcal P_u^0 = \emptyset$ for every $u \in V(G) \setminus \{s\}$.
		We proceed in rounds, iterating over the values in $\{1,\ldots,\kappa\}$ in increasing order.
		At the start of each loop we will maintain the invariant that for every $u \in V(G)$, $1 \leq p'\ \leq p$, and $q \leq \kappa - p'$
		we have computed a family $\widehat{\mathcal P}_u^{p',q}$ of size at most 
		\begin{align*}
			\left(\frac{p'+2q}{p'}\right)^{p'}\left(\frac{p'+2q}{2q}\right)^{q} \cdot 2^{o(p' + q)}
		\end{align*}
		 that $q$-represents $\mathcal P_u^{i}$.
		
		Now, consider the $p$-th iteration of our loop.
		We compute a new family $\widehat{\mathcal P}_u^{p,q}$ for each $v \in V(G)\setminus\{s\}$ and $q \leq \kappa - p$ as follows
		\begin{align*}
			\widetilde{\mathcal P}_v^{p, q} = \bigcup_{uv \in E(G)} \widehat{\mathcal P}_u^{p - 1, q + 1} * \{\{\chi(v)\}\}.
		\end{align*}
%
		Together, Lemma~\ref{lem:qrepunion} and~\ref{lem:qrepconv} imply that 
		$\widetilde{\mathcal{P}}_v^{p,q}$ $q$-represents $\mathcal P_v^p$.
		At this point we invoke Theorem~\ref{thm:qrep} to compute $\widehat{\mathcal P}_v^{p,q} \qrep \widetilde{\mathcal{P}}_v^{p,q}$.
		By Lemma~\ref{lem:qreptrans} it follows that $\widehat{\mathcal P}_v^{p,q}$ in fact $q$-represents $\mathcal P_v^p$.
		Finally, check if $\widehat{\mathcal P}_t^{p,q}$ is non-empty for any $q \leq \kappa - p$ if so 
		we have found a colorful path from $s$ to $t$ of length $p$.
		Backtracking our decisions, the discovered set in $\widehat{\mathcal P}_t^{p,q}$ 
		can be turned into a colorful path from $s$ to $t$ of length $p \leq \kappa$.
		
%
%
%
		It remains to compute the running time and space of our algorithm and show that we in fact can retrieve the paths in the same time.
		We apply Theorem~\ref{thm:qrep} with $x_{p.q} = \frac{p}{p+2q}$.
		With $i \leq p$ and following the same analysis as Cygan et al.~\cite[Lemma~12.33]{CyganFKLMPPS15} we obtain that 
		the sizes of the computed families $\widehat{\mathcal P}_u^{p,q}$ 
		for every $u \in V(G)$, $2 \leq p \leq \kappa$, and $q \leq \kappa - p$ are bounded by
		\begin{align*}
			\left(\frac{p+2q}{p}\right)^p\left(\frac{p+2q}{2q}\right)^{q} \cdot 2^{o(p + q)}
		\end{align*}
		and that the running time is given by the maximum of the function
		\begin{align*}
			f(p,q) = \left(\frac{p+2q}{p}\right)^p\left(\frac{p+2q}{2q}\right)^{2q}.
		\end{align*}
		times $n^{O(1)}$ and $2^{o(p+q)}$.
		Maximizing $f$ over the domain $1 \leq p \leq \kappa$ and $0\leq q \leq \kappa-p$ gives us that the maximum is attained for
		$p = (1 - \frac{1}{\sqrt 5})$ and $q = \kappa - p$.
		Evaluating $f$ for these values leads to a runtime that is upper bounded by $2.619^{\kappa + o(\kappa)} \cdot |E(G)|\log|V(G)|$.
		
		Computing the path from a set $S \in \widehat{\mathcal P}_t^z$ can be done by keeping the families
		in order of when they are computed.
		For example we could store them in a matrix of size $\kappa \times n$ with each row 
		correlating to a step and the columns to the vertices in $V(G)$.
		Backtracking our decisions we can obtain the colorful path of length at most $\kappa$.
	\end{proof}}

	\both{Theorem~\ref{thm:addoneedge}
	is an immediate consequence of Proposition~\ref{prop:oneedgetopaths} 
	together with Theorem~\ref{thm:colorfulpath}.
	
	\begin{theorem}\label{thm:addoneedge}
		\SConeEI can be solved in time $\bigoh(2^{\bigoh(\ell)}\cdot |\drawing|\log |E(G)|)$.
	\end{theorem}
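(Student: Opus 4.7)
The plan is to obtain Theorem~\ref{thm:addoneedge} as a direct composition of the two preceding results. Given an instance $(G,\drawing,\{st\},\ell)$ of \SConeEI, I first invoke Proposition~\ref{prop:oneedgetopaths} to produce, in linear time, an equivalent instance $(G^*,\chi,s,t,2\ell+3)$ of \textsc{Colorful Short Path}. I then hand this instance to the algorithm of Theorem~\ref{thm:colorfulpath} with $\kappa = 2\ell+3$, and return its answer. Correctness is immediate from the equivalence in Proposition~\ref{prop:oneedgetopaths}, so all the work is in tracking the running time and the size parameters of the intermediate graph $G^*$.

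For the time bound, I would verify the two size parameters of $G^*$ separately. Since $G^*$ is built by taking the plane dual of $\planarization{\drawing}$, subdividing each of its edges once, and adding the two marker vertices $s,t$ together with their incidences to cell vertices, we have $|V(G^*)|, |E(G^*)| \in \bigoh(|\pdrawing|) = \bigoh(|\drawing|)$. Plugging into Theorem~\ref{thm:colorfulpath} with $\kappa = 2\ell+3$ yields
\[
	\bigoh\!\bigl(2^{\bigoh(2\ell+3)}\cdot |E(G^*)|\log |V(G^*)|\bigr)
	\;=\; \bigoh\!\bigl(2^{\bigoh(\ell)}\cdot |\drawing|\log|\drawing|\bigr).
\]
To finally reach the stated form $\bigoh(2^{\bigoh(\ell)}\cdot |\drawing|\log|E(G)|)$, I note that since \drawing is a simple drawing of $G$, the size $|\drawing|$ (the combinatorial complexity of $\pdrawing$) is polynomially bounded in $|E(G)|$, so $\log|\drawing| = \bigoh(\log|E(G)|)$ and the bound matches exactly.

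There is no real obstacle here: the theorem is essentially a bookkeeping statement combining the linear-time reduction of Proposition~\ref{prop:oneedgetopaths} with the single-exponential algorithm of Theorem~\ref{thm:colorfulpath}. The only point that warrants explicit mention in the write-up is the translation of logarithmic factors between $|\drawing|$, $|V(G^*)|$ and $|E(G)|$ described above, so that the final runtime is phrased in terms of the original \SConeEI input as claimed.
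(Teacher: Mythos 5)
Your proposal is correct and takes essentially the same approach as the paper, which states the theorem as an immediate consequence of Proposition~\ref{prop:oneedgetopaths} together with Theorem~\ref{thm:colorfulpath}. Your additional bookkeeping---bounding $|V(G^*)|, |E(G^*)|$ by $\bigoh(|\drawing|)$ and using simplicity to get $\log|\drawing| = \bigoh(\log|E(G)|)$---is exactly the omitted detail and is correct.
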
}
	
	\section{Conclusion}
		In this paper we established the fixed-parameter tractability of inserting a given set of edges into a given drawing while maintaining simplicity and adhering to various restrictions on the number of crossings in the solution.
	While the presented results make the reasonable assumption that the initial drawing is connected, the problem is of course also interesting in the general case. We believe that our framework and methodology can also be used to handle the extension problem for disconnected drawings, albeit only after overcoming a few additional technical challenges; moreover, the algorithm presented in Section~\ref{sec:onedge} does not require connectivity at all.
	Other than connectivty, the most glaring question left open concerns the complexity of \SCEI\ parameterized by $\ell$ alone. 
	Last but not least, while here we focused on the edge insertion problem, it would also be interesting to extend the scope to 
	also allow for the addition of vertices into the drawing.

	\bibliographystyle{plainurl}
	\bibliography{add_few_edges}

	\clearpage
	\appendix
	
	\ifbool{long}{}{\magicappendix}
\end{document}